\documentclass[manuscript,screen]{acmart}

\AtBeginDocument{%
  }

\setcopyright{acmlicensed}
\copyrightyear{2018}
\acmYear{2018}
\acmDOI{XXXXXXX.XXXXXXX}

\acmConference[Conference acronym 'XX]{Make sure to enter the correct
  conference title from your rights confirmation emai}{June 03--05,
  2018}{Woodstock, NY}
\acmISBN{978-1-4503-XXXX-X/18/06}

\usepackage{amsmath}
\usepackage{proof} 
\usepackage{booktabs} 
\usepackage{mathtools} 
\usepackage{tikz} 
\usetikzlibrary{automata}
\usetikzlibrary{shapes}
\usetikzlibrary{arrows,calc,fit}
\usetikzlibrary{snakes}
\usepackage{color,soul} 
\setul{0.5ex}{0.15ex}
\setulcolor{red}
\usepackage{comment} 
\usepackage{hyperref} 
\usepackage{enumerate} 

    \newcommand{\rNum}[1]{\expandafter{\romannumeral #1\relax}}
    \newcommand{\rNUM}[1]{\uppercase\expandafter{\romannumeral #1\relax}}

    \newcommand{\bff}[1]{\mathbf{#1}}
    \newcommand{\mbb}[1]{\mathbb{#1}}
    \newcommand{\mcl}[1]{\mathcal{#1}}
    \newcommand{\spc}[1]{\begin{spacing}{#1}}
    \newcommand{\spce}{\end{spacing}}
    
    \newcommand{\la}[0]{\langle}
    \newcommand{\ra}[0]{\rangle}
    \newcommand{\mfr}[1]{\mathfrak{#1}}

    \newcommand{\Conf}[0]{\bff{L}}
    \newcommand{\Prop}[0]{\bff{Prop}}
    \newcommand{\Prog}[0]{\bff{P}}

    \newcommand{\DLp}[0]{\mbox{DL$\mfr{p}$}}
    \newcommand{\abort}[0]{\uparrow}
    \newcommand{\ter}[0]{\downarrow}
    \newcommand{\true}[0]{\mathit{true}}
    \newcommand{\false}[0]{\mathit{false}}
    \newcommand{\Fmla}[0]{\bff{F}}
    
    \newcommand{\Sem}[1]{[| #1 |]}
    \newcommand{\cnt}[0]{\mathit{C}}

    \newcommand{\trans}[0]{\longrightarrow}

    \newcommand{\GDL}[0]{\textit{GDL}}

    \newcommand{\suf}[0]{\preceq_s}
    \newcommand{\psuf}[0]{\prec_s}

    \newcommand{\sufeq}[0]{\approx_s}
    \newcommand{\mult}[0]{\preceq_m}
    \newcommand{\pmult}[0]{\prec_m}
    \newcommand{\multr}[0]{\succeq_m}
    \newcommand{\pmultr}[0]{\succ_m}

    \newcommand{\TA}[0]{\bff{TA}}
    \newcommand{\app}[0]{\mfr{I}}
    \newcommand{\free}[0]{\textit{free}}
    
    \newcommand{\DLF}[0]{\mfr{F}_{dlp}}

    \newcommand{\cfeq}[0]{=_{\textit{cf}}}
    \newcommand{\Eval}[0]{\bff{Eval}}

    \newcommand{\pfDLp}[0]{{Pr_\textit{dlp}}}

    \newcommand{\Oper}[0]{\Lambda}

    \newcommand{\FV}[0]{\textit{FV}}
    
    \newcommand{\CT}[0]{\textit{CT}}
    
    \newcommand{\termi}[0]{\Downarrow}
    \newcommand{\Terminate}[0]{\Omega}
    \newcommand{\Domain}[1]{{(\TA_{#1}, \Eval_{#1}, \app_{#1})}}
    \newcommand{\Extra}[0]{E}

    \newcommand{\WP}[0]{{\textit{WP}}}
    \newcommand{\E}[0]{E}
    \newcommand{\fodl}[0]{\textit{fodl}}
    \newcommand{\se}[0]{\leadsto}
    \newcommand{\sepc}[0]{{*}}
    \newcommand{\sepi}[0]{{-\!*}}
    
    \newcommand{\partto}[0]{\rightarrowtail}
    \newcommand{\Alloc}[0]{\bff{cons}}
    \newcommand{\disAlloc}[0]{\bff{disp}}
    \newcommand{\dom}[0]{\textit{dom}}
    \DeclareMathOperator{\disj}{\bot}
    \newcommand{\allocto}[0]{\dashrightarrow}
    \newcommand{\Sep}[0]{{\textit{SP}}}
    \newcommand{\std}[0]{\textit{std}}

    \newcommand{\Sub}[0]{{\textit{Sub}}}

    \newcommand{\LDL}[0]{\mbox{DL$\mfr{p}$}}
    \newcommand{\Kr}[0]{\mcl{K}}
    \newcommand{\Krp}[0]{M}
    \newcommand{\I}[0]{\mcl{I}}
    \newcommand{\lm}[0]{\mfr{m}}
    
    \newcommand{\EX}[0]{\textit{EX}}
    
    \newcommand{\mex}[0]{\textit{mex}}
    
    \newcommand{\PLK}[0]{PLK}
    \newcommand{\LM}[0]{\bff{M}}
    \newcommand{\DLpF}[0]{\mfr{F}_{\textit{ldlp}}}
    \newcommand{\PT}[0]{\mfr{F}_{\textit{pt}}}
    \newcommand{\PTer}[0]{\mfr{F}_{\textit{ter}}}
    \newcommand{\pfPT}[0]{Pr_{\textit{pt}}}
    \newcommand{\pfPTer}[0]{Pr_{\textit{ter}}}
    \newcommand{\pfLDLp}[0]{Pr_{\textit{ldlp}}}
    
    \newcommand{\pfOper}[0]{\textbf{Pr}_{\textit{op}}}
    \newcommand{\Wd}[0]{\mcl{S}}
    \DeclareMathOperator{\seq}{;}
    \DeclareMathOperator{\cho}{\cup}
    \newcommand{\lup}[0]{*}
    \newcommand{\upd}[0]{\mcl{U}}
    \newcommand{\pdl}[0]{\textit{pdl}}
    \newcommand{\afo}[0]{\textit{afo}}
    \newcommand{\DLpWP}[0]{\mbox{DL$\mfr{p}$-WP}}
    \newcommand{\rel}[0]{R}
    \newcommand{\DLpFODL}[0]{\mbox{DL$\mfr{p}$-FODL}}
    \newcommand{\DLpPL}[0]{\mbox{DL$\mfr{p}$-PL}}
    \newcommand{\DLpSP}[0]{\mbox{DL$\mfr{p}$-SP}}
    \DeclareMathOperator{\f}{\bff{f}}
    \DeclareMathOperator{\Suf}{\bff{suf}}
    \DeclareMathOperator{\n}{\bff{n}}
    \newcommand{\pl}[0]{\textit{pl}}
    
    \newcommand{\dddef}[0]{=_{df}}

    \newcommand{\Var}[0]{\mathit{Var}}
    \newcommand{\Etrap}[0]{\textit{trap}}
    \newcommand{\Eloop}[0]{\textit{loop}}
    \newcommand{\Eend}[0]{\textit{end}}
    \newcommand{\Eif}[0]{\textit{if}}
    \newcommand{\Ethen}[0]{\textit{then}}
    
    \newcommand{\Eemit}[0]{\textit{emit}}
    \newcommand{\Eexit}[0]{\textit{exit}}
    
    \newcommand{\Epause}[0]{\textit{pause}}
    \DeclareMathOperator{\Split}{|}
    
    \newcommand{\Wwhile}[0]{\textit{while}}
    \newcommand{\Wdo}[0]{\textit{do}}
    \newcommand{\Wif}[0]{\textit{if}}
    \newcommand{\Wthen}[0]{\textit{then}}
    \newcommand{\Welse}[0]{\textit{else}}
    \newcommand{\Wend}[0]{\textit{end}}

\begin{document}

\title{On A Parameterized Theory of Dynamic Logic for Operationally-based Programs}

\author{Yuanrui Zhang}
\email{yuanruizhang@nuaa.edu.cn}
\orcid{0000-0002-0685-6905}
\affiliation{%
  \institution{College of Software, Nanjing University of Aeronautics and Astronautics}
  \city{Nanjing}
  \state{Jiangsu}
  \country{China}
}


\renewcommand{\shortauthors}{Y. Zhang}

\begin{abstract}
Applying dynamic logics to program verifications is a challenge, because their axiomatic rules for regular expressions can be difficult to be adapted to different program models.    
We present a novel dynamic logic, called $\DLp$, which supports reasoning based on programs' operational semantics. 
For those programs whose transitional behaviours are their standard or natural semantics, $\DLp$ makes their verifications easier since one can directly apply the program transitions for reasoning, without the need of re-designing and validating new rules as in most other dynamic logics. 
$\DLp$ is parametric.   
It provides a model-independent framework consisting of a relatively small set of inference rules, which depends on a given set of trustworthy rules for the operational semantics. 
These features of $\DLp$ let multiple models easily compared in its framework and makes it compatible with existing dynamic-logic theories. 
$\DLp$ supports cyclic reasoning, providing an incremental derivation process for recursive programs, making it more convenient to reason about without prior program transformations. 
We analyze and prove the soundness and completeness of $\DLp$ under certain conditions. 
Several case studies illustrate the features of $\DLp$ and fully demonstrate its potential usage. 
\end{abstract}


\begin{CCSXML}
<ccs2012>
   <concept>
       <concept_id>10003752.10003790.10003793</concept_id>
       <concept_desc>Theory of computation~Modal and temporal logics</concept_desc>
       <concept_significance>500</concept_significance>
       </concept>
   <concept>
       <concept_id>10003752.10003790.10011741</concept_id>
       <concept_desc>Theory of computation~Hoare logic</concept_desc>
       <concept_significance>500</concept_significance>
       </concept>
   <concept>
       <concept_id>10003752.10003790.10002990</concept_id>
       <concept_desc>Theory of computation~Logic and verification</concept_desc>
       <concept_significance>300</concept_significance>
       </concept>
   <concept>
       <concept_id>10003752.10003790.10003792</concept_id>
       <concept_desc>Theory of computation~Proof theory</concept_desc>
       <concept_significance>300</concept_significance>
       </concept>
 </ccs2012>
\end{CCSXML}

\ccsdesc[500]{Theory of computation~Modal and temporal logics}
\ccsdesc[500]{Theory of computation~Hoare logic}
\ccsdesc[300]{Theory of computation~Logic and verification}
\ccsdesc[300]{Theory of computation~Proof theory}

\ifx Old one
\begin{CCSXML}
<ccs2012>
   <concept>
       <concept_id>10003752.10010124.10010138.10010142</concept_id>
       <concept_desc>Theory of computation~Program verification</concept_desc>
       <concept_significance>500</concept_significance>
       </concept>
   <concept>
       <concept_id>10003752.10003790.10003793</concept_id>
       <concept_desc>Theory of computation~Modal and temporal logics</concept_desc>
       <concept_significance>500</concept_significance>
       </concept>
   <concept>
       <concept_id>10003752.10003790.10002990</concept_id>
       <concept_desc>Theory of computation~Logic and verification</concept_desc>
       <concept_significance>500</concept_significance>
       </concept>
   <concept>
       <concept_id>10003752.10003790.10003792</concept_id>
       <concept_desc>Theory of computation~Proof theory</concept_desc>
       <concept_significance>500</concept_significance>
       </concept>
 </ccs2012>
\end{CCSXML}

\ccsdesc[500]{Theory of computation~Program verification}
\ccsdesc[500]{Theory of computation~Modal and temporal logics}
\ccsdesc[500]{Theory of computation~Logic and verification}
\ccsdesc[500]{Theory of computation~Proof theory}
\fi

\keywords{Dynamic Logic, Program Deduction, Verification Framework, Cyclic Proof, Operational Semantics, Symbolic Execution}


\maketitle

\section{Introduction}
\label{section:Summery}

Dynamic logic~\cite{Harel00} has proven to be a valuable program logic for specifying and reasoning about different types of programs. 
As a ``multi-modal'' logic that integrates both programs and formulas into a single form, it is more expressive than traditional Hoare logic~\cite{Hoare63} (cf.~\cite{Ahrendt2025-rh}). 
Dynamic logic has been successfully applied to domains such as process algebras~\cite{Benevides10}, programming languages~\cite{Beckert2016}, synchronous systems~\cite{Zhang21,Zhang22}, hybrid systems~\cite{Platzer07b,Platzer18} and probabilistic systems~\cite{Pardo22,kozen85}. 
These theories have inspired the development of related verification tools for safety-critical systems, 
such as KIV~\cite{Reif95}, KeY~\cite{Rustan07}, KeYmaera~\cite{Platzer08}, and the tools developed in~\cite{Zhang21,Minh24}. 
Recent work in dynamic logic features a variety of extensions designed to tackle modern problems, including guaranteeing the correctness of blockchain protocols~\cite{Icard24}, formalizing hyper-properties~\cite{Gutsfeld20}, and verifying quantum computations~\cite{Takagi24,Minh24}.
It also has attracted attention as a promising framework for incorrectness reasoning, as explored recently in~\cite{Hearn19,Zilberstain23}.

The theories of most dynamic logics, as well as Hoare-style logics, are built up based on the denotational semantics of programs: The behaviour of a program is interpreted as a mathematical object (e.g. a set of traces), and a set of inference rules are constructed to match this object.  
Adapting these theories to other programs can be difficult. 
This is true especially for the programming languages such as Java, C and Esterel~\cite{Berry85}, whose semantics is very complex. 
Building logic theories for them requires carefully designing a large set of rules specific in their program domains. 
Moreover, these rules are often error prone, thus requiring validation of their soundness (and even completeness), which can also be costly. 
For example, in KeY~\cite{Rustan07}, to apply first-order dynamic logic~\cite{Pratt76} (FODL) to the verification of Java programs, more than 500 inference rules are proposed for the primitives of Java (cf.~\cite{Murk2007-mi}). Their correctness is hard to be guaranteed. 
Another example is that the tool Verifiable C~\cite{Appel_Dockins_Hobor_Beringer_Dodds_Stewart_Blazy_Leroy_2014}
spends nearly 40,000 lines of Rocq code to define and validate its logic theory for C based on separation logic~\cite{Reynolds02}. 


Another main issue is that to reason about some types of programs, one has to first transform them into some sorts of ``standard forms'', in order to apply suitable axiomatic rules. 
These beforehand transformations are unnecessary and can be expensive. And they usually mean breaking the original program structures and thus can cause loss of program information. 
A typical example is imperative synchronous programming languages such as Esterel~\cite{Berry85} or Quartz~\cite{Schneider17}. 
In~\cite{Gesell12}, it shows that how a synchronous program must be transformed into a so-called ``STA program'' in order to apply the right Hoare-logic rules to it. 

\ifx
Another main issue is that for some languages, their denotational semantics are not compositional w.r.t. the language operators. 
For these languages, additional program transformations are required to design suitable axiomatic rules. 
These transformations must be performed beforehand, thus may break the original program structures and cause loss of information. 
A typical example is imperative synchronous programming languages such as Esterel~\cite{Berry85} or Quartz~\cite{Schneider17}. 
\cite{Gesell12} shows how a synchronous program must be transformed into a so-called ``STA program'' in order to apply the right Hoare-logic rules to it. 
\fi

Different from denotational semantics, structural operational semantics~\cite{Plotkin81} describes how a program is transitioned to another program under some configuration. 
It is the standard semantics for concurrent models, such as CCS~\cite{Milner82} and $\pi$-calculus~\cite{Milner92}. 
For executable programs such as those mentioned above, providing an operational semantics is straightforward, since their executions are intended to directly transform program configurations. 
For this reason, in most cases, the operational semantics can be trusted as given, without additional validations (cf., e.g., \cite{Ellison12,Bogdanas15}). 

\textbf{In this paper}, we propose a dynamic-logic-based theory aimed for easing the reasoning of those programs whose operational semantics is in their nature. 
We propose a so-called \emph{parameterized dynamic logic}, abbreviated as $\DLp$~\footnote{To avoid the conflict with the famous Propositional Dynamic Logic (PDL)~\cite{Fischer79}}. 
It supports a directly reasoning based on operational semantics. 
Unlike previous work such as~\cite{DLForCCS,Benevides10} which focus on particular calculi, our framework is parametric and can be adapted to arbitrary programs and formulas. 
We present a proof system for $\DLp$ based on a cyclic proof approach (cf.~\cite{Brotherston07}).  
It provides a set of ``kernel rules'' for deriving $\DLp$ formulas, accompanied with an assumed set of rules for programs' transitional behaviours. 
We study and prove the soundness and completeness of $\DLp$ under a general setting. 

Compared to the traditional approaches based on dynamic logics, $\DLp$ has several advantages: 
(1) For those programs for which the operational semantics is easy to obtain and can be trusted, it reduces the burden of the target-model adaptations and consistency validations for unreliable rules. 
Compared to the previous work (like~\cite{Beckert2016}), our set of ``kernel rules'' is very small. 
(2) Its parameterization of formulas provides a model-independent framework, in which different program theories can be easily embedded through a lifting process, and multiple models can be easily compared. 
(3) Its support of cyclic reasoning is a natural solution for infinite symbolic executions caused by recursive programs, which also allows an incremental derivation process by avoiding prior program transformations for certain types of ``non-standard'' program models.  

Previous work mostly related to ours 
have addressed this issue in different mathematical logics or theories (e.g.~\cite{Rosu12,Rosu13,Stefanescu14,X.Chen19,Moore18,X.Li21,Hennicker19,Acclavio2024-uv,Teuber2025-et}, see Section~\ref{section:Related Work} for a detailed comparison). 
Except a few~\cite{Hennicker19,Acclavio2024-uv,Teuber2025-et}, most of them has not yet concerned with an efficient logical calculus for deriving dynamic-logic formulas.   
To our best knowledge, $\DLp$ is the first dynamic logic to provide a cyclic verification framework for direct operationally-based reasoning of different programs. 

This paper is a non-trivial extension of the previous work~\cite{zhang2025parameterizeddynamiclogic}, from which we take an entire different method to build up the theory of $\DLp$ that is based on Kripke structures and is independent from explicit signatures. 
In this work, we further make a full analysis of the soundness and completeness of $\DLp$, as well as a much richer cases analysis for different features of $\DLp$.

\ifx
The methodology of reasoning based on operational semantics has been proposed and studied for years within different mathematical theories, among them the work based on rewriting logics~\cite{Rosu12,Rosu13,Stefanescu14,X.Chen19}, set theory and coinduction~\cite{Moore18,X.Li21}, and program updates\cite{Platzer07b,Beckert13,Beckert2016} are closest to our work (see Section~\ref{section:Related Work} for a detailed comparison). 
Except a few work such as~\cite{MossakowskiEA09,Hennicker19}, to the best of our knowledge, most of the previous work has not yet addressed a similar approach under dynamic-logic settings, i.e., to provide an efficient logical calculus for deriving dynamic formulas. 
In our opinion, it is valuable to fill in this gap. 
\fi

\section{An Overview}
\label{section:An Overview}

In dynamic logic, a \emph{dynamic formula} is of the form: $[\alpha]\phi$ (cf.~\cite{Harel00}), where $[\cdot]$ is a modal operator, $\alpha$ is a program model (or simply ``program''), $\phi$ is a logical formula. Intuitively, it means that after the terminations of all executions of program $\alpha$, formula $\phi$ holds. 
When $\alpha$ is deterministic, formula $\phi\to [\alpha]\psi$ exactly captures the partial correctness of the triple $\{\phi\}\alpha\{\psi\}$ in Hoare-style logics (e.g.~\cite{Hoare63,Reynolds02}). 
$\phi \to \la\alpha\ra\psi$ captures the total correctness of $\{\phi\}\alpha\{\psi\}$, with $\la\cdot \ra$ the dual modal operator of $[\cdot]$. $\la\alpha\ra\psi$ is defined such that $\la\alpha\ra\psi = \neg[\alpha]\neg\psi$, meaning that there exists an execution of $\alpha$ satisfying that it terminates and after its termination, formula $\psi$ holds.   
As a combination of both programs and formulas, a dynamic formula allows multiple and nested modalities in forms like $[\alpha]\phi\to \la\beta\ra\psi$, $[\alpha]\la\beta\ra\phi$, $[\alpha](\phi\to \la\beta\ra\psi)$, etc., making it strictly more expressive than Hoare logic (cf.~\cite{Ahrendt2025-rh}). 

The rest of this section gives an outline of the main work on $\DLp$, focusing on the main ideas illustrated through examples. They are introduced in details in the following sections of this paper. 

\ifx
Traditional dynamic logics, like PDL~\cite{Fischer79} \& FODL~\cite{Pratt76}, take regular programs as their program models (cf. Section~\ref{section:PDL FODL}). 
When applying them to new program models, new inference rules need to be checked if their soundness matches the denotational semantics of programs. 
For instance, in FODL, we have a rule for a sequential program $\alpha\seq\beta$ as: $$\begin{aligned}\infer[^{([\seq])}]{\Gamma\Rightarrow [\alpha\seq \beta]\phi, \Delta}{\Gamma\Rightarrow [\alpha][\beta]\phi}\end{aligned},$$ which means that to prove that program $\alpha\seq\beta$ satisfies $\phi$ after its executions, it is sufficient to prove that program $\alpha$ satisfies $[\beta]\phi$ after its executions. 
The soundness of this rule corresponds to the denotational semantics $[|\alpha\seq \beta|]$ of the program $\alpha\seq \beta$, which is defined as $[|\alpha\seq \beta|] = [|\alpha|]\circ [|\beta|]$ (cf.~Section~\ref{section:PDL FODL}). 

The consistency between the axiomatic semantics and denotational semantics of programs requires that the validity of an inference rule for a program is according to its denotational semantics. 
For example, in first-order dynamic logic (FODL)~\cite{Harel00}, we have a rule for a sequential program $\alpha\seq\beta$ as: $$\begin{aligned}\infer[^{([\seq])}]{\Gamma\Rightarrow [\alpha\seq \beta]\phi, \Delta}{\Gamma\Rightarrow [\alpha][\beta]\phi}\end{aligned},$$ which means that to prove that program $\alpha\seq\beta$ satisfies $\phi$ after its executions, it is sufficient to prove that program $\alpha$ satisfies $[\beta]\phi$ after its executions. 
The validity of this rule relies on the denotational semantics $[|\alpha\seq \beta|]$ of the program $\alpha\seq \beta$, which is defined as $[|\alpha\seq \beta|] = [|\alpha|]\circ [|\beta|]$, intuitively meaning that a trace of $\alpha\seq\beta$ consists of a trace of $\alpha$ concatenated (through the operator $\circ$) by a trace of $\beta$. 

[Deleted from the above: 

The behaviour of a program is interpreted as a mathematical object (e.g. a set of traces), and a set of inference rules are constructed to match this object. 

\emph{operational semantics} describes how a program $\alpha$ under a configuration $\sigma$, denoted by $(\alpha, \sigma)$, is transitioned to another program $\alpha'$ under a configuration $\sigma'$ (denoted by $(\alpha', \sigma')$). 
]
\fi

\subsection{Labeling and Parameterization of Dynamic Logic}
In order to directly reason about programs via their operational semantics, 
in $\DLp$, we introduce a ``label'' $\sigma$ (Definition~\ref{def:Labels and Label Mappings}) to capture explicit program structures as the current program configurations for symbolic executions. 
$\sigma$ attaches a dynamic formula $[\alpha]\phi$, yielding a labeled formula of the form $\sigma : [\alpha]\phi$. Intuitively, it means that under configuration $\sigma$, program $\alpha$ can be executed and formula $\phi$ holds after all terminating executions of $\alpha$. 
The introduction of labels allows us to derive labeled formulas $\sigma : [\alpha]\phi$ by the following program transitions: $$(\alpha, \sigma)\trans (\alpha', \sigma'),$$ 
using inference rules conceptually explained as the form:
$$
\begin{aligned}
    \infer[^{([\alpha])}]
    {\sigma : [\alpha]\phi}
    {\sigma' : [\alpha']\phi\mbox{, for all $(\alpha', \sigma')$ such that $(\alpha, \sigma)\trans (\alpha', \sigma')$}}
\end{aligned}.
$$
These rules (corresponding to the rules $([\alpha]R)$ and $([\alpha]L)$ in Table~\ref{table:General Rules for LDL}) reduce the deduction of $\sigma : [\alpha]\phi$ to the deductions of all successor formulas $\sigma' : [\alpha']\phi$ corresponding to the one-step program transitions.

Due to the universal form of program transitions, this framework applies for arbitrary program models and configurations. 
Consequently, in a labeled formula $\sigma : [\alpha]\phi$ of $\DLp$, we parameterize the program $\alpha$, the logical formula $\phi$ and the label $\sigma$, to allow them to have any algebraic structures. 
$\sigma : [\alpha]\phi$ turns out to be a more general form than $[\alpha]\phi$. 
When $\sigma$ is ``free'' (cf.~Definition~\ref{def:free configurations}) w.r.t. $[\alpha]\phi$, $\sigma : [\alpha]\phi$ has the same meaning as $[\alpha]\phi$.  

Consider a formula $\phi_1 \dddef (x\ge 0\to [x := x + 1]x > 0)$ in FODL~\cite{Pratt76}, 
where $x$ is a variable ranging over integers $\mbb{Z}$.  
Intuitively, formula $\phi_1$ means that if $x\ge 0$ holds, then $x>0$ holds after assigning the expression $x + 1$ to $x$. 
In FODL, to derive $\phi_1$, 
we apply the assignment rule: 
$$
\begin{aligned}
\label{equ:assignRule}
\infer[^{(x := e)}]
{[x := e]\phi}
{\phi[e / x]}
\end{aligned}
$$
on the part $[x:=x+1]x > 0$. 
It substitutes $x$ of $x > 0$ with $x + 1$, yielding expression $x + 1 > 0$. 
After the derivation we obtain formula $\phi'_1\dddef (x\ge 0 \to x + 1 > 0)$, which is true for any $x\in \mbb{Z}$. 

In $\DLp$, on the other hand, we can express $\phi_1$ as an equivalent labeled formula:  $\psi_1\dddef (t\ge 0\to \{x\mapsto t\} : [x := x + 1]x > 0)$, where the label $\{x\mapsto t\}$ means that variable $x$ stores value $t$ (with $t$ a fresh variable). 
With the program configurations explicitly showing up, to derive formula $\psi_1$, we instead directly apply the above rule $([\alpha])$ on the part $\{x\mapsto t\} : [x:=x+1]x>0$ according to the program transition
$$(x:=x+1, \{x\mapsto t\})\trans (\ter, \{x\mapsto t + 1\}),\ \ \ \ \ (op\ x:=e)$$
which assigns the value $t+1$ to $x$ afterwards. Here $\ter$ indicates a program termination (cf. Definition~\ref{def:Progarms and Formulas}). 
After the derivation, we obtain the formula $\psi'_1\dddef (t\ge 0\to \{x\mapsto t + 1\} : x > 0)$, 
where formula $\{x\mapsto t + 1\} : x > 0$ exactly means $t + 1 > 0$ if we replace $x$ with its current value $t+1$ in formula $x > 0$. 
So from $\psi'_1$, we obtain formula $t\ge 0\to t + 1 > 0$, which is exactly formula $\phi'_1$ (modulo free-variable renaming). 

\ifx No saying
Reasoning programs whose natural semantics is operational in $\DLp$ does not require the design of extra inference rules or program transformations. 
For a sequence Quartz program $\alpha\seq\beta$ we have seen, 
we can directly reason about it based on a program transition $(\alpha\seq \beta, \sigma)\trans(\alpha'\seq \beta, \sigma')$ for some configurations $\sigma, \sigma'$ and program $\alpha'$, despite the fact that $\alpha\seq \beta$ may not be in an STA form. 
\fi

From this example, we see that the above rule $([\alpha])$ can be directly applied to other languages (by just choosing a different set of program transitions) while rule $(op\ x:=e)$ may not. It cannot be applied to, e.g., a Java statement $x := new\ C(...)$, which creates a new object of class $C$ (cf.~\cite{Beckert2016}). 
Throughout this paper (from Example~\ref{example:While program} - \ref{example:Proof system for program behaviours}, in Section~\ref{section:Case Study} and~\ref{section:Instantiation of FODL in DLp}), 
we show that how $\DLp$ can be adapted to different theories of programs through two instantiations of $\DLp$: $\DLpWP$ and $\DLpFODL$. 
Section~\ref{section:Instantiation of FODL in DLp} also displays the capability of $\DLp$ to derive multiple program models in a single framework. 

Section~\ref{section:Encoding of Complex Configurations} and Appendix~\ref{section:An Encoding of Separation Logic in DLp} further give two instantiations: $\DLpPL$ and $\DLpSP$ separately to show that in $\DLp$ not only static properties (i.e. the properties holding on a state) can be expressed, but also more complex properties, like temporal properties and spatial properties. 

The entire process of labeling and parameterization is fully introduced in Section~\ref{section:Dynamic Logic LDL}. In Section~\ref{section:A Proof System for LDL}, a proof system $\pfDLp$ for $\DLp$ is built.

\subsection{Lifting Process and Compatibility of $\DLp$}
As a dynamic logic extended with the extra structure labels, 
$\DLp$ is compatible with the existing theories of dynamic logics in the sense that every inference rule for non-labeled dynamic formulas can be lifted as a rule for their labeled counterparts in $\DLp$. 
Section~\ref{section:Lifting Process From Program Domains} discusses this technique in detail, where
we introduce a notion called ``free labels'' (Definition~\ref{def:free configurations}), and show that attaching a free label to a formula does not affect the validity of this formula (Theorem~\ref{prop:lifting process 2}). 

For instance, from the rule $(op\ x:=e)$ above, one can obtain a sound lifted rule by attaching to each formula the label $\{x\mapsto t\}$: 
$$
\infer[^{(\textit{lf}\ (x:=e))}]
{\{x\mapsto t\} : [x:=e]\phi}
{\{x\mapsto t\} : \phi[x/e]}. 
$$
$\{x\mapsto t\}$ is free as $t$ is a fresh variable. Trivially, replacing any free occurrence of variable $x$ with variable $t$ in the formulas $[x:=e]\phi$ and $\phi[x/e]$ does not change their meanings.   
From the formula $\psi_1$ above, by applying the rule $(\textit{lf}\ (x:=e))$ on the part $\{x\mapsto t\} : [x:=x+1]x>0$, we obtain the formula 
$\psi''_1 \dddef t\ge 0\to \{x\mapsto t\} : x + 1 > 0$. 
It is just $\psi'_1$ we have seen above if we replace $x$ of $x + 1 > 0$ with its current value $t$. 

Lifting process provides a type of flexibility by directly making use of the rules special in different domains. 
In Section~\ref{section:Lifting Process in While Programs}, we illustrate in detail how this technique can be beneficial during derivations.

\subsection{Cyclic Reasoning of $\DLp$ Formulas}
In an ordinary deductive procedure we usually expect a finite proof tree. 
However, in the proof system $\pfDLp$ of $\DLp$, a branch of a proof tree does not always terminate, because the process of symbolically executing a program via rule $([\alpha]R)$ or/and rule $([\alpha]L)$ might not stop. 
This is well-known when a program has an explicit/implicit loop structure that may run infinitely. 
For example, in the instantiated theory $\DLpWP$ of $\DLp$ (cf. Example~\ref{example:While program} - \ref{example:Proof system for program behaviours}), a while program $$W \dddef \textit{while}\ \textit{true}\ \textit{do}\ x := x + 1\ \textit{end}$$ proceeds infinitely as the following program transitions:
$$(W, \{x \mapsto 0\})\trans (W, \{x\mapsto 1\})\trans ....$$ 
This yields the following infinite derivation branch in $\DLp$ when deriving, for example, a formula $\{x\mapsto 1\} : [W]\phi$: 
$$
\infer[^{([\alpha])}]
{\{x\mapsto 1\} : [W]\phi}
{
    \infer[^{([\alpha])}]
    {\{x\mapsto 2\} : [W]\phi}
    {
        \infer*[]
        {...}
        {
            \infer[]
            {...}
            {
                \infer[]
                {\{x\mapsto n\} : [W]\phi}
                {...}
            }
        }
    }
}. 
$$

To solve this problem, we propose a cyclic proof system for $\DLp$ (Section~\ref{section:Construction of A Cyclic Preproof Structure}). 
Cyclic proof approach (cf.~\cite{Brotherston07}) is a technique to admit a certain type of infinite deductions, called ``cyclic proofs'' (cf.~Section~\ref{section:Proof Preproof Cyclic Preproof}). 
A cyclic proof is a finite proof tree augmented with some non-terminating leaf nodes, called ``buds'', which are identical to some of their ancestors. Call a bud and one of its identical ancestors a ``back-link''.  

We propose a cyclic derivation approach special for $\DLp$. This mainly consists of the following two steps. 

In the first step, we construct a cyclic structure by identifying suitable buds and back-links, where the most critical work is to design the substitution rule $(\Sub)$ of labels (Definition~\ref{def:Substitution of Labels}). 
For example, by performing the substitution rule $(\textit{Sub})$ given in~Section~\ref{section:Case Study} on the labels $\{x\mapsto 1\}$ and $\{x\mapsto t+1\}$, we can obtain a cyclic derivation for the formula $\{x\mapsto 1\} : [W]\phi$ on the left below: 
$$
\begin{aligned}
    \infer[^{(\Sub)}]
    {\{x\mapsto 1\} : [W]\phi}
    {
        \infer[^{([\alpha])}]
        {1:\ \{x\mapsto t\} : [W]\phi}
        {
            \infer[^{(\Sub)}]
            {\{x\mapsto t + 1\} : [W]\phi\}}
            {2:\ \{x\mapsto t\} : [W]\phi}
        }
    }
\end{aligned}, \ \ \ 
\begin{aligned}
    \infer[^{(\Sub)}]
    {\{x\mapsto 1\} : \la W\ra\phi}
    {
        \infer[^{([\alpha])}]
        {\{x\mapsto t\} : \la W\ra\phi}
        {
            \infer[^{(\Sub)}]
            {\{x\mapsto t + 1\} : \la W\ra\phi\}}
            {\{x\mapsto t\} : \la W\ra\phi}
        }
    }
\end{aligned}
$$
where node 2 is a bud and it back-links to node 1. 
$t$ is a fresh variable w.r.t. $x$, $W$ and $\phi$. 
The label $\{x\mapsto 1\}$ equals to $\{x\mapsto t\}[1/t]$ (i.e. the label obtained by substituting $t$ with $1$) and the label $\{x\mapsto t + 1\}$ equals to $\{x\mapsto t\}[t + 1/t]$ (i.e. the label obtained by substituting $t$ with expression $t+1$). 

However, not all cyclic structures are cyclic proofs. 
Consider the cyclic derivation on the above right for formula $\{x\mapsto 1\} : \la W\ra\phi$. 
According to the semantics of modality $\la\cdot \ra$ (cf.~Section~\ref{section:Syntax and Semantics of DLp Formulas}), 
$\{x\mapsto 1\} : \la W\ra\phi$ is invalid for any formula $\phi$ because $W$ never terminates. 
Therefore, in the second step, we need to check whether these cyclic structures are legal cyclic proofs, where the key step is to define suitable ``progressive derivation traces'' special for system $\pfDLp$ (Definition~\ref{def:Progressive Step/Progressive Derivation Trace}). 

We give two examples in Section~\ref{section:Case Study} and Appendix~\ref{section:Example Two: A Synchronous Loop Program} respectively to show how cyclic reasoning in $\DLp$ can be carried out and  
how we can benefit from the incremental reasoning of recursive programs by cyclic graphs. 
Especially, the example given in Appendix~\ref{section:Example Two: A Synchronous Loop Program} is an Esterel program, from which we can see that how cyclic reasoning based on operational semantics can prevent extra prior program transformations in synchronous languages. 



\ifx
: (1) constructing a cyclic structure by identifying suitable buds and back-links, where the most critical work is to design the substitution rule $(\Sub)$ of labels (Definition~\ref{def:Substitution of Labels}); And (2) checking whether these cyclic structures are legal cyclic proofs, where 

In Section~\ref{section:Construction of A Cyclic Preproof Structure}, we propose a cyclic condition for 
\fi

\ifx
Cyclic proof approach (cf.~\cite{Brotherston07}) is a powerful technique to admit a certain type of infinite deductions (i.e. ``cyclic proofs'') caused by symbolic executions of programs with loop structures (Section~\ref{section:Construction of A Cyclic Preproof Structure}). 
It has been attracting more and more attention and recently has been applied to many logic theories such as~\cite{Gadi20,Jones22,Afshari22}. 
We investigate this approach and adapt it to the theory of $\DLp$. 
The main challenges of this part are (1) identifying a sound cyclic proof system, where the most critical work is to design rule $(\Sub)$ and the ``substitutions of labels'' (Definition~\ref{def:Substitution of Labels}); and (2) constructing a cyclic proof structure, where the key step is to define ``progressive derivation traces'' (Definition~\ref{def:Progressive Step/Progressive Derivation Trace}). 
At last, as an important contribution, we prove the soundness of our cyclic proof system (Section~\ref{section:Proof of Theorem - theo:Soundness of A Cyclic Preproof}). 
\fi

\subsection{Soundness and Completeness of $\DLp$}
We analyze and prove the soundness and completeness of $\DLp$ w.r.t. arbitrary programs and formulas under certain restriction conditions  (Section~\ref{section:Proof of Theorem - theo:Soundness of A Cyclic Preproof}). 

The soundness of $\DLp$ states that a cyclic proof always leads to a valid conclusion. 
In Section~\ref{section:Conditional Soundness of DLp}, we prove it under a condition (Definition~\ref{def:Program Properties}) that restricts how a program can terminate. 
Even though, the types of restricted programs is still very rich, enough to include all deterministic programming languages (cf.~Section~\ref{section:Conditional Soundness of DLp}).   

The idea of proving the soundness is by contradiction (cf.~\cite{Brotherston08}). We assume the conclusion, e.g. $\{x\mapsto 1\} : [W]\phi$, is invalid, then it leads a sequence of invalid formulas in some proof branch, e.g. $\{x\mapsto 1\} : [W]\phi$, $\{x\mapsto t\} : [W]\phi$, $\{x\mapsto t + 1\} : [W]\phi$,... in the above derivation.
This sequence of invalid formulas then causes the violation of a well-founded set (Definition~\ref{def:Relation pmult}) of a type of metrics (Definition~\ref{def:counter-example set}) that relate these invalid formulas. More details is given in Section~\ref{section:Conditional Soundness of DLp}. 


The completeness of $\DLp$ states that for any valid labeled dynamic formula, there is a cyclic proof  for it. 
We prove the completeness of $\DLp$ under a sufficient assumption about the so-called ``loop programs'' (Definition~\ref{def:Expression Finiteness Property}, \ref{def:Well-behaved Loop Programs}). 
The main idea and the details of the proof are given in Section~\ref{section:Conditional Completeness of DLp} and Appendix~\ref{section:Other Propositions and Proofs}. 
This completeness result is useful because the restriction condition is general: any instantiation of $\DLp$ is complete once its program models satisfy this condition.

\subsection{Main Contributions \& Content Structure}
The main contributions of this paper can be summarized as follows:
\begin{itemize}
    \item We define the syntax and semantics of DLp formulas.
    \item We construct a labeled proof system and develop a lifting process for $\DLp$. 
    \item We propose a cyclic proof approach tailored for $\DLp$.
    \item We analyze and prove the soundness and completeness of $\DLp$ under certain conditions. 
\end{itemize}

The rest of the paper is organized as follows. 
Section~\ref{section:PDL FODL} gives a brief introduction to PDL and FODL, necessary for understanding the main content. 
In Section~\ref{section:Dynamic Logic LDL}, we define the syntax and semantics of $\DLp$. 
In Section~\ref{section:A Cyclic Proof System for LDL}, we propose a cyclic proof system for $\DLp$. 
In Section~\ref{section:Case Studies}, we analyze some case studies.  
Section~\ref{section:Proof of Theorem - theo:Soundness of A Cyclic Preproof} analyzes the soundness and completeness of $\DLp$. 
Section~\ref{section:Related Work} introduces related work, while Section~\ref{section:Discussions and Future Work} makes a conclusion and discusses about future work.

\section{Prerequisites : PDL \& FODL}
\label{section:PDL FODL}

In propositional dynamic logic (PDL)~\cite{Fischer79}, 
the syntax of a formula $\phi$ is given by simultaneous inductions on both programs and formulas as follows in BNF form:
$$
\begin{aligned}
\alpha\dddef&\ a\ |\ \phi?\ |\ \alpha\seq\alpha\ |\ \alpha \cho \alpha\ |\ \alpha^\lup,\\
\phi\dddef&\ p\ |\ \neg\phi\ |\ \phi\wedge \phi\ |\ [\alpha]\phi. 
\end{aligned}
$$
In the above definition, $\alpha$ is a regular expression with tests, often called a \emph{regular program}. 
$a\in A$ is an atomic action of a symbolic set $A$. 
$\phi?$ is a test. If $\phi$ is true, then the program proceeds, otherwise, the program halts;
$\alpha\seq \beta$ is a sequential program, meaning that after program $\alpha$ terminates, $\beta$ proceeds. 
$\alpha\cho \beta$ is a choice program, it means that either $\alpha$ or $\beta$ proceeds non-deterministically. 
$\alpha^\lup$ is a star program, which means that $\alpha$ proceeds for an arbitrary number $n\ge 0$ of times. 
$p$ is an atomic formula, including the boolean $\true$. 
We call a formula having the modality $[\cdot]$ a \emph{dynamic formula}. 
Intuitively, formula $[\alpha]\phi$ means that after all executions of program $\alpha$, formula $\phi$ holds. 

The semantics of PDL is given based on a Kripke structure (cf.~\cite{Harel00}) $\Krp = (S, \to, I)$, where $S$ is a set of \emph{worlds}; $\to\subseteq S\times A\times S$ is a set of transitions labeled by atomic programs; $I : P\to \mcl{P}(S)$ interprets each atomic PDL formula of set $P$ to a set of worlds.  

Given a Kripke structure $\Krp$, the semantics of PDL is based on the denotational semantics $\Sem{\cdot}$ of regular programs, given as a satisfaction relation $\Krp, w\models \phi$ between a world $w$ and a PDL formula $\phi$.
It is defined as follows by simultaneous inductions on both programs and formulas: 
\begin{enumerate}[a.]
    \item $\Sem{a}\dddef \{(w, w')\ |\ \mbox{$w\xrightarrow{a}w'$ on $\Krp$}\}$;
    \item $\Sem{\phi?}\dddef \{(w, w)\ |\ \Krp, w\models \phi\}$;
    \item $\Sem{\alpha\seq\beta}\dddef \{(w, w')\ |\ \exists w''. (w, w'')\in \Sem{\alpha}\wedge (w'', w')\in \Sem{\beta}\}$; 
    \item $\Sem{\alpha\cho\beta}\dddef \Sem{\alpha}\cup \Sem{\beta}$;
    \item $\Sem{\alpha^\lup}\dddef \bigcup^{\infty}_{n=0}\Sem{\alpha^n}$, where 
    $\alpha^0\dddef \true ?$, $\alpha^n \dddef \alpha\seq \alpha^{n-1}$ for any $n\ge 1$. 
\end{enumerate}
\begin{enumerate}[1.]
    \item $\Krp, w\models p$, if $w\in I(p)$;
    \item $\Krp, w\models \neg\phi$, if $\Krp, w\not\models \phi$;
    \item $\Krp, w\models \phi\wedge \psi$, if $\Krp, w\models \phi$ and $\Krp, w\models \psi$;
    \item $\Krp, w\models [\alpha]\phi$, if for all $(w, w')\in \Sem{\alpha}$, $\Krp, w'\models \phi$. 
\end{enumerate}

PDL studies the formulas that are valid w.r.t. all Kripke structures. Its proof system is complete (cf.~\cite{Harel00}). 

First-order dynamic logic (FODL)~\cite{Pratt76} is obtained from PDL by specializing the atomic actions $a$ and atomic formulas $p$ in PDL in some special domains. 
In FODL, an atomic action is an assignment of the form $x := e$, where $x\in \Var_{\fodl}$ is a variable and $e$ is an expression. 
Usually, we consider $e$ as an arithmetical expression of integer domain $\mbb{Z}$, e.g., $x + 5$ and $x - 2 * y$, where $x, y\in \Var_{\fodl}$. $+, -, *, /$ are the usual arithmetical operators.   
An atomic formula is an arithmetical relation $e_1 \bowtie  e_2$ with $\bowtie\in \{=, <, \le, >, \ge\}$, such as $x + 5 < 0$ and $x - 2 * y = 1$. 
The non-dynamic formulas in FODL are thus the usual arithmetical first-order formulas linked by the logical connectives $\neg, \wedge$ and the quantifier $\forall$.  

\ifx
A formula $\phi$ in the test $\phi?$ is usually a first-order arithmetical formula, where the atomic formulas are arithmetical relations like $x  + 5 > 0$ and $x - 2 * y = 1$ and composed formulas are linked by the logical connectives $\neg, \wedge, \vee, \to$ and quantifiers $\forall$ and $\exists$, such as $x  + 5 > 0 \wedge x - 2 * y = 1$ and $\forall x. (x + 5) > 0$. 
Given an FODL formula $\phi$, a \emph{substitution} $\phi[e/x]$ returns a formula in which each free occurrence of variable $x$ is replaced by an expression $e$. 
One can refer to~\cite{Harel00} for more formal definitions. 
\fi

In FODL, a \emph{state} $w : \Var_{\fodl}\to \mbb{Z}$ maps each variable to an integer. 
For an expression $e$, $w(e)$ returns the value obtained by replacing all the free occurrences of each variable $x$ in $e$ with the value $w(x)$. 
The Kripke structure $\Krp_\fodl = (S_\fodl, \to_\fodl, I_\fodl)$ of FODL is defined such that $S_\fodl$ is the set of all states w.r.t. $\Var_\fodl$ and $\mbb{Z}$. 
For each assignment $x:=e$, $w\xrightarrow{x:=e} w'$ is a relation on $\Krp_\fodl$ iff $w' = w[x\mapsto e]$, where $w[x\mapsto e]$ returns a state that maps $x$ to value $w(e)$ and maps other variables to the value the same as $w$. 
$I_\fodl$ interprets each atomic FODL formula as the set of states in which it is satisfied. 
Or formally, for a state $w\in I_\fodl(e_1\bowtie e_2)$, $w(e_1) \bowtie w(e_2)$ is true.  
Based on these, the semantics of FODL can be defined in a similar way as shown above. 
\ifx
for which we only show the cases for the assignments $x:=e$ and the atomic formulas as below:
$$
\begin{aligned}
    &\Sem{x := e}\dddef \{(s, s')\ |\ s\xrightarrow{x:=e} s', s' = s[x\mapsto e]\};\\
    &......\\
    &\Krp_\fodl, s\models e_1\bowtie e_2,\mbox{ if }w\in \I_\fodl(e_1\bowtie e_2);\\
    &......
\end{aligned}
$$
where $s[x\mapsto e]$ returns a state that maps $x$ to value $s(e)$ and maps other variables to the value the same as $s$. 
\fi
One can refer to~\cite{Harel00} for a more formal definition of FODL. 

FODL forms the language basis of many existing dynamic-logic theories~\cite{Benevides10,Beckert2016,Zhang21,Zhang22,Platzer07b,Platzer18,Pardo22,kozen85,Feldman84} as mentioned in Section~\ref{section:Summery}. 
Some are the extensions of FODL by adding new primitives, while the others can be expressed by FODL. 
For example, for the traditional while programs $\alpha$: 
$$
\alpha \dddef x := e\ |\ \alpha\seq \alpha\ |\ \Wif\ \phi\ \Wthen\ \alpha\ \Welse\ \beta\ \Wend\ |\ \Wwhile\ \phi\ \Wdo\ \alpha\ \Wend, 
$$
their special statements can be captured by FODL as follows (cf.~\cite{Harel00}): 
$$
\begin{aligned}
    \Wif\ \phi\ \Wthen\ \alpha\ \Welse\ \beta\ \Wend\dddef&\ \phi?\seq \alpha \cho \neg\phi?\seq \beta, \\
    \Wwhile\ \phi\ \Wdo\ \alpha\ \Wend \dddef&\  (\phi?\seq \alpha)^\lup\seq \neg\phi?. 
\end{aligned}
$$

\ifx
of usually integer domain $\mbb{Z}$. 
Many existing dynamic-logic theories~\cite{Benevides10,Beckert2016,Zhang21,Zhang22,Platzer07b,Platzer18,Pardo22,kozen85,Feldman84} as mentioned above were developed based on FODL by extending regular programs with special language primitives and adding inference rules for these primitives.    
\fi

\section{Dynamic Logic $\DLp$}
\label{section:Dynamic Logic LDL}

\subsection{Syntax and Semantics of $\DLp$ Formulas}
\label{section:Syntax and Semantics of DLp Formulas}

The theory of $\DLp$ extends PDL by permitting the program $\alpha$ and formula $\phi$ in modalities $[\alpha]\phi$ to take arbitrary forms, only subject to
some restriction conditions when discussing its soundness and completeness in Section~\ref{section:Proof of Theorem - theo:Soundness of A Cyclic Preproof}.   

In the relations defined in this section, we use $\cdot$ to express an ignored object whose content does not really matter. For example, we may write $w\xrightarrow{} \cdot$, $w\xrightarrow{\alpha/\cdot}w'$, $(\alpha, \sigma)\trans (\alpha', \cdot)$ and so on.

\begin{definition}[Programs \& Formulas]
\label{def:Progarms and Formulas}
 In $\DLp$ we assume two pre-defined disjoint sets $\Prog$ and $\Fmla$. 
$\Prog$ is a set of programs, in which we distinguish a special program $\ter\in \Prog$ called the ``termianal program''. 
$\Fmla$ is a set of formulas.   
\end{definition}

\begin{definition}[$\DLp$ Formulas]
\label{def:DLp Formulas}
A dynamic logical formula
$\phi$ w.r.t. the parameters $\Prog$ and $\Fmla$, called a ``parameterized dynamic logic'' ($\DLp$) formula, is defined as follows in BNF form:
$$
\begin{aligned}
    \phi \dddef&\  F\ |\ \neg \phi\ |\ \phi\wedge \phi\ |\ [\alpha]\phi,\\
\end{aligned}
$$
where $F\in \Fmla$, $\alpha\in \Prog$; $[\cdot]$ is a new operator that does not appear in any formula of $\Fmla$. 

We denote the set of $\DLp$ formulas as $\DLF$. 
\end{definition}

A $\DLp$ formula is called a \emph{dynamic formula} if it contains a modality $[\cdot]$ within it. 
Intuitively, formula $[\alpha]\phi$ means that after the terminations of all executions of program $\alpha$, formula $\phi$ holds. 
$\la\cdot\ra$ is the dual operator of $[\cdot]$. 
Formula $\la\alpha \ra\phi$ is expressed as $\neg [\alpha]\neg\phi$. 
Other formulas with logical connectives such as $\vee$ and $\to$ can be expressed by formulas with $\neg$ and $\wedge$ accordingly.  

Following the convention of defining a dynamic logic (cf.~\cite{Harel00}), we introduce a novel Kripke structure to capture the parameterized program behaviours in $\Prog$. 


\begin{definition}[Program-labeled Kripke Structures]
\label{def:Program-labeled Kripke Structure}
    A ``program-labeled'' Kripke (\PLK) structure w.r.t. parameters $\Prog$ and $\Fmla$ is a triple $$K(\Prog, \Fmla) \dddef (\Wd, \trans, \I),$$ where 
    $\Wd$ is a set of worlds; $\trans\subseteq \Wd\times (\Prog\times \Prog)\times \Wd$ is a set of relations labeled by program pairs, in the form of $w_1\xrightarrow{\alpha/\alpha'}w_2$ for some $w_1, w_2\in S$, $\alpha, \alpha'\in \Prog$; $\I: \Fmla\to \mcl{P}(\Wd)$
    is an interpretation of formulas in $\Fmla$ on the power set of worlds. 
    Moreover, $K(\Prog, \Fmla)$
    satisfies that
    $w\not\xrightarrow{\ter/\alpha} \cdot$ for any $w\in \Wd$ and $\alpha\in \Prog$. 
\end{definition}

Definition~\ref{def:Program-labeled Kripke Structure} differs from the Kripke structures $M$ of PDL (Section~\ref{section:PDL FODL}) in the following aspects: (1) It introduces a program-labeled relation of the form: $w_1\xrightarrow{\alpha/\alpha'}w_2$; (2) It introduces an additional condition for the terminal program $\ter$. 
The program-labeled relations describe
programs' transitional behaviours, which is usually captured by their operational semantics (as we see in Section~\ref{section:A Proof System for LDL}).
This is unlike the relations in $M$, where a relation only captures the behaviours of an atomic program. 
Intuitively, $w_1\xrightarrow{\alpha/\alpha'}w_2$ means that from world $w_1$, program $\alpha$ is transitioned to program $\alpha'$, ending with world $w_2$. 
The condition for $\ter$ exactly captures the meaning of program termination. 

Below in this paper, \textbf{our discussion is always based on an assumed \PLK\ structure namely $\Kr(\Prog, \Fmla) = (\Wd, \trans, \I)$}.

Starting from Example~\ref{example:While program} below, through Example~\ref{example:Labels and Formulas of labeled Sequents}, \ref{example:Label Mappings} and~\ref{example:Proof system for program behaviours} we gradually instantiate the theory $\DLp$ in the setting of the special theory FODL, where we restrict the program models of FODL to a simpler one --- the while programs. 
To do this, we give explicit definitions for the parameters $\Prog, \Fmla, \Conf, \LM$ and the proof system $\pfOper$ for the operational semantics in $\DLp$ ($\Conf, \LM$ and $\pfOper$ are introduced below). 
The logical theory after instantiated is called $\DLpWP$. 

\begin{example}[An Instantiation of Programs and Formulas]
\label{example:While program}
    Consider instantiating $\Prog$ by the set of while programs defined in Section~\ref{section:PDL FODL}, namely $\Prog_W$. 
    Consider a program $\WP$ in $\Prog_W$:
    $$
    \WP\dddef \{
            \textit{while}\
            (n > 0)\
            \textit{do}\
            s := s + n\ ;\
            n := n - 1\
            \textit{end}\
            \}.
    $$
    Given an initial value of variables $n$ and $s$, program $\textit{WP}$ computes the sum from $n$ to $1$ stored in the variable $s$. 
    The \PLK\ structure $\Kr_W = (\Wd_W, \trans_W, \I_W)$ of while programs satisfies that $\Wd_W = S_\fodl$ and $\I_W = I_\fodl$. 
    $\trans_W$ describes the transitional behaviours of while programs, captured by the operational semantics of $\Prog_W$ (see Table~\ref{table:An Example of Inference Rules for Program Behaviours} in Section~\ref{section:A Cyclic Proof System for LDL}). 
    $\trans_W$ coincides with $\to_\fodl$ on atomic programs. 
    For example, a relation $w\xrightarrow{x := x + 1/\ter}w[x\mapsto w(x) + 1]$ is on $\Kr_W$ iff a relation $w\xrightarrow{x := x + 1}w[x\mapsto w(x) + 1]$ is on $\Krp_\fodl$. 
    
    We instantiate $\Fmla$ by the arithmetic first-order formulas in integer domain $\mbb{Z}$ (Section~\ref{section:PDL FODL}), namely $\Fmla_\afo$.  

    \ifx
    and each relation $w\xrightarrow{\alpha/\alpha'}w'\in \trans_W$ iff 
    there is an assignment $x := e$ such that $\alpha = x := e\seq \alpha'$, 
    
    and that each relation 
    a world $w\in \Wd_\fodl$ is a mapping $w: \Var_\fodl\to \mbb{Z}$ from variables to integers. 
    As an example, in $\Krp_\fodl$, there is a relation 
    $w\xrightarrow{x := x + 1/\ter}w[x\mapsto w(x) + 1]$, where 
    $w[x\mapsto v]$ is a mapping that only differs from $w$ on mapping $x$ to value $v$.  
    The formulas in while programs namely $\Fmla_W$ are the usual arithmetic first-order formulas in integer domain $\mbb{Z}$. 
    \fi
\end{example}


\begin{definition}[Execution Paths]
\label{def:Execution Path}
    An ``execution path'' on $\Kr$ is a finite sequence of 
    relations on $\trans$: $w_1\xrightarrow{\alpha_1/\beta_1}...\xrightarrow{\alpha_{n}/\beta_{n}}w_{n+1}$ ($n\ge 0$) satisfying that 
    $\beta_n\in \{\ter\}$, and $\beta_i = \alpha_{i+1} \notin\{\ter\}$ for all $1\le i < n$. 
\end{definition}

In Definition~\ref{def:Execution Path}, the execution path is sometimes simply written as a sequence of worlds: $w_1...w_{n+1}$. 
When $n = 0$, the execution path is a single world $w_1$ (without any relations on $\trans$). 

Given a path $tr$, we often use $tr_b$ and $tr_e$ to denote its first and last element (if there is). 
For two paths $tr_1 \dddef w_1...w_n$ and $tr_2\dddef w'_1w'_2...w'_m...$ ($n, m\ge 0$), $tr_1$ is finite. The \emph{concatenation} $tr_1\cdot tr_2$ is defined as the path: $w_1...w_nw'_2...w'_m...$, if $w_n = w'_1$ holds. 
We use relation $tr_1\suf tr_2$ to represent that $tr_1$ is a suffix of $tr_2$. 
Write $tr_1\psuf tr_2$ if $tr_1$ is a proper suffix of $tr_2$.

\begin{definition}[Semantics of $\DLp$ Formulas]
\label{def:Semantics of dynamic logical formulas}
Given a $\DLp$ formula $\phi$, 
the satisfaction of $\phi$ by a world $w\in \Wd$ under $\Kr$, denoted by $\Kr, w\models \phi$, is inductively defined as follows:
\begin{enumerate}
    \item $\Kr, w\models F$ where $F\in \Fmla$, if $w\in \mcl{I}(F)$;
    \item $\Kr, w\models \neg \phi$, if $\Kr, w\not\models \phi$;
    \item $\Kr, w\models \phi\wedge \psi$, if $\Kr, w\models \phi$ and $\Kr, w\models \psi$;
    \item $\Kr, w\models [\alpha]\phi$, if for all execution paths of the form: $w\xrightarrow{\alpha/\cdot}...\xrightarrow{\cdot/\ter}w'$ for some $w'\in \Wd$, 
    $\Kr, w'\models \phi$. 
\end{enumerate}
\end{definition}

According to the definition of operator $\la\cdot \ra$, 
its semantics is defined such that
$\Kr, w\models \la\alpha\ra\phi$, if there exists an execution path of the form $w\xrightarrow{\alpha/\cdot}...\xrightarrow{\cdot/\ter}w'$ for some $w'\in \Wd$ such that $\Kr, w'\models \phi$. 

A $\DLp$ formula $\phi$ is called \emph{valid} w.r.t. $\Kr$, denoted by $\Kr\models \phi$ (or simply $\models \phi$), if $\Kr, w\models \phi$ for all  $w\in \Wd$. 

Compared to the semantics of PDL (Section~\ref{section:PDL FODL}), where to capture the semantics of a regular program one only has to record the beginning and ending worlds, we have to record the whole execution path from the beginning to the ending node.
This is because the semantics of a program in $\DLp$ is operational, not denotational defined according to its syntactic structures.

\begin{example}[$\DLp$ Specifications]
\label{example:DLp specifications}
    A property of program $\textit{WP}$ (Example~\ref{example:While program})
    is described as the following formula
    $$
    (n \ge 0 \wedge n = N\wedge s = 0) \to [\textit{WP}] (s = ((N+1)N)/2), 
    $$
    which means that 
    given an initial condition of $n$ and $s$, after the execution of $\textit{WP}$, $s$ equals to $((N+1)N)/2$, which is the sum of $1 + 2 + ... + N$, with $N$ a free variable in $\mbb{Z}$.  
    We prove an equivalent labeled version of this formula in $\DLp$ in Section~\ref{section:Case Study}. 
    \ifx
    Recall that $\textit{WP}$ is defined as:
        $$
         \begin{aligned}
            \textit{WP}
             \dddef&
            \{
            \textit{while}\
            (n > 0)\
            \textit{do}\
            s := s + n\ ;\
            n := n - 1\
            \textit{end}\
            \}.
         \end{aligned}
         $$
    \fi
\end{example}

\ifx
In this paper, our current construction of the proof system of $\DLp$ (Section~\ref{section:A Cyclic Proof System for LDL}) confines us to only consider a certain type of programs whose behaviours satisfy the following property. 
\fi

\ifx
In this paper,  proving the soundness of the cyclic proof system $\pfDLp$ (Section~\ref{section:Proof of Theorem - theo:Soundness of A Cyclic Preproof}) requires to confine the discussed programs  to a certain type, which satisfy the following property. 

\begin{definition}[Termination Finiteness]
\label{def:Program Properties}
In \PLK\ structure $\Kr$,  for a world $w\in \Wd$ and a program $\alpha\in \Prog$, 
        there is only a finite number of minimum execution paths starting from relations of the form: $w\xrightarrow{\alpha/\cdot}\cdot$. 
\ifx
In \PLK\ structure $\Kr$,
we assume the following properties:
    \begin{enumerate}
        \item\label{item:Branching Finiteness} Branching Finiteness. For a world $w\in \Wd$ and a program $\alpha\in \Prog$, there exists only a finite number of transitions of the form: $w\xrightarrow{\alpha/\cdot}\cdot$.

        \item\label{item:Termination Finiteness} Termination Finiteness. 
        For a world $w\in \Wd$ and a program $\alpha\in \Prog$, 
        there is only a finite number of minimum execution paths starting from transitions of the form: $w\xrightarrow{\alpha/\cdot}\cdot$.   
    \end{enumerate}
\fi
\end{definition}
\fi

\ifx
Branching finiteness comes from the restrictions made for the rules $\pfPT$ of system $\pfDLp$ in Definition~\ref{def:Matching Proof System}. 
Termination finiteness is directly required for proving the soundness of the cyclic proof system $\pfDLp$ (Section~\ref{section:Proof of Theorem - theo:Soundness of A Cyclic Preproof}). 

The currently discussed programs restricted by Definition~\ref{def:Program Properties} are actually a rich set, 
including, for example, all deterministic programs (i.e., for a program there is only one transition from a world) and programs with finite world spaces (i.e., for a program, from each world, only a finite number of worlds can be reached). 
\textit{While} programs shown in Example~\ref{example:While program} are deterministic programs. 
\fi

\subsection{Labeled $\DLp$ Formulas}

\begin{definition}[Labels \& Label Mappings]
\label{def:Labels and Label Mappings}
In \DLp, we assume two pre-defined sets $\Conf$ and $\LM$. 
$\Conf$ is a set of ``labels''. $\LM\subseteq \Conf\to\Wd$ is a set of label mappings. 
Each mapping $\lm\in \LM$ maps a label of $\Conf$ to a world of set $\Wd$. 
\end{definition}

Labels usually denote the explicit data structures that capture program configurations, for example, storage, heaps, substitutions, etc. 
Label mappings associate labels with the worlds, acting as the semantic functions of the labels.

\begin{example}[An Instantiation of Labels]
\label{example:Labels and Formulas of labeled Sequents}
    In while programs, we consider a type of labels namely $\Conf_W$ that capture the meaning of the program configurations of the form: 
    $$
    \begin{aligned}
        \{x_1\mapsto e_1,...,x_n\mapsto e_n\}\mbox{ ($n\ge 0$)}
    \end{aligned}
    $$
    where 
    each variable $x_i\in \Var_W$ stores a unique value of arithmetic expression $e_i$ ($1\le i\le n$). 
    To make it simple, we restrict that variables $x_1,...x_n$ must appear in the discussed programs and any free variable in $e_1,..., e_n$ cannot be any of $x_1,...,x_n$. 
    For any expression $e$, $\sigma(e)$ returns an expression by replacing each free variable $x_i$ in $e$ with its expression $e_i$ in $\sigma$. 
    A ``configuration update'' $\sigma^x_e$ returns a configuration that stores variable $x$ as a value of expression $\sigma(e)$, while storing other variables as the same value as $\sigma$.

  For example, in program $\WP$ (Example~\ref{example:While program}), $\{n \mapsto N, s\mapsto 0\}$ can be a configuration that maps $n$ to value $N$ (as a free variable) and $s$ to $0$. 
    
    \ifx
    For a non-dynamic formula $\phi\in \Fmla_W$, 
    a labeled formula $\sigma : \phi$ can be interpreted as $\sigma(\phi)$, which returns the formula by substituting each free variable $x$ of $\phi$ by its value in $\sigma$. 
    For example, $\{n \mapsto 5, s\mapsto 0\}$ denotes a configuration that maps $n$ to $5$ and $s$ to $0$. 
    For formula $n > 0$, 
    $\{n \mapsto 5, s\mapsto 0\}(n > 0)$ returns the formula $5 > 0$ by substituting $n$ with its value $5$ in $\{n \mapsto 5, s\mapsto 0\}$. 
    \fi

    \ifx
    A configuration $\sigma_\WP$ of $\textit{WP}$ is of the form: $\{x_1\mapsto e_1, x_2\mapsto e_2,...,x_n\mapsto e_n\}$ ($n\ge 0$) as a variable storage that maps a variable $x_i$ to a unique value of arithmetic expression $e_i$ ($1\le i\le n$).   
    For example, $\{n \mapsto 5, s\mapsto 0\}$ denotes a configuration that maps $n$ to $5$ and $s$ to $0$. 
    Formulas in this particular domain are first-order arithmetical formulas over integer numbers, with propositions are those ``closed formulas'' without free variables. 
    The interpretation $\app_{\textit{WP}}(\sigma, \phi)$ is defined as a a substitution that replaces each free variable $x$ of formula $\phi$ with its value-expression stored in $\sigma$. 
    For instance, 
    we have $\app_{\textit{WP}}(\{n\mapsto 5, s\mapsto 0\}, n > 0) = (5 > 0)$, which is true in the theory of integer numbers. 
    An evaluation $\rho_\WP$ maps each variable to an integer number. 
    $\rho_\WP(t)$ given a term $t$ is defined in the usual sense that it replaces each free variable $x$ of term $t$ with an integer number $\rho_\WP(x)$. 
    \fi
\end{example}

\begin{example}[An Instantiation of Label Mappings]
\label{example:Label Mappings}
    In while programs, 
     we consider a set $\LM_W$ of label mappings. 
     $\LM_W\subseteq \Conf_W\to \Wd_W$. 
     Each label mapping in $\LM_W$ is associated to a world, denoted by $\lm_w$ for some $w\in \Wd_W$. 
     Given a configuration $\sigma$ that captures the meaning of $\{x_1\mapsto e_1,...,x_n\mapsto e_n\}$ ($n\ge 1$), $\lm_w(\sigma)$ is defined as a world such that 
     \begin{enumerate}[(1)]
         \item $\lm_w(\sigma)(x_i) = w(e_i)$ for each $x_i$ ($1\le i\le n$); 
         \item $\lm_w(\sigma)(y) = w(y)$ for other variable $y\in \Var_W$.  
     \end{enumerate}
     Where as explained in Section~\ref{section:PDL FODL}, $w(e_i)$ returns a value by substituting each free occurrences of variable $x$ of $e_i$ with the value $w(x)$. 
     
     For example, let $w$ be a world with $w(N) = 5$, then we have $\lm_w(\{n\mapsto N, s\mapsto 0\})(n) = w(N) = 5$,  $\lm_w(\{n\mapsto N, s\mapsto 0\})(s) = 0$, and  $\lm_w(\{n\mapsto N, s\mapsto 0\})(y) = w(y)$ for any other variable $y\notin \{n, s\}$.  
\end{example}

\begin{definition}[Labeled $\DLp$ Formulas]
\label{def:labeled DLp Formulas}
    A ``labeled formula'' in $\DLp$ belongs to one of the following types of formulas defined as follows:
    $$
    \phi \dddef \sigma : \psi\ |\ (\alpha, \sigma)\trans(\alpha', \sigma')\ |\ \sigma\termi \alpha, 
    $$
    where $\sigma, \sigma'\in \Conf$, $\alpha, \alpha'\in \Prog$, $\psi\in \DLF$. 
\end{definition}
We use $\DLpF$, $\PT$ and $\PTer$ to represent the sets of labeled formulas of the forms: 
$\sigma : \psi$, $(\alpha, \sigma)\trans(\alpha', \sigma')$ and $\sigma\termi \alpha$ respectively. We often use $\tau$ to represent a labeled formula in $\DLpF\cup \PT\cup \PTer$. 

In $\DLp$, relation $(\alpha, \sigma)\trans(\alpha', \sigma')$ is called a \emph{program transition}, which indicates an execution from a so-called \emph{program state} $(\alpha, \sigma)$ to another program state $(\alpha', \sigma')$. 
Relation $\sigma\termi \alpha$ is called a \emph{program termination}, which describes the termination of a program $\alpha$ under a label $\sigma$.

\ifx
We denote the set of all labeled formulas of the form $\sigma : \psi$ as $\DLpF$. 

In $\DLp$, relation $(\alpha, \sigma)\trans(\alpha', \sigma')$ is called a \emph{program transition}, which indicates an execution from a so-called \emph{program state} $(\alpha, \sigma)$ to another program state $(\alpha', \sigma')$. 
Relation $\sigma\termi \alpha$ is called a \emph{program termination}, which describes the termination of a program $\alpha$ under a label $\sigma$.
We use $\PT$ and $\PTer$ to represent the sets of all program transitions and terminations respectively. 
\fi

\ifx
In Definition~\ref{def:labeled DLp Formulas}, from the program-labeled relations defined in Definition~\ref{def:Program-labeled Kripke Structure}, we introduce \emph{program transitions}. symbolic executions of programs as a type of abstract transitions on labels and program terminations. 
A \emph{program transition} is a relation of the form $\sigma \xrightarrow{\alpha/\alpha'}\sigma'$ (also written as $(\alpha, \sigma)\trans (\alpha', \sigma')$ below) between labels and labeled by a program pair, with $\sigma, \sigma'\in \Conf$ and $\alpha, \alpha'\in \Prog$. 
Call pair $(\alpha, \sigma)$ a \emph{program state}. 
We use $\PT$ to represent the set of all program transitions. 
A \emph{program termination} is a relation of the form $\sigma\termi \alpha$ between a label and a program, where $\sigma\in \Conf$ and $\alpha\in \Prog$. 
The set of all program terminations is denoted by $\PTer$. 
\fi

\ifx
\begin{definition}[Substitution of labeled Dynamic Formulas]
A ``substitution function'' $\eta(\eta_p, \eta_f, \eta_l) : \DLpF\to \DLpF$, where $\eta_p : \Prog\to \Prog$, $\eta_f: \Fmla\to \Fmla$, and $\eta_l: \Conf\to \Conf$ are three functions on $\Prog, \Fmla$ and $\Conf$ respectively, is a point-wise function defined as follows based on the structure of labeled dynamic formulas: 
\begin{enumerate}
    \item $\eta(\eta_p, \eta_f, \eta_l)(\eta(F))\dddef \eta_f(F)$, if $F\in \Fmla$;
    \item $\eta(\eta_p, \eta_f, \eta_l)(\neg\phi)\dddef \neg \eta(\eta_p, \eta_f, \eta_l)(\phi)$;
    \item $\eta(\eta_p, \eta_f, \eta_l)(\phi\wedge \psi)\dddef \eta(\eta_p, \eta_f, \eta_l)(\phi)\wedge \eta(\eta_p, \eta_f, \eta_l)(\psi)$;
    \item $\eta(\eta_p, \eta_f, \eta_l)([\alpha]\phi)\dddef [\eta_p(\alpha)]\eta(\eta_p, \eta_f, \eta_l)(\phi)$
    \item $\eta(\eta_p, \eta_f, \eta_l)(\sigma : \phi)\dddef \eta_l(\sigma) : \eta(\eta_p, \eta_f, \eta_l)(\phi)$. 
\end{enumerate}

A substitution function $\eta : \DLpF\to \DLpF$ is called a ``substitution'', if it satisfies that 
for any mapping $\lm : \Conf\to \Wd$, there exists a mapping $\lm'$ (which is only determined only by $\lm$ and $\eta$) such that for any labeled formula $\tau\in \DLpF$, 
$\lm\models \eta(\tau)$ iff $\lm'\models \tau$. 
\end{definition}
\fi

\begin{definition}[Substitution of Labels]
\label{def:Substitution of Labels}
A ``substitution'' $\eta : \Conf\to \Conf$ is a function on $\Conf$ satisfying that 
    for any label mapping $\lm\in \LM$, there exists a label mapping $\lm'(\lm, \eta)$ (determined only by $\lm$ and $\eta$) such that 
    $\lm'(\sigma) = \lm(\eta(\sigma))$ for all labels $\sigma\in \Conf$. 
\end{definition}

Definition~\ref{def:Substitution of Labels} is used in the rule $(\Sub)$ (Table~\ref{table:General Rules for LDL}) and in the proof of soundness of rules $\pfLDLp$ and the cyclic proof system of $\DLp$.

\ifx
\begin{definition}[Substitutions of labeled $\DLp$ Formulas]
    Let $\eta(f_1, f_2) : \DLpF\to \DLpF$ be a pair-wised function defined as follows: 
    for a formula $\sigma : \phi\in \DLpF$, 
    $$\eta(f_1,f_2)(\sigma : \phi)\dddef f_1(\sigma) : f_2(\phi), $$
    where $f_1 : \Conf\to \Conf$ and $f_2 : \DLF\to \DLF$ are two functions. 
    $\eta$ is called a ``subsitution'' if for any mapping $\lm : \Conf\to \Wd$, there exists a mapping $\lm'$ (which is only determined only by $\lm$ and $\eta$) such that 
    $\lm'(\psi) = \lm(\eta(\psi))$ for all formulas $\psi\in \DLpF$.
\end{definition}
\fi

\begin{definition}[Semantics of Labeled $\DLp$ Formulas]
\label{definition:Semantics of labeled Dlp Formulas}
    Given a label mapping $\lm\in \LM$ and a labeled formula $\tau\in \DLpF\cup\PT\cup\PTer$, 
    the satisfaction relation $\Kr, \LM, \lm\models \tau$ of a formula $\tau$ by $\Kr$, $\LM$ and $\lm$ (simply $\lm\models \tau$) is defined as follows according to the different cases of $\tau$: 
    \begin{enumerate}
        \item $\Kr, \LM, \lm\models \sigma : \phi$, if $\Kr, \lm(\sigma)\models \phi$;
        \item $\Kr, \LM, \lm\models (\alpha, \sigma) \trans (\alpha', \sigma')$, if $\lm(\sigma)\xrightarrow{\alpha/\alpha'}\lm(\sigma')$ is a relation on $\Kr$;
        \item $\Kr, \LM, \lm\models \sigma \termi \alpha$, if there exists an execution path 
$\lm(\sigma)\xrightarrow{\alpha/\cdot}...\xrightarrow{\cdot/\ter}w$ on $\Kr$ for some world $w\in \Wd$. 
    \end{enumerate}
\end{definition}

A formula $\tau\in \DLpF\cup \PT\cup \PTer$ is \emph{valid}, denoted by $\Kr\models \tau$ (or simply $\models \tau$), if $\Kr, \LM, \lm\models \tau$ for all $\lm\in \LM$. 

\section{A Cyclic Proof System for $\DLp$}
\label{section:A Cyclic Proof System for LDL}

We propose a cyclic proof system for $\DLp$. 
We firstly propose a labeled proof system $\pfDLp$ to support reasoning based on operational semantics (Section~\ref{section:A Proof System for LDL}). 
Then we construct a cyclic proof structure for system $\pfDLp$, which support deriving infinite proof trees under certain conditions (Section~\ref{section:Construction of A Cyclic Preproof Structure}). 
Section~\ref{section:labeled Sequent Calculus} and~\ref{section:Proof Preproof Cyclic Preproof} introduce the notions of labeled sequent calculus and cyclic proof respectively. 

\subsection{Labeled Sequent Calculus}
\label{section:labeled Sequent Calculus}
\ifx
We assume a set $\Conf$ of labels as a \emph{parameter} of $\DLp$. A label mapping $\lm: \Conf\to \Wd$ maps each label of $\Conf$ to a world of set $\Wd$. 
Denote the set of all label mappings as $\LM$. 
A \emph{labeled $\DLp$ formula} is of the form $\sigma : \phi$, where $\sigma\in \Conf$ and $\phi\in \DLF$. 
Denote the set of all labeled $\DLp$ formulas as $\DLpF$. 

From program-labeled relations defined in Definition~\ref{def:Program-labeled Kripke Structure} we introduce symbolic executions of programs as a type of abstract transitions on labels and program terminations. 
A \emph{program transition} is a relation of the form $\sigma \xrightarrow{\alpha/\alpha'}\sigma'$ (also written as $(\alpha, \sigma)\trans (\alpha', \sigma')$ below) between labels and labeled by a program pair, with $\sigma, \sigma'\in \Conf$ and $\alpha, \alpha'\in \Prog$. 
Call pair $(\alpha, \sigma)$ a \emph{program state}. 
We use $\PT$ to represent the set of all program transitions. 
A \emph{program termination} is a relation of the form $\sigma\termi \alpha$ between a label and a program, where $\sigma\in \Conf$ and $\alpha\in \Prog$. 
The set of all program terminations is denoted by $\PTer$. 
\fi

A \emph{sequent} is a logical argumentation of the form: 
$\Gamma\Rightarrow \Delta,$ where $\Gamma$ and $\Delta$ are finite multi-sets of formulas, called the \emph{left side} and the \emph{right side} of the sequent respectively. 
We use dot $\cdot$ to express $\Gamma$ or $\Delta$ when they are empty sets. 
Intuitively, a sequent $\Gamma\Rightarrow \Delta$ means that if all formulas in $\Gamma$ hold, then one of formulas in $\Delta$ holds. 
We use $\nu$ to represent a sequent. 

A \emph{labeled sequent} is a sequent in which each formula is a labeled formula in $\DLpF\cup \PT\cup \PTer$. 

According to the meaning of a sequent above, 
a labeled sequent $\Gamma\Rightarrow \Delta$ is \emph{valid}, if for every $\lm\in \LM$, 
$\lm\models \tau$ for all $\tau\in \Gamma$ implies $\lm\models \tau'$ for some $\tau'\in \Delta$. 
For a multi-set $\Gamma$ of formulas, 
we write $\lm\models \Gamma$ to mean that $\lm\models \tau$ for all $\tau\in \Gamma$.

\ifx
\begin{definition}[Semantics of Formulas in labeled Sequents]
    Given a labeled sequent $\nu$ and a label mapping $\lm\in \LM$, 
    the satisfaction relation $\lm\models \tau$ of a formula $\tau$ in $\nu$ under $\Kr$ is defined as follows according to the different cases of $\tau$: 
    \begin{enumerate}
        \item $\Kr, \lm\models \sigma : \phi$, if $\Kr, \lm(\sigma)\models \phi$;
        \item $\Kr, \lm\models \sigma \xrightarrow{\alpha/\alpha'}\sigma'$, if $\lm(\sigma)\xrightarrow{\alpha/\alpha'}\lm(\sigma')$ is a relation on $\Kr$;
        \item $\Kr, \lm\models \sigma \termi \alpha$, if there exists an execution path 
$\lm(\sigma)\xrightarrow{\alpha/\cdot}...\xrightarrow{\cdot/\ter}w$ on $\Kr$ for some world $w\in \Wd$. 
    \end{enumerate}
\end{definition}

A formula $\tau$ in a labeled sequent is \emph{valid}, denoted by $\Kr\models \tau$ (or simply $\models \tau$), if $\Kr, \lm\models \tau$ for all $\lm\in \LM$. 
According to the meaning of a sequent above, 
a labeled sequent $\Gamma\Rightarrow \Delta$ is \emph{valid}, if for every $\lm\in \LM$, 
$\Kr, \lm\models \tau$ for all $\tau\in \Gamma$ implies $\Kr, \lm\models \tau'$ for some $\tau'\in \Delta$. 

For a multi-set $\Gamma$ of formulas, 
we write $\Kr, \lm\models \Gamma$ to mean that $\Kr, \lm\models \tau$ for all $\tau\in \Gamma$. 
\fi

\subsection{Proofs \& Preproofs \& Cyclic Proofs}
\label{section:Proof Preproof Cyclic Preproof}

An \emph{inference rule} is of the form
$
\begin{aligned}
\infer[]
{
    \nu
}
{
    \nu_1
    &
    ...
    &
    \nu_n
}
,\end{aligned}$
where each of $\nu,\nu_i$ ($1\le i\le n$) is also called a \emph{node}. 
Each of $\nu_1,...,\nu_n$ is called a \emph{premise}, and $\nu$ is called the \emph{conclusion}, of the rule. 
The semantics of the rule is that 
the validity of sequents $\nu_1, ..., \nu_n$ implies the validity of sequent $\nu$. 
A formula $\tau$ of node $\nu$ is called the \emph{target formula} if 
except $\tau$ other formulas are kept unchanged in the derivation from $\nu$ to some node $\nu_i$ ($1\le i\le n$). 
And in this case other formulas except $\tau$ in node $\nu$ are called the \emph{context} of $\nu$.  
A formula pair $(\tau_1, \tau_2)$ with $\tau_1$ in $\nu$ and $\tau_2$ in some $\nu_i$ is called a \emph{conclusion-premise} (CP) pair of the derivation from $\nu$ to $\nu_i$. 

In this paper, we use a double-lined inference form:
$$
\begin{aligned}
   \infer=[]
{\phi}
{\phi_1 & ... & \phi_n} 
\end{aligned}
$$
to represent both rules
\begin{center}
$
\begin{aligned}
    \infer[]
{
    \Gamma\Rightarrow \phi, \Delta
}
{
    \Gamma\Rightarrow \phi_1, \Delta
    &
    ...
    &
    \Gamma\Rightarrow \phi_n, \Delta
}
\mbox{ and }
\infer[]
{
    \Gamma, \phi\Rightarrow \Delta
}
{
    \Gamma, \phi_1\Rightarrow  \Delta
    &
    ...
    &
    \Gamma, \phi_n\Rightarrow \Delta
}, 
\end{aligned}
$
\end{center}
provided any context $\Gamma$ and $\Delta$. 



A \emph{proof tree} (or \emph{proof}) is a finite tree structure formed by making derivations backward from a root node.  
In a proof tree, a node is called \emph{terminal} if it is the conclusion of an axiom. 

In the cyclic proof approach (cf.~\cite{Brotherston07}), 
a \emph{preproof} is an infinite proof tree (i.e. some of its derivations contain infinitely many nodes) 
in which there exist non-terminal leaf nodes, called \emph{buds}. 
Each bud is identical to one of its ancestors in the tree. 
A bud and one of its identical ancestors together is called a \emph{back-link}. 
A \emph{derivation path} in a preproof is an infinite sequence of nodes $\nu_1\nu_2...\nu_m...$ ($m\ge 1$) starting from the root node $\nu_1$, 
where each node pair $(\nu_i, \nu_{i+1})$ ($i\ge 1$) is a CP pair of a rule. 
A proof tree is \emph{cyclic}, if it is a preproof in which there exists a ``progressive derivation trace'', whose definition depends on specific logic theories (see Definition~\ref{def:Progressive Step/Progressive Derivation Trace} later for $\DLp$), over every derivation path. 

A \emph{proof system} $Pr$ consists of a finite set of inference rules. 
We say that a node $\nu$ can be derived from $Pr$, denoted by $Pr\vdash \nu$, if a proof tree can be constructed (with $\nu$ the root node) by applying the rules in $Pr$, which satisfies 
either (1) all of its leaf nodes terminate or (2) it is a cyclic proof. 

\subsection{A Proof System for $\DLp$}
\label{section:A Proof System for LDL}

The labeled proof system $\pfDLp$ for $\DLp$ consists of two parts: a finite set $\pfLDLp$ of \emph{kernel rules} for deriving $\DLp$ formulas, as listed in Table~\ref{table:General Rules for LDL}, and a finite set $\pfOper$ of the rules, as a parameter of $\DLp$, for capturing the operational semantics of the programs in $\Prog$. 
The rules in $\pfLDLp$ do not rely on the explicit structures of the programs in $\Prog$, but  
depend on the derivations of program transitions according to $\pfOper$ as their side-conditions. 
$\pfDLp$ provides a universal logical framework but modulo different program theories: i.e., $\pfOper$. 

The content of $\pfOper$ depends on the explicit structures of the programs in $\Prog$ and can vary from case to case. 
Through the rules in $\pfOper$, in the system $\pfDLp$ we can derive the program transitions $\PT$ and terminations $\PTer$ respectively. 
However, to coincide with the \PLK\ structure $\Kr(\Prog, \Fmla)$ as well as to fulfill the soundness and completeness of $\DLp$, we need to make the following assumptions on $\pfOper$ as described in the next definition.

\ifx
The functionalities of the sets $\pfPT$ and $\pfPTer$ are for deriving program transitions $\PT$ and terminations $\PTer$ respectively. 
Program terminations are required when we build the cyclic proof structure in system $\pfDLp$ in Section~\ref{section:Construction of A Cyclic Preproof Structure}. 
To coincide with the PL Kripke structure $\Kr(\Prog, \Fmla)$, sets $\pfPT$ and $\pfPTer$ are assumed to satisfy the conditions in the following definition. 
\fi

\ifx
Each rule in $\pfPT$ (resp. $\pfPTer$) has a restricted form: 
$\begin{aligned}
    \infer[]
    {\Gamma\Rightarrow \tau, \Delta}
    {\Gamma_1\Rightarrow\Delta_1
    &...&
    \Gamma_n\Rightarrow\Delta_n} 
\end{aligned},$
where $n \ge 0$, $\tau\in \PT$ (resp. $\tau\in \PTer$), $\tau$ is the target formula of the rule. 


\fi


\ifx
$\pfPT$ and $\pfPTer$ are assumed to be \emph{sound and complete} for the operational semantics of $\Prog$, in the sense that (a) for every program transition $\sigma\xrightarrow{\alpha/\alpha'}\sigma'$ in $\PT$ derived according to system $\pfDLp$, $\lm(\sigma)\xrightarrow{\alpha/\alpha'}\lm(\sigma')$ is a relation on $K$ for any $\lm\in \LM$; and (b) every relation on $K$ can be derived in system $\pfDLp$.  
\fi

\ifx
For sets $\pfPT$ and $\pfPTer$ to capture the operational semantics of $\Prog$, 
assumptions should be made to guarrantee the consistency between system $\pfDLp$ and \PLK\ structure $\Kr$. 
In other words, system $\pfDLp$ should exactly derive all relations on $\Kr$. 
\fi

\ifx
To guarantee the soundness of system $\pfDLp$ (Theorem~\ref{theo:soundness of rules for LDL}), $\pfPT$ is assumed to be in an appropriate form so that system $\pfDLp$ can completely capture the program behaviours indicated by \PLK\ structure $\Kr$. Besides, it also needs to satisfy a finiteness property so that  the derivations of system $\pfDLp$ are maintained finite. 
\fi



\begin{definition}[Assumptions on Set $\pfOper$ ($\pfDLp$)]
\label{def:Matching Proof System}
    The parameter $\pfOper$ (as a part of $\pfDLp$) satisfies that
    \begin{enumerate}
    \ifx
        \item \label{item:assump 1} \textit{Restricted forms} of $\pfPT$ (resp. $\pfPTer$). Each rule in $\pfPT$ (resp. $\pfPTer$) has a form of 
$$\begin{aligned}
    \infer[]
    {\Gamma\Rightarrow \tau, \Delta}
    {\Gamma_1\Rightarrow\Delta_1
    &...&
    \Gamma_n\Rightarrow\Delta_n} 
\end{aligned},$$
where $n \ge 0$, $\tau\in \PT$ (resp. $\tau\in \PTer$), and $\tau$ is the target formula of the rule. 
    \fi
    \item \label{item:coincidence}\textit{Coincidence with $\Kr(\Prog, \Fmla)$}. For any $\lm\in \LM$, $\Gamma$ such that $\lm\models \Gamma$, and for any $\sigma\in \Conf$, if $\lm(\sigma)\xrightarrow{\alpha/\alpha'}w$ is a relation on $\Kr$ for some $\alpha, \alpha'\in \Prog$ and $w\in \Wd$, then there exists a label $\sigma'\in \Conf$ such that $\lm(\sigma') = w$ and $\models(\Gamma\Rightarrow (\alpha, \sigma)\trans (\alpha', \sigma'))$. 
    
    \item \label{item:soundness}\textit{Soundness w.r.t. $\PT$ and $\PTer$}. 
    For any derivation $\pfDLp\vdash (\Gamma\Rightarrow (\alpha, \sigma)\trans(\alpha', \sigma'))$ (resp. $\pfDLp\vdash (\Gamma\Rightarrow \sigma\termi \alpha)$) in system $\pfDLp$, $\models (\Gamma\Rightarrow (\alpha, \sigma)\trans(\alpha', \sigma'))$ (resp. $\models (\Gamma\Rightarrow \sigma\termi \alpha)$. 
    
    \item \label{item:assump 3} \textit{Completeness w.r.t. $\PT$ and $\PTer$}. 
    For any valid sequent $\Gamma\Rightarrow (\alpha, \sigma)\trans(\alpha', \sigma')$ (resp. $\Gamma\Rightarrow \sigma\termi \alpha$), 
    $\pfDLp\vdash (\Gamma\Rightarrow (\alpha, \sigma)\trans(\alpha', \sigma'))$ (resp. $\pfDLp\vdash (\Gamma\Rightarrow \sigma\termi \alpha)$). 
    \ifx
    \item \label{item:assump 4} \textit{Completeness of $\pfPTer$ w.r.t. $\Kr(\Prog, \Fmla)$} (needed ???). For any $\sigma\in \Conf$ and $\lm\in \LM$, if there is a path $\lm(\sigma)\xrightarrow{\alpha/\cdot}...\xrightarrow{\cdot/\ter}w$ for some $w\in \Wd$, then 
    $\pfDLp\vdash (\cdot\Rightarrow \sigma \termi \alpha)$. 
    \fi

    \item \label{item:Simple Conditions}\textit{Simple Conditions}. For any valid sequent $\Gamma\Rightarrow (\alpha, \sigma)\trans(\alpha',\sigma')$, there is a context $\Gamma'$ in which all formulas are non-dynamic ones, such that $\models(\Gamma'\Rightarrow (\alpha, \sigma)\trans(\alpha',\sigma'))$ and $\models (\Gamma\Rightarrow \Gamma')$. 
    \end{enumerate}
    
    \ifx
    System $\pfDLp$ is assumed to satisfy the following properties:
    \begin{enumerate}
        \item \textit{Soundness of $\pfPT$ and $\pfPTer$} . All rules of $\pfPT$ and $\pfPTer$ are sound. 
        
        \item \label{item:assump 4} \textit{Completeness w.r.t. $\Kr$}. For any $\sigma\in \Conf$, $\Gamma$ and $\lm\in \LM$ with $\lm\models \Gamma$, if $\lm(\sigma)\xrightarrow{\alpha/\alpha'}w$ is a relation on $\Kr$ for some $\alpha, \alpha'\in \Prog$ and $w\in \Wd$, then there exists a label $\sigma'\in \Conf$ such that $\lm(\sigma') = w$ and $\pfDLp\vdash (\Gamma\Rightarrow{}\sigma\xrightarrow{\alpha/\alpha'} \sigma')$. 
    \end{enumerate}
    \fi
\end{definition}

Intuitively, 
the coincidence with $\Kr(\Conf, \Fmla)$ says that the relations on $\Kr$ between worlds (, once their starting nodes can be captured by labels,) can be expressed by the program transitions as labeled formulas of $\DLp$.  
It is essential for proving the soundness of the rules in $\pfLDLp$ (see the proof of Theorem~\ref{theo:soundness of rules for LDL} in Appendix~\ref{section:Other Propositions and Proofs}). 
The soundness and completeness w.r.t. $\PT$ and $\PTer$ are required for proving Theorem~\ref{theo:soundness of rules for LDL}, \ref{theo:Soundness of A Cyclic Preproof} and \ref{theo:Completeness of DLp}. 
The soundness condition says that the proof system $\pfOper$ derives no more than the program transitions that are valid on $\Kr$, while the completeness condition says that $\pfOper$ is enough for deriving all those behaviours. 
The assumption ``simple conditions'' is needed in our proof of the conditional completeness of $\DLp$ (Theorem~\ref{theo:Completeness of DLp}). 
It means that the conditions for program transitions do not depend on the behaviours of other programs. 
This is usually the case for the program languages in practice. 
However, this assumption makes $\DLp$ unable to instantiate the dynamic logics with ``rich tests'' (cf.~\cite{Harel00}), where a test can be a dynamic formula itself, for example a test $[x := e]\phi?$. 



\ifx
\begin{definition}[Completeness of $\pfDLp$ w.r.t. $\PT$ and $\PTer$]
\label{def:completeness w.r.t. program transitions and program terminations}
    System $\pfDLp$ is `complete' w.r.t. program transitions $\PT$ (resp. program terminations $\PTer$), if 
for any valid sequent $\Gamma\Rightarrow \sigma \xrightarrow{\alpha/\alpha'}\sigma'$ (resp. valid sequent $\Gamma\Rightarrow \sigma\termi \alpha$), 
$\pfDLp\vdash (\Gamma\Rightarrow \sigma \xrightarrow{\alpha/\alpha'}\sigma')$ (resp. $\pfDLp\vdash (\Gamma\Rightarrow \sigma\termi \alpha)$) holds. 
\end{definition}
\fi

\ifx
The proof system $\pfDLp$ of $\DLp$ 
relies on a pre-defined proof system $P_{\Oper,\Terminate}$ 
for deriving program behaviours $\Oper$ and terminations $\Terminate$. 
$P_{\Oper,\Terminate}$ is \emph{sound} and \emph{complete} w.r.t. $\Oper$ and $\Terminate$ in the sense that 
for any transition $(\alpha_1, \sigma_1)\trans (\alpha_2, \sigma_2)\in \Prop$ and termination $(\alpha, \sigma)\termi\ \in \Prop$, $(\alpha_1, \sigma_1)\trans (\alpha_2, \sigma_2)\in \Oper$ iff 
$\vdash_{P_{\Oper, \Terminate}} \cdot \Rightarrow (\alpha_1, \sigma_1)\trans (\alpha_2, \sigma_2)$, and $(\alpha, \sigma)\termi\ \in \Terminate$ iff $\vdash_{P_{\Oper, \Terminate}}\cdot \Rightarrow (\alpha, \sigma)\termi$. 

Note that in $\pfDLp$, we usually assume that a formula does not contain a program transition or termination. 
\fi

\renewcommand{\arraystretch}{1.5}
\begin{table}[tb]
         \begin{center}
         \noindent\makebox[\textwidth]{%
         \scalebox{1}{
         \begin{tabular}{c}
         \toprule
        $
         \infer[^{(x:=e)}]
         {\Gamma\Rightarrow (x:=e, \sigma)\trans(\ter, \sigma^x_{e}), \Delta}
         {
         }
         $
         \ \
         $
         \infer[^{(;)}]
         {\Gamma\Rightarrow (\alpha_1 ; \alpha_2, \sigma)\trans(\alpha'_1 ; \alpha_2, \sigma'), \Delta}
         {
            \Gamma\Rightarrow (\alpha_1, \sigma)\trans (\alpha'_1, \sigma'), \Delta
         }
         $
         \\
         $
         \infer[^{(;\ter)}]
         {\Gamma\Rightarrow(\alpha_1; \alpha_2, \sigma)\trans(\alpha_2, \sigma'), \Delta}
         {
            \Gamma\Rightarrow (\alpha_1, \sigma)\trans (\ter, \sigma'), \Delta
         }
         $
         \ \ 
         $
         \infer[^{(\textit{ite1})}]
         {\Gamma\Rightarrow (\Wif\ \phi\ \Wthen\ \alpha_1\ \Welse\ \alpha_2\ \Wend, \sigma)\trans (\alpha'_1, \sigma'), \Delta}
         {
         \Gamma\Rightarrow (\alpha_1, \sigma)\trans (\alpha'_1, \sigma'), \Delta
         &
         \Gamma\Rightarrow \sigma : \phi, \Delta
         }
         $
         \\
         $
         \infer[^{(\textit{ite2})}]
         {\Gamma\Rightarrow (\Wif\ \phi\ \Wthen\ \alpha_1\ \Welse\ \alpha_2\ \Wend, \sigma)\trans (\alpha'_2, \sigma'), \Delta}
         {
         \Gamma\Rightarrow (\alpha_2, \sigma)\trans (\alpha'_2, \sigma'), \Delta
         &
         \Gamma\Rightarrow \sigma : \neg \phi, \Delta
         }
         $
         \\
         $
         \infer[^{(\textit{wh1})}]
         {\Gamma\Rightarrow (\textit{while}\ \phi\ \textit{do}\ \alpha\ \textit{end}, \sigma)\trans(\alpha';\ \textit{while}\ \phi\ \textit{do}\ \alpha\ \textit{end}, \sigma'), \Delta}
         {
         \Gamma, \sigma : \phi\Rightarrow (\alpha, \sigma)\trans(\alpha', \sigma'), \Delta
         &
                  \Gamma\Rightarrow \phi : \sigma, \Delta
         }
         $
         \\
         $
         \infer[^{(\textit{wh1}\ter)}]
         {\Gamma\Rightarrow (\textit{while}\ \phi\ \textit{do}\ \alpha\ \textit{end}, \sigma)\trans(\textit{while}\ \phi\ \textit{do}\ \alpha\ \textit{end}, \sigma'), \Delta}
         {
          \Gamma, \sigma : \phi\Rightarrow (\alpha, \sigma)\trans(\ter, \sigma'), \Delta
          &
                  \Gamma\Rightarrow \sigma : \phi, \Delta
         }
         $
         \\
         $
         \infer[^{(\textit{wh2})}]
         {\Gamma\Rightarrow (\textit{while}\ \phi\ \textit{do}\ \alpha\ \textit{end}, \sigma)\trans(\ter, \sigma), \Delta}
         {
         \Gamma\Rightarrow \sigma : \neg\phi, \Delta
         }
         $
         \\
         \bottomrule
         \end{tabular}
              }
              }
          \end{center}
          \caption{Partial Rules of $(\pfOper)_W$ for Program Transitions of While Programs}
          \label{table:An Example of Inference Rules for Program Behaviours}
    \end{table}

\begin{example}[An Instantiation of $\pfOper$]
\label{example:Proof system for program behaviours}
Table~\ref{table:An Example of Inference Rules for Program Behaviours} displays a set of partial rules of $(\pfOper)_W$ for describing the operational semantics of while programs, where we omit the rules for deriving program terminations.  
In Table~\ref{table:An Example of Inference Rules for Program Behaviours}, $\sigma^x_e$ is defined in Example~\ref{example:Labels and Formulas of labeled Sequents}.

\end{example}

In practice, we usually think that the rules $\pfOper$ faithfully commit the operational semantics of the programs. 
So in this manner we simply trust the assumptions \ref{item:coincidence}, \ref{item:soundness}, \ref{item:assump 3} of Definition~\ref{def:Matching Proof System} without proving. 

\ifx REPLACED BY AN ALTERNATIVE STATEMENT AS ABOVE
In practice, we can simply ``trust'' $\pfOper$ (thinking that they faithfully commit the operational semantics of the programs) and directly define the programs' transitional behaviours (i.e. the relation $\trans$ of $\Kr(\Prog, \Fmla)$) according to $\pfOper$. 
In this manner, the coincidence, the soundness and complteness conditions of Definition~\ref{def:Matching Proof System} are automatically met.

\begin{example}
    ddd
\end{example}
\fi

Through the rules in $\pfLDLp$, a labeled $\DLp$ formula can be transformed into proof obligations as non-dynamic formulas, which can then be encoded and verified accordingly through, for example, an SAT/SMT checking procedure. 
The rules for other operators like $\vee$, $\to$ can be derived accordingly using the rules in Table~\ref{table:General Rules for LDL}.   

\ifx
Table~\ref{table:General Rules for LDL} lists the primitive rules of $\pfDLp$.
Through $\pfDLp$, 
a $\DLp$ formula can be transformed into proof obligations as non-dynamic formulas, which can then be encoded and verified accordingly through, for example, an SAT/SMT checking procedure. 
For easy understanding, we give rule $([\alpha]L)$ instead of the version of rule $([\alpha]R)$ for the left-side derivations, which can be derived using $([\alpha]L)$ and rules $(\neg L)$ and $(\neg R)$.  
The rules for other operators like $\vee$, $\to$ can be derived accordingly using the rules in Table~\ref{table:General Rules for LDL}.   
\fi

\renewcommand{\arraystretch}{1.5}
    \begin{table}[tb]
         \begin{center}
         \noindent\makebox[\textwidth]{%
         \scalebox{1.0}{
         \begingroup
         \begin{tabular}{c}
         \toprule
         $
         \begin{aligned}
         \infer[^{1\ ([\alpha]R)}]
         {\Gamma\Rightarrow \sigma : [\alpha]\phi, \Delta}
         {
         \{
         \Gamma\Rightarrow \sigma' : [\alpha']\phi, \Delta
         \}_{(\alpha', \sigma')\in \Phi}
         }
         \end{aligned}
         $,
         \ \ 
         where
         $
         \Phi\dddef \{(\alpha', \sigma')\ |\ \mbox{$\pfDLp\vdash (\Gamma\Rightarrow (\alpha, \sigma) \xrightarrow{}(\alpha', \sigma'), \Delta)$}\}
         $
        \\
        \midrule
        $
         \begin{aligned}
         \infer[^{1\ ([\alpha]L)}]
         {\Gamma, \sigma: [\alpha]\phi\Rightarrow  \Delta}
         {
         \Gamma, \sigma': [\alpha']\phi\Rightarrow  \Delta
         }
         \end{aligned}
         $,
         \ \ 
         if $\pfDLp\vdash (\Gamma\Rightarrow (\alpha, \sigma)\xrightarrow{}(\alpha', \sigma'), \Delta)$
         \ifx
         \\
         \midrule
         $
         \begin{aligned}
         \infer[^{1\ (\la\alpha\ra t)}]
         {\Gamma \Rightarrow \sigma: \la\alpha\ra\phi, \Delta}
         {
         \Gamma\Rightarrow \sigma': \la\alpha'\ra\phi, \Delta
         &
         \Gamma\Rightarrow \sigma\xrightarrow{\alpha/\alpha'}\sigma', \Delta
         &
         \Gamma\Rightarrow \sigma\termi \alpha, \Delta
         }
         \end{aligned}
         $
         \fi
        \\
        \midrule
        $
          \begin{aligned}
        \infer=[^{([\ter])}]
         {\sigma : [\ter]\phi}
         {\sigma : \phi}
          \end{aligned}
         $
         \ \vline\ 
           $
         \begin{aligned}
         \infer[^{2\ (\textit{Ter})}]
         {\Gamma \Rightarrow \Delta}
         {}
         \end{aligned}
         $
        \ \vline\ 
        $
        \begin{aligned}
         \infer[^{3\ (\textit{Sub})}]
         {\Sub(\Gamma)\Rightarrow \Sub(\Delta)}
         {\Gamma\Rightarrow \Delta}
         \end{aligned}
         $
         \ifx
        \ \vline\ 
        $
        \begin{aligned}
         \infer=[^{(\sigma\neg)}]
         {\sigma : (\neg \phi)}
         {\neg(\sigma : \phi)}
        \end{aligned}
         $
         \ \vline\ 
        $
        \begin{aligned}
         \infer=[^{(\sigma\wedge)}]
         { \sigma : (\phi\wedge\psi)}
         {(\sigma : \phi)\wedge (\sigma : \psi)}
         \end{aligned}
         $
         \fi
         \ifx
         \ \vline\ 
         $
        \begin{aligned}
         \infer=[^{(\sigma\wedge 2)}]
         {(\sigma : \phi)\wedge (\sigma : \psi)}
         {\sigma : (\phi\wedge\psi)}
         \end{aligned}
         $
         \fi
        \ \vline\ 
         $
         \begin{aligned}
         \infer[^{(\textit{ax})}]
         {\Gamma, \sigma : \phi\Rightarrow \sigma : \phi, \Delta}
         {}
         \end{aligned}
         $
         \\
         \midrule
         $
         \begin{aligned}
         \infer[^{(\textit{Cut})}]
         {\Gamma\Rightarrow \Delta}
         {\Gamma\Rightarrow \sigma : \phi & 
         \Gamma, \sigma : \phi\Rightarrow \Delta}
         \end{aligned}
         $
        \ \vline\ 
                 $
         \begin{aligned}
         \infer[^{(\textit{WkR})}]
         {\Gamma\Rightarrow \sigma : \phi, \Delta}
         {\Gamma\Rightarrow \Delta}
         \end{aligned}
         $
         \ \vline\ 
        $
         \begin{aligned}
         \infer[^{(\textit{WkL})}]
         {\Gamma, \sigma : \phi\Rightarrow  \Delta}
         {\Gamma \Rightarrow \Delta}
         \end{aligned}
         $
         \ \vline\ 
         $
         \begin{aligned}
         \infer=[^{(\textit{Con})}]
         {\sigma : \phi}
         {\sigma : \phi, \sigma : \phi}
         \end{aligned}
         $
         \\
         \midrule
                   $
        \begin{aligned}
         \infer[^{(\neg R)}]
         {\Gamma\Rightarrow \sigma : \neg \phi, \Delta}
         {\Gamma, \sigma : \phi \Rightarrow \Delta}
        \end{aligned}
         $
         \ \vline\ 
                  $
        \begin{aligned}
         \infer[^{(\neg L)}]
         {\Gamma, \sigma : \neg \phi\Rightarrow \Delta}
         {\Gamma\Rightarrow \sigma : \phi, \Delta}
        \end{aligned}
         $
         \ \vline\ 
                 $
        \begin{aligned}
         \infer[^{(\wedge R)}]
         {\Gamma\Rightarrow \sigma : \phi\wedge \psi, \Delta}
         {\Gamma\Rightarrow \sigma : \phi, \Delta
         &
         \Gamma\Rightarrow \sigma : \psi, \Delta}
         \end{aligned}
         $
         \ \vline\ 
        $
        \begin{aligned}
         \infer[^{(\wedge L)}]
         {\Gamma, \sigma : \phi\wedge\psi\Rightarrow  \Delta}
         {\Gamma, \sigma : \phi, \sigma : \psi\Rightarrow \Delta}
         \end{aligned}
         $
         \\
         \midrule
         \multicolumn{1}{l}{
            \begin{tabular}{l}
            $^{1}$ $\alpha\notin \{\ter\}$.
            $^{2}$ for each $\sigma : \phi\in \Gamma\cup\Delta$, $\phi\in \Fmla$; 
            Sequent $\Gamma\Rightarrow \Delta$ is valid.
            $^{3}$ $\Sub$ is given by Definition~\ref{def:Substitution of Labels}. 
            \end{tabular}
         }
         \\
         \bottomrule
         \end{tabular}
         \endgroup
              }
              }
          \end{center}
          \caption{Rules $\pfLDLp$ for the Proof System of $\DLp$}
          \label{table:General Rules for LDL}
    \end{table}
    
The illustration of each rule in Table~\ref{table:General Rules for LDL} is as follows. 

Rules $([\alpha]R)$ and $([\alpha]L)$ reason about dynamic parts of labeled $\DLp$ formulas. 
Both rules rely on side deductions: ``$\pfDLp\vdash (\Gamma\Rightarrow{} (\alpha, \sigma)\xrightarrow{}(\alpha', \sigma'), \Delta)$'' as sub-proof procedures of program transitions. 
In rule $([\alpha]R)$, 
$\{...\}_{(\alpha', \sigma')\in \Phi}$ represents the collection of premises for all program states $(\alpha', \sigma')\in \Phi$. 
By the finiteness of system $\pfDLp$, set $\Phi$ must be finite (because only a finite number of forms $(\alpha', \sigma')$ can be derived). 
So rule $([\alpha]R)$ only has a finite number of premises. 
When $\Phi$ is empty, the conclusion terminates.  
Compared to rule $([\alpha]R)$, rule $([\alpha]L)$ has only one premise for some program state $(\alpha', \sigma')$. 

\ifx
we prove that there exists a program state $(\alpha', \sigma')$ such that 
$(\alpha, \sigma)\trans (\alpha', \sigma')$, and 
$\la\alpha'\ra\phi$ holds under configuration $\sigma'$.  
At the same time, we need to guarantee that the termination factor $t'$ cannot grow larger than $t$ w.r.t. relation $\prec$. 
Note that in both rules, we assume that there always exists a transition $(\alpha, \sigma)\trans (\alpha', \sigma')$ from program state $(\alpha, \sigma)$ since $\alpha$ is neither $\ter$ nor $\abort$. This is guaranteed by the well-definedness of $\alpha$'s operational semantics, as stated in~\ref{item:Well-definedness} of Definition~\ref{def:Program Properties}. 
\fi

Rule $([\ter])$ deals with the situation when the program is a terminal one $\ter$. 
Its soundness is straightforward by the semantics of $\ter$ in Definition~\ref{def:Program-labeled Kripke Structure}.  

\ifx
In rule $(\textit{Ter})$, 
when sequent $\Gamma\Rightarrow \Delta$ of labeled non-dynamic formulas is valid, it is assumed to be checked through SAT/SMT procedure. 
\fi

Rule $(\textit{Ter})$ indicates that one proof branch terminates when a sequent $\Gamma\Rightarrow\Delta$ is valid in which all labeled formulas are non-dynamic ones. 

\ifx
Rules $([\ter])$, $(\la\ter\ra)$ and $(\sigma[\abort])$ deal with the situations when program $\alpha$ is either a termination $\ter$ or an abortion $\abort$.
Note that there is no rules for formula $\sigma: \la\abort\ra \phi$. 
Intuitively, 
it is not hard to see that $\la \abort\ra\phi$ is false under any configuration because $\abort$ never terminates. 
\fi

Rule $(\Sub)$ describes a specialization process for labeled dynamic formulas. 
For a set $A$ of labeled formulas, $\Sub(A)\dddef \{\Sub(\tau)\ |\ \tau\in A\}$, with $\Sub$ a substitution (Definition~\ref{def:Substitution of Labels}). 
Intuitively, if sequent $\Gamma\Rightarrow \Delta$ is valid, then its one of special cases $\Sub(\Gamma)\Rightarrow \Sub(\Delta)$ is also valid. 
Rule $(\Sub)$ plays an important role in constructing a bud in a cyclic proof structure (Section~\ref{section:Construction of A Cyclic Preproof Structure}). 
See Section~\ref{section:Case Study} for more details.

\ifx DEFINED EARLIER
\begin{definition}[Substitution of Labels]
\label{def:Substitution of Labels}
A `substitution' $\eta : \Conf\to \Conf$ is a function on $\Conf$ satisfying that 
    for any label mapping $\lm\in \LM$, there exists a label mapping $\lm'(\lm, \eta)$ (determined only by $\lm$ and $\eta$) such that 
    $\lm'(\sigma) = \lm(\eta(\sigma))$ for all labels $\sigma\in \Conf$. 
\end{definition}

Definition~\ref{def:Substitution of Labels} will be used in the proof of soundness of rules $\pfLDLp$ and the cyclic proof system of $\DLp$. 
\fi


Rules from $(\textit{ax})$ to $(\wedge L)$ are the ``labeled verions'' of the corresponding rules inherited from traditional first-order logic. 
Their meanings are classical and we omit their discussions here. 

\ifx
Rules from $(\textit{ax})$ to $(\vee)$ are direct from the traditional first-order logic. 
Note that both rules $(\neg L)$ and $(\neg R)$ are required to derive the rules for the left-side target formulas of sequents. 
Rules $(\textit{Sub} L)$ and $(\textit{Sub} R)$ are the versions of substitutions without quantifiers.  
The meanings of these rules are classical and we omit them here. 
\fi

\ifx
Rules $(\textit{Ter})$ and $(\textit{ax})$ declare a termination of a proof branch. 
In rule $(\textit{Ter})$, 
when each formula in $\Gamma$ and $\Delta$ is a formula in $\Fmla$.  
We can conclude the proof branch if 
the proof obligation $\mfr{P}(\Gamma\Rightarrow \Delta)$ is true. 
Rule $(\textit{ax})$ is the `labeled' version of the corresponding rule in traditional propositional logic. 
Rule $(\textit{Cut})$, the labeled version of the corresponding traditional ``cut rule'', provides a mechanism to derive by providing additional lemmas. 
\fi

\ifx
The introduction rule $(\textit{Int})$ for labels is applied when $\phi$ is a formula in $\Fmla$ without any dynamic parts. 
Through this rule we eliminate a configuration $\sigma$ and obtain a formula 
$\app(\sigma, \phi)$ in $\Fmla$. 
\ifx
Rule $(\sigma\textit{Lif})$ lifts a rule in specific domains to a rule labeled with a free configuration $\sigma$ w.r.t. set $\free(\Conf,\Sigma)$ (Definition~\ref{def:free configurations}), where $\Sigma \dddef \Gamma\cup \Delta\cup \{\phi, \psi\}$.  
We define $\sigma : A\dddef \{\sigma : \phi\ |\ \phi\in A\}$ for a multi-set $A$ of formulas. 
Rule $(\sigma\textit{Lif})$ is useful, when a structure-based rule exists, we can simply lift it and make derivations in its labeled forms. 
\fi
Rule $(\sigma \cfeq)$ replaces a configuration $\sigma$ with one that is equivalent as $\sigma$ under evaluations. 
It is critical for the proof system because by looking for a suitable $\sigma'$, 
one can successfully find a back-link, thus closes one branch in a cyclic proof structure (these concepts will be introduced in Section~\ref{section:Construction of A Cyclic Preproof Structure}). 
In Section~\ref{section:Case Studies}, our examples will illustrate how to construct a cyclic proof structure making use of this rule. 
\fi

\ifx
Rules $(\neg\sigma L)$, $(\neg\sigma R)$, $(\wedge\sigma)$ and $(\vee\sigma)$ deal with logical connectives $\neg$, $\wedge$ and $\vee$ in a labeled $\DLp$ formula.
They correspond to the rules in traditional propositional logic (without labels) for $\neg$, $\wedge$ and $\vee$ respectively.  
\fi

\ifx
The lifting rule $(\sigma)$ lifts a rule $(\textit{sp})$ in specific domains to a rule labeled with an irrelevant configuration $\sigma$. 
$\sigma : A$ means the multi-set $\{\sigma : \phi\ |\ \phi \in A\}$, where 
$\phi$ is a $\GDL$ formula in $A$. 
Rule $(\sigma)$ demonstrates the compatibility of $\DLp$ to other existed dynamic-logic proof systems. The soundness of rule $(\sigma)$ is guaranteed by the following proposition, whose proof is directly by the irrelevance of $\sigma$ (Section~\ref{section:Dynamic Logic LDL}). 
\fi

\ifx
\begin{proposition}
    If a configuration $\sigma$ is irrelevant to a $\GDL$ formula $\phi$, 
    then $\models \sigma : \phi$ iff $\models \phi$.  
\end{proposition}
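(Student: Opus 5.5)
The plan is to unfold both validity notions and reduce the equivalence to a correspondence between the worlds of the form $\lm(\sigma)$, for $\lm\in\LM$, and arbitrary worlds $w\in\Wd$. By Definition~\ref{definition:Semantics of labeled Dlp Formulas}, $\models \sigma : \phi$ means that $\Kr, \lm(\sigma)\models \phi$ for every $\lm\in\LM$. By Definition~\ref{def:Semantics of dynamic logical formulas}, $\models \phi$ means that $\Kr, w\models\phi$ for every $w\in\Wd$. So the statement compares satisfaction of $\phi$ over the image $\{\lm(\sigma)\mid \lm\in\LM\}$ with satisfaction over all of $\Wd$.

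Since irrelevance of $\sigma$ is not yet formalised at this point (it is made precise later as ``free'' labels), I would first fix its content as two properties. These match the intended reading: $\sigma$ only renames variables by fresh ones and thus imposes no constraint that $\phi$ can observe.
\begin{itemize}
\item \emph{Surjectivity up to $\phi$}: for every $w\in\Wd$ there is an $\lm\in\LM$ such that $\Kr,\lm(\sigma)\models\phi$ iff $\Kr,w\models\phi$.
\item \emph{Soundness of the image}: every $\lm(\sigma)$ is $\phi$-equivalent to some world of $\Wd$.
\end{itemize}
The second property is trivial if one only uses $\lm(\sigma)\in\Wd$ itself. In the while-program instance (Example~\ref{example:Label Mappings}), surjectivity comes from choosing $\lm_{w'}$, where $w'$ agrees with $w$ except that each fresh variable $t_i$ is set to $w(x_i)$. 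Freshness of the $t_i$ with respect to $\phi$, together with a coincidence lemma (satisfaction of $\phi$ depends only on the values of its free variables), then gives $\Kr,\lm_{w'}(\sigma)\models\phi$ iff $\Kr,w\models\phi$.

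With these two properties the two directions are short. For ($\Leftarrow$), assume $\models\phi$ and take any $\lm$. Since $\lm(\sigma)\in\Wd$, we get $\Kr,\lm(\sigma)\models\phi$, hence $\lm\models\sigma:\phi$. For ($\Rightarrow$), assume $\models\sigma:\phi$ and take any $w$. Pick $\lm$ by surjectivity up to $\phi$. Then $\Kr,\lm(\sigma)\models\phi$, and hence $\Kr,w\models\phi$.

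The main obstacle is the coincidence lemma behind surjectivity when $\phi$ is a dynamic formula $[\alpha]\psi$. There one must show that changing the values of variables not occurring in $\alpha$ or $\psi$ does not change satisfaction. This requires that the transitions $\trans$ are invariant under such changes: execution paths from $w$ and from $w'$ correspond step by step, and their end worlds agree on all relevant variables. I would prove this by induction on the structure of $\phi$, with an inner induction on the length of execution paths for the modal case. The simulation property of $\trans$ would be stated as an explicit hypothesis on the instance, e.g.\ justified for $\DLpWP$ from the rules in Table~\ref{table:An Example of Inference Rules for Program Behaviours}.
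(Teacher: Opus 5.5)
Your proof is correct and follows essentially the paper's route. The paper treats the proposition as immediate from the definition of irrelevance, which it later formalises as freeness (Definition~\ref{def:free configurations}), and your two directions are exactly the two halves of the proof of Proposition~\ref{prop:lifting process 2}: one uses $\lm(\sigma)\in\Wd$, the other picks, for each $w\in\Wd$, an $\lm$ with $w=_{\{\phi\}}\lm(\sigma)$. The coincidence-lemma discussion is needed only to check that a concrete label such as $\{x\mapsto t\}$ is irrelevant in $\DLpWP$, not to prove the proposition, so it is not actually an obstacle here.
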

\fi


\begin{theorem}
\label{theo:soundness of rules for LDL}
    Each rule from $\pfLDLp$ in Table~\ref{table:General Rules for LDL} is sound. 
\end{theorem}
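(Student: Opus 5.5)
Soundness here means local soundness: for each rule in $\pfLDLp$, if every premise is valid (for all $\lm\in\LM$, $\lm\models\Gamma$ implies $\lm\models\tau'$ for some $\tau'\in\Delta$), then so is the conclusion. I will go through the rules one by one. Each time I fix an arbitrary $\lm\in\LM$ satisfying the left side of the conclusion and show that some formula on its right side holds under $\lm$.

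\textbf{Easy rules.} The propositional rules $(\textit{ax})$, $(\textit{Cut})$, $(\textit{WkR})$, $(\textit{WkL})$, $(\textit{Con})$, $(\neg R)$, $(\neg L)$, $(\wedge R)$ and $(\wedge L)$ follow from Definition~\ref{definition:Semantics of labeled Dlp Formulas}(1). That clause gives $\lm\models\sigma:\neg\phi$ iff $\lm\not\models\sigma:\phi$, and $\lm\models\sigma:\phi\wedge\psi$ iff both conjuncts hold at the world $\lm(\sigma)$. So these rules reduce to the classical sequent argument carried out pointwise at $\lm$.

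$(\textit{Ter})$ has no premises, and its side condition already says the conclusion is valid.

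For $([\ter])$ I use the condition $w\not\xrightarrow{\ter/\cdot}$ of Definition~\ref{def:Program-labeled Kripke Structure}. Starting from $\lm(\sigma)$, the only execution path for $\ter$ is the trivial path of length $0$. Hence $\Kr,\lm(\sigma)\models[\ter]\phi$ iff $\Kr,\lm(\sigma)\models\phi$, and this equivalence gives both the left and right versions of the rule.

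For $(\Sub)$, suppose $\lm\models\Sub(\Gamma)$. Definition~\ref{def:Substitution of Labels} gives $\lm'$ with $\lm'(\sigma)=\lm(\Sub(\sigma))$. Therefore $\lm'\models\tau$ iff $\lm\models\Sub(\tau)$ for every labeled formula $\tau$; for transitions and terminations this holds because their semantics depends only on the worlds that the labels denote. Validity of the premise applied at $\lm'$ then transfers back to $\lm$.

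\textbf{$([\alpha]L)$.} Assume $\lm\models\Gamma$ and $\lm\models\sigma:[\alpha]\phi$. If $\lm$ satisfies some formula of $\Delta$ we are done. Otherwise, by assumption~\ref{item:soundness} of Definition~\ref{def:Matching Proof System}, the side derivation gives $\lm(\sigma)\xrightarrow{\alpha/\alpha'}\lm(\sigma')$. Every execution path of $\alpha'$ from $\lm(\sigma')$ can be prefixed by this step to give an execution path of $\alpha$ from $\lm(\sigma)$. This uses $\alpha\neq\ter$, and $\alpha'\neq\ter$ is needed for the concatenation; if $\alpha'=\ter$, the path is just this one step. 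So $\lm\models\sigma':[\alpha']\phi$, and the premise yields a formula of $\Delta$, a contradiction.

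\textbf{$([\alpha]R)$, the main obstacle.} Assume $\lm\models\Gamma$ and that $\lm$ satisfies no formula of $\Delta$. Take an execution path $\lm(\sigma)\xrightarrow{\alpha/\alpha_1}w_1\cdots w'$, where $\alpha\neq\ter$ guarantees the path has at least one step. By coincidence (assumption~\ref{item:coincidence}), with the context taken as $\Gamma$ together with the negations of the formulas in $\Delta$ (moved to the left so that $\lm$ satisfies all of it), there is a label $\sigma_1$ with $\lm(\sigma_1)=w_1$ and the sequent $\Gamma\Rightarrow(\alpha,\sigma)\trans(\alpha_1,\sigma_1),\Delta$ valid. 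By completeness (assumption~\ref{item:assump 3}) this sequent is derivable in $\pfDLp$, so $(\alpha_1,\sigma_1)\in\Phi$.

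The corresponding premise is valid and $\lm$ satisfies no formula of $\Delta$, so $\lm\models\sigma_1:[\alpha_1]\phi$. The remainder of the path is an execution path of $\alpha_1$ from $w_1$, so $\phi$ holds at $w'$; in the case $\alpha_1=\ter$ this follows from $([\ter])$. The delicate point is the one above: coincidence is stated for an arbitrary context $\Gamma$ with $\lm\models\Gamma$, and I must feed it exactly the context that matches the form of the sequents defining $\Phi$. I expect this bookkeeping to need the most care, including justifying that negated right-side formulas may be moved to the left.
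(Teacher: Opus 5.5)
Your proposal is correct and follows essentially the paper's route. You do the same rule-by-rule check, with coincidence plus completeness w.r.t.\ $\PT$ driving $([\alpha]R)$, soundness of the side derivation driving $([\alpha]L)$, and the defining property of substitutions driving $(\Sub)$; your handling of $([\alpha]L)$ (splitting on whether $\lm$ already satisfies a formula of $\Delta$, and treating $\alpha'=\ter$ separately for the concatenation) is in fact more careful than the paper's.

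The one thing to drop is the negation detour in $([\alpha]R)$, which is not even expressible when $\Delta$ contains transitions or terminations. The paper applies coincidence with context $\Gamma$ alone, and validity of $\Gamma\Rightarrow(\alpha,\sigma)\trans(\alpha_1,\sigma_1)$ already implies validity with $\Delta$ added on the right. The bookkeeping you flag as delicate thus reduces to the same implicit weakening by $\Delta$ that the paper uses when it concludes $(\alpha_0,\sigma_0)\in\Phi$.
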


Following the above explanations, Theorem~\ref{theo:soundness of rules for LDL} can be proved according to the semantics of labeled $\DLp$ formulas under the assumption of Definition~\ref{def:Matching Proof System}. 
See Appendix~\ref{section:Other Propositions and Proofs} for more details.

\subsection{Construction of a Cyclic Proof Structure for $\DLp$}
\label{section:Construction of A Cyclic Preproof Structure}


We build a cyclic labeled proof system for $\DLp$, in order to recognize and admit potential infinite derivations as the example shown in Section~\ref{section:An Overview}. 
Based on the notion of preproofs (Section~\ref{section:Proof Preproof Cyclic Preproof}), 
we build a cyclic proof structure for system $\pfDLp$, where the key part is to introduce the notion of progressive derivation traces in $\DLp$ (Definition~\ref{def:Progressive Step/Progressive Derivation Trace}). 

\ifx
Cyclic proof (cf.~\cite{Brotherston07}) is a technique to ensure a valid conclusion of a preproof structure (Section~\ref{section:Proof Preproof Cyclic Preproof}), under some certain conditions called ``soundness conditions''. 
We build a so-called ``cyclic preproof'' for system $\pfDLp$, which satisfies certain soundness conditions special for dynamic $\DLp$ formulas. 
Section~\ref{section:Proof of Theorem - theo:Soundness of A Cyclic Preproof} will explain why these cyclic preproof structures are sound.  

Next we firstly introduce the necessary notions of preproofs, derivation paths and progressive derivation traces. 
Then we propose a cyclic preproof structure for system $\pfDLp$. 
\fi

\ifx
In this paper, we apply the cyclic proof approach (cf.~\cite{Brotherston07}) to deal with a potential infinite proof tree in $\DLp$. 
Cyclic proof is a technique to ensure a valid conclusion of a special type of 
infinite proof trees, called `preproofs', under some certain conditions called ``soundness conditions''. 
Below we firstly introduce a preproof structure, 
then based on it we propose a `cyclic' preproof --- a structure that satisfies certain soundness conditions.  
Section~\ref{section:Proof of Theorem - theo:Soundness of A Cyclic Preproof} will explain why cyclic preproof can be sound.  
\fi


Next we first introduce the notion of progressive derivation traces for $\DLp$, then 
we define the cyclic proof structure for $\DLp$ as a special case of the notion already given in Section~\ref{section:Proof Preproof Cyclic Preproof}.

\ifx old def of derivation traces
A \emph{derivation trace} over a derivation path $\mu_1\mu_2...\mu_k\nu_1\nu_2...\nu_m...$ ($k \ge 0, m\ge 1$) is an infinite sequence $\tau_1\tau_2...\tau_m...$ of formulas with each formula $\tau_i$ ($1\le i\le m$) in node $\nu_i$. 
Each CP pair $(\tau_i, \tau_{i+1})$ ($i\ge 1$) of derivation $(\nu_i, \nu_{i+1})$ is either 
a target pair of a rule in $\pfLDLp$, or a CP pair satisfying $\tau_i = \tau_{i+1}$. 
\fi

\begin{definition}[Derivation Traces]
\label{def:Derivation Traces}
    A ``derivation trace'' over a derivation path $\mu_1\mu_2...\mu_k\nu_1\nu_2...\nu_m...$ ($k \ge 0, m\ge 1$) is an infinite sequence $\tau_1\tau_2...\tau_m...$ of formulas with each formula $\tau_i$ ($1\le i\le m$) in node $\nu_i$. 
    Each CP pair $(\tau_i, \tau_{i+1})$ ($i\ge 1$) of derivation $(\nu_i, \nu_{i+1})$ satisfies special conditions as follows according to $(\nu_i, \nu_{i+1})$ being the different instances of rules from $\pfLDLp$:
    \begin{enumerate}
        \item If $(\nu_i, \nu_{i+1})$ is an instance of rule $([\alpha]R)$, $([\alpha]L)$, $([\ter])$, $(\neg R)$, $(\neg L)$, $(\wedge R)$ or $(\wedge L)$, 
        then either 
        $\tau_i$ is the target formula and $\tau_{i+1}$ is its replacement by application of the rule, or $\tau_i = \tau_{i+1}$; 

        \item If $(\nu_i, \nu_{i+1})$ is an instance of rule $(\Sub)$, 
        then $\tau_i = \Sub(\sigma) : \phi$ and $\tau_{i+1} = \sigma : \phi$ for some $\sigma\in \Conf$ and $\phi\in \DLF$;
        
        \item If $(\nu_i, \nu_{i+1})$ is an instance of other rules, 
        then $\tau_i = \tau_{i+1}$. 
    \end{enumerate}
\end{definition}


Below an expression $n :: O$ means that we use name $n$ to denote the object $O$. 

\begin{definition}[Progressive Derivation Traces]
\label{def:Progressive Step/Progressive Derivation Trace}
In a preproof of system $\pfDLp$, given a derivation trace $\tau_1\tau_2...\tau_m...$ over a derivation path $...\nu_1\nu_2...\nu_m...$ ($m\ge 1$) starting from $\tau_1$ in node $\nu_1$, 
a CP pair $(\tau_i, \tau_{i+1})$ ($1\le i\le m$) of derivation $(\nu_i, \nu_{i+1})$ is called a ``progressive step'', if $(\tau_i, \tau_{i+1})$ is the following CP pair 
of an instance of rule $([\alpha]R)$: 
$$
\begin{gathered}
    \infer[^{([\alpha]R)}]
    {\nu_i:: (\Gamma\Rightarrow \tau_i :: (\sigma : [\alpha]\phi), \Delta),}
    {
    ...
    &
    \nu_{i+1}:: (\Gamma\Rightarrow \tau_{i+1} :: (\sigma' : [\alpha']\phi), \Delta)
    &
    ...
    }
\end{gathered}
;
$$
or the following CP pair of an instance of rule $([\alpha]L)$:
$$
\begin{gathered}
    \infer[^{([\alpha]L)}]
    {\nu_i:: (\Gamma, \tau_i :: (\sigma : [\alpha]\phi)\Rightarrow \Delta)}
    {
    \nu_{i+1}:: (\Gamma, \tau_{i+1} :: (\sigma' : [\alpha']\phi)\Rightarrow  \Delta)
    }
\end{gathered}
, 
$$
provided with an additional side deduction $\pfDLp\vdash (\Gamma\Rightarrow \sigma\termi \alpha, \Delta)$.

If a derivation trace has an infinite number of progressive steps, we say that the trace is ``progressive''.
\end{definition}

The additional side condition of the instance of rule $([\alpha]L)$ is the key to prove the corresponding case in 
Lemma~\ref{lemma:infinite descent sequence} (see Appendix~\ref{section:Other Propositions and Proofs}). 


\ifx
\begin{definition}[Cyclic Proof in $\DLp$]
\label{def:Cyclic Preproof}
In system $\pfDLp$, a proof tree is called `cyclic', if it is a preproof in which there exists a progressive trace over every derivation path.
\end{definition}
\fi

Theorem~\ref{theo:soundness of rules for LDL} shows that each rule of $\pfDLp$ is sound. But that does not mean that the proof system $\pfDLp$ is sound, because we need to make sure that each cyclic proof also leads to a valid conclusion. 
The soundness of the system $\pfDLp$ is fully discussed in Section~\ref{section:Proof of Theorem - theo:Soundness of A Cyclic Preproof}. 

\ifx
Theorem~\ref{theo:soundness of rules for LDL} shows that each rule of $\pfDLp$ is sound. But that does not mean that the proof system $\pfDLp$ is sound, because we need to make sure that each cyclic proof also leads to a valid conclusion. 
In Section~\ref{section:Proof of Theorem - theo:Soundness of A Cyclic Preproof}, we analyze and prove the soundness of the system $\pfDLp$ under a restriction on the program behaviours of $\Prog$ (stated as Theorem~\ref{theo:Soundness of A Cyclic Preproof} of Section~\ref{section:Proof of Theorem - theo:Soundness of A Cyclic Preproof}). 
\fi

\ifx
Since $\DLp$ is not a specific logic,
it is impossible to discuss about its decidability, completeness or whether it is cut-free without any restrictions on parameters $\Prog, \Fmla$ and $\Conf$. 
In Section~\ref{section:Proof of Theorem - theo:Soundness of A Cyclic Preproof}, we show that $\DLp$ is complete under a certain condition for the \emph{loop programs} of $\Prog$. 
Our future work will discuss more about these attributes under restrictions. 
\fi

\subsection{Lifting Rules From Dynamic Logic Theories}
\label{section:Lifting Process From Program Domains}
We introduce a technique of lifting the rules from particular dynamic-logic theories, e.g. FODL~\cite{Pratt76}, to the labeled ones in $\DLp$. 
It makes possible for embedding existing  dynamic-logic theories into $\DLp$ without losing their abilities of deriving based on programs' syntactic structures. 
This in turn facilitates deriving $\DLp$ formulas in particular program domains by making use of special inference rules.  
\ifx
This is useful in two aspects: 
firstly, it allows making use of existing rules to facilitate deriving $\DLp$ formulas in particular program domains;
secondly, it makes possible for embedding (at least partial) existing  dynamic-logic theories into $\DLp$, 
without losing their abilities of deriving based on programs' syntactic structures. 
\fi
Below, we propose a lifting process for general inference rules under a certain condition of labels (Proposition~\ref{prop:lifting process 2}). 
One example of the applications of this technique is given in Section~\ref{section:Lifting Process in While Programs}. 

\ifx
In a particular program domains like $\TA_\WP$ (Section~\ref{section:Examples of Term Structures}), we are interested in adapting a special rule 
to a labeled one as a part of extra theory $\mcl{E}$ of $\DLp$.
As pointed out in Section~\ref{section:Summery}, it allows $\DLp$ to make use of existing rules of a particular theory to facilitate derivations of $\DLp$ formulas, thus providing a flexible verification framework in which both symbolic-execution-based reasoning and structural reasoning are possible. 
Below we introduce a process of lifting logical consequences $\phi\to \psi$ and general inference rules under certain conditions. 
Logical consequences are commonly found as axioms in dynamic logics and their variations (cf.~\cite{???}). 
One example of applications of the lifting process will be given later in Section~\ref{section:Lifting Process in While Programs}. 
\fi

We first introduce the concept of \emph{free labels}, as a sufficient condition for the labels to carry out the lifting. 
Then we introduce the lifting process as Proposition~\ref{prop:lifting process 2}. 


\begin{definition}[Effect Equivalence]
\label{def:Same Effects}
    Two worlds $w,w'\in \Wd$ ``have the same effect'' w.r.t a set of unlabeled formulas $A\subseteq \DLF$, denoted by $w=_A w'$, if 
    for any $\phi\in A$, $w\models \phi$ iff $w'\models \phi$. 
    \ifx
    Given a label mapping $\sigma\in \LM$, a label $\sigma\in \Conf$, a world $w\in \Wd$ and a set $A$ of unlabeled formulas,  
the pair $(\lm, \sigma)$ ``has the same effect'' on $A$ as $w$, denoted by $(\lm, \sigma) \se_A w$, if for any $\phi\in A$, $\lm\models \sigma : \phi$ iff $w\models \phi$. 
    \fi
\end{definition}
 

\ifx
\begin{definition}[Standard Configuration]
\label{def:Simple Configurations}
    We say a configuration $\sigma\in\Conf$ is `standard' w.r.t. a set $A$ of formulas, if it satisfies that for any $\rho\in \Eval$, there exists a $\rho'\in \Eval$ such that 
    $(\rho, \sigma)\se_A \rho'$. 

    Simply call $\sigma$ `standard' if $A = \DLF$. We denote the set of all standard configurations w.r.t. $A$ as $\Conf_\std(A)$. 
\end{definition}

The effect of a standard configuration on formulas is equivalent to the effect of an evaluation. 
\fi


\begin{example}
\label{example:Effect Equivalence}
In $\DLpWP$, consider two worlds $w, w'\in \Wd_W$ (which are two mappings from $\Var_\fodl$ to $\mbb{Z}$) satisfying that $w(x) = 1, w(y) = 1$ and $w'(x) = 2, w'(y) = 1$ for some $x, y\in \Var_\fodl$. 
Let fomula $\phi = (x + y > 1)\in \Fmla_\afo$. 
Then $w =_{\{\phi\}} w'$, although $w\neq w'$. 

\end{example}

\ifx
\begin{example}
\label{example:normal configurations}
    In program domain $\Domain{\WP}$, let $\rho, \rho'\in \Eval_\WP$ be two evaluations satisfying $\rho(x) = 5, \rho(y) = 1$ and $\rho'(x) = 6, \rho'(y) = 1$. 
    Let $\sigma = \{x\mapsto x + 1\}\in \Conf_\WP$, formula $\phi = (x + y + 1 > 7)\in \Fmla_\WP$. 
    Then $(\rho, \sigma)\se_{\{\phi\}}\rho'$, since 
    $\rho(\app_\WP(\sigma, \phi))\equiv \rho((x + 1) + y + 1 > 7)\equiv (8 > 7)\equiv \rho'(\phi)$. 

    \ifx
    $\sigma$ is standard, because for any evaluation $\xi\in \Eval_\WP$, let $\xi'\dddef \xi^x_{\xi(x) + 1}$, 
    then we always have $(\xi, \sigma)\se \xi'$. 
    It is not hard to see that 
    the configurations in $\Conf_\WP$ and $\Conf_\E$ introduced in Section~\ref{section:Examples of Term Structures} are actually standard. 
    See a proof for $\Conf_\WP$ in Lemma~\ref{???} (Appendix~\ref{section:Formal Definitions of While Programs}). 
    \fi
\end{example}
\fi


\begin{definition}[Free Labels]
\label{def:free configurations}
A label $\sigma\in \Conf$ is called ``free'' w.r.t. a set $A$ of formulas if 
for any world $w\in \Wd$, there exists a label mapping $\lm\in \LM$ such that $$w =_A \lm(\sigma).$$ 
We denote the set of all free labels w.r.t. $A$ as $\free(\Conf,A)$. 
\end{definition}

Intuitively, the freedom of a label $\sigma$ w.r.t. a set $A$ of formulas means that $\sigma$ is general enough so that it does not have an impact on the validity of the formulas in $A$.

\begin{example}
    In $\DLpWP$, 
    let $\sigma = \{x\mapsto t + 1\}\in \Conf_W$ (with $t$ a variable). 
    $\sigma$ is free w.r.t. $\{\phi\}$ for $\phi = (x + y > 1)$ (Example~\ref{example:Effect Equivalence}). 
    Because for any world $w\in \Wd_W$, let $w'\dddef w[t\mapsto x - 1]$, then we have $w =_{\{\phi\}}m_{w'}(\sigma)$, since $w(x) = \lm_{w'}(\sigma)(x)$ and $w(y) = \lm_{w'}(\sigma)(y)$. 
    On the other hand, let $\sigma' = \{x\mapsto 0, y\mapsto 0\}$, 
    then $\sigma'$ is not free w.r.t. $\{\phi\}$. 
    Because for any $w\in \Wd_W$, $\lm_w(\sigma')(x) = \lm_w(\sigma')(y) = 0$, so $\lm_w(\sigma')\not\models \phi$. 
    
    We see that compared to $\sigma$, $\sigma'$ is too ``explicit'' so that it affects the validity of formula $\phi$. 
\end{example}

\ifx
\begin{definition}[Free Configurations]
\label{def:free configurations}

A configuration $\sigma$ is called `free' w.r.t. a set $A$ of formulas if it satisfies that 
\begin{enumerate}
    \item\label{item:free configurations cond 1} for any $\rho\in \Eval$, there is a $\rho'\in \Eval$ such that $(\rho, \sigma)\se_A \rho'$; 
    \item\label{item:free configurations cond 2} for any $\rho\in \Eval$, there is a $\rho'\in \Eval$ such that 
    $(\rho', \sigma)\se_A \rho$.
\end{enumerate}

We denote the set of all free configurations w.r.t. $A$ as $\free(\Conf,A)$. 
\end{definition}

Intuitively, a free configuration in some sense neither strengthen nor weaken a formula w.r.t. its boolean semantics after affections. 

For a configuration $\sigma\in \Conf_\WP$, 
recall that $e^x_e$ is explained in Example~\ref{example:Proof system for program behaviours}.  

\begin{example}
    In Example~\ref{example:normal configurations}, 
    the configuration $\sigma$ is free w.r.t. $\{\phi\}$.
    On one hand, for any evaluation $\xi\in \Eval_\WP$, let $\xi'\dddef \xi^x_{\xi(x) + 1}$, 
    then we have $(\xi, \sigma)\se_{\{\phi\}} \xi'$ (Definition~\ref{def:free configurations}-\ref{item:free configurations cond 1});
    On the other hand, 
    for any evaluation $\xi$, let $\xi'\dddef \xi^x_{\xi(x) - 1}$, 
    then $(\xi', \sigma)\se_{\{\phi\}} \xi$ holds (which is Definition~\ref{def:free configurations}-\ref{item:free configurations cond 2}). 
    Let $\sigma' = \{x \mapsto 0, y\mapsto 0\}$, then 
    $\sigma'$ is not free w.r.t. $\{\phi\}$, since it makes $\phi$ too strong so that there does not exist an evaluation $\xi$ such that 
    $(\xi, \sigma')\se_{\{\phi\}} \rho'$, violating Definition~\ref{def:free configurations}-\ref{item:free configurations cond 2}. 
    
\end{example}

\fi

For a set $A$ of unlabeled formulas, 
we write $\sigma : A$ to mean the set of labeled formulas $\{\sigma : \phi\ |\ \phi\in A\}$. 

\begin{proposition}[Lifting Process]
\label{prop:lifting process 2}
Given a sound rule of the form
$$
\begin{aligned}
    \infer[]
    {\Gamma\Rightarrow \Delta}
    {\Gamma_1\Rightarrow \Delta_1
    &...&
    \Gamma_n\Rightarrow \Delta_n}
\end{aligned}, \mbox{ $n\ge 1$}, 
$$
in which all formulas are unlabeled, then the rule 
$$
\begin{aligned}
    \infer[]
    {\sigma : \Gamma\Rightarrow \sigma : \Delta}
    {\sigma : \Gamma_1\Rightarrow \sigma : \Delta_1
    &...&
    \sigma : \Gamma_n\Rightarrow \sigma : \Delta_n}
\end{aligned}
$$
is sound for any label $\sigma\in \free(\Conf, \Gamma\cup \Delta\cup \Gamma_1\cup \Delta_1\cup...\cup\Gamma_n\cup\Delta_n)$. 
\end{proposition}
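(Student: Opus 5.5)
We argue by contraposition, reducing the lifted rule to the original rule through the definition of free labels. Let $A\dddef \Gamma\cup\Delta\cup\Gamma_1\cup\Delta_1\cup\dots\cup\Gamma_n\cup\Delta_n$, and fix $\sigma\in\free(\Conf,A)$. Assume all premises $\sigma:\Gamma_i\Rightarrow\sigma:\Delta_i$ are valid. We want the conclusion $\sigma:\Gamma\Rightarrow\sigma:\Delta$ to be valid, i.e.\ for every $\lm\in\LM$ with $\lm\models\sigma:\Gamma$, some $\phi\in\Delta$ satisfies $\lm\models\sigma:\phi$.

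The first step is a transfer observation. By Definition~\ref{definition:Semantics of labeled Dlp Formulas}, $\lm\models\sigma:\phi$ iff $\Kr,\lm(\sigma)\models\phi$. Hence for any set $B\subseteq A$ and any world $w$ with $w=_A\lm(\sigma)$, we have $\lm\models\sigma:B$ iff $w\models B$, using Definition~\ref{def:Same Effects} formula by formula. The same equivalence holds for the disjunctive reading of a right-hand side: some member of $\sigma:B$ holds under $\lm$ iff some member of $B$ holds at $w$.

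The second step shows that each unlabeled premise $\Gamma_i\Rightarrow\Delta_i$ is valid in the sense $\Kr,w\models$ for all $w\in\Wd$. Take any world $w$ with $w\models\Gamma_i$. Freeness of $\sigma$ (Definition~\ref{def:free configurations}) supplies a label mapping $\lm$ with $w=_A\lm(\sigma)$. By the transfer observation, $\lm\models\sigma:\Gamma_i$. Validity of the labeled premise then yields $\lm\models\sigma:\phi$ for some $\phi\in\Delta_i$, and transferring back gives $w\models\phi$. So every premise of the original rule is valid, and soundness of that rule gives validity of $\Gamma\Rightarrow\Delta$ at every world.

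The final step recovers the labeled conclusion. Given any $\lm$ with $\lm\models\sigma:\Gamma$, set $w\dddef\lm(\sigma)$; trivially $w=_A\lm(\sigma)$. Then $w\models\Gamma$, so $w\models\phi$ for some $\phi\in\Delta$, and hence $\lm\models\sigma:\phi$. This direction does not even need freeness.

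The main point needing care is the precise reading of ``sound rule'' for unlabeled sequents. We take it to mean that validity of the premises at all worlds of $\Kr$ implies validity of the conclusion at all worlds, which is the natural reading given Definition~\ref{def:Semantics of dynamic logical formulas} and the notion $\Kr\models\phi$. It is exactly this global reading that makes freeness necessary: freeness is used only in the second step, to pull labeled validity down to world-level validity of the premises, because an arbitrary world must be realized, up to $=_A$, as $\lm(\sigma)$. We would also note that $A$ must contain the formulas of both premises and conclusion, so that a single $\lm$ witnesses effect-equivalence for all of them at once; the statement's choice of $A$ guarantees this.
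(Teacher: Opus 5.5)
Your proof is correct and follows the paper's own argument essentially step for step. You use freeness of $\sigma$ to turn validity of each labeled premise $\sigma:\Gamma_i\Rightarrow\sigma:\Delta_i$ into world-level validity of $\Gamma_i\Rightarrow\Delta_i$, apply soundness of the unlabeled rule, and then lift $\Gamma\Rightarrow\Delta$ back by evaluating at $\lm(\sigma)$, where, as you rightly observe, freeness is not needed. The only quibble is that your argument is direct, not by contraposition as your opening sentence announces.
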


Proposition~\ref{prop:lifting process 2} is proved in Appendix~\ref{section:Other Propositions and Proofs} based on the notion of free labels defined above. 

\ifx
\begin{example}
    ???
\end{example}
\fi

\ifx
From the proof of Proposition~\ref{prop:lifting process 2}, we can see that when the rule is an axiom, we actually only need Definition~\ref{def:free configurations}-\ref{item:free configurations cond 1}, as stated in the following proposition, where we call a configuration $\sigma$ \emph{standard} w.r.t. $A$ if it satisfies  Definition~\ref{def:free configurations}-\ref{item:free configurations cond 1}, and write $\Conf_\std(A)$ as the set of all standard configurations w.r.t. $A$. 

\begin{proposition}[Simple Version of Proposition~\ref{prop:lifting process 2}]
\label{prop:lifting process simple}
Given a sound axiom
$
\begin{aligned}
    \infer[]
    {\Gamma\Rightarrow \Delta}
    {}
\end{aligned}, 
$
rule 
$
\begin{aligned}
    \infer[]
    {\sigma : \Gamma\Rightarrow \sigma : \Delta}
    {}
\end{aligned}
$
is sound for any standard configuration $\sigma\in \Conf_\std(\Gamma\cup \Delta)$. 
\end{proposition}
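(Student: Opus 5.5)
The plan is a direct semantic argument. It uses only the defining property of standard configurations, namely condition~1 of the earlier notion of freedom. That condition says: for every evaluation $\rho\in\Eval$ there is an evaluation $\rho'\in\Eval$ with $(\rho,\sigma)\se_A\rho'$, where $A = \Gamma\cup\Delta$. Here $(\rho,\sigma)\se_A\rho'$ means that for every $\phi\in A$, $\rho$ satisfies $\sigma:\phi$ iff $\rho'$ satisfies $\phi$.

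Since the rule is an axiom, soundness just means that the conclusion $\sigma:\Gamma\Rightarrow\sigma:\Delta$ is valid. Likewise, soundness of the unlabeled axiom means that $\Gamma\Rightarrow\Delta$ is valid. So no premises need to be transported, and only one direction of the transfer between labeled and unlabeled satisfaction is needed. This is why standardness, rather than full freedom, should suffice.

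The steps are as follows.
\begin{enumerate}
\item Fix an arbitrary $\rho$ satisfying every formula of $\sigma:\Gamma$. The goal is to show that $\rho$ satisfies some formula of $\sigma:\Delta$.
\item Apply standardness of $\sigma$ w.r.t.\ $\Gamma\cup\Delta$ to obtain $\rho'$ with $(\rho,\sigma)\se_{\Gamma\cup\Delta}\rho'$.
\item By the left-to-right direction of $\se$ on the formulas of $\Gamma$, $\rho'$ satisfies every $\phi\in\Gamma$.
\item By validity of $\Gamma\Rightarrow\Delta$, $\rho'$ satisfies some $\psi\in\Delta$.
\item Since $\psi\in\Gamma\cup\Delta$, the right-to-left direction of $\se$ gives that $\rho$ satisfies $\sigma:\psi$.
\end{enumerate}
This shows that $\sigma:\Gamma\Rightarrow\sigma:\Delta$ holds under $\rho$. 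As $\rho$ was arbitrary, the labeled axiom is valid.

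The only point needing care is that $\se_A$ is a biconditional over the whole set $A=\Gamma\cup\Delta$. The forward direction is used on $\Gamma$ and the backward direction on $\Delta$. Both directions come for free from the single witness $\rho'$, so no second witness is needed; that second witness is exactly condition~2 of freedom, which the general Proposition~\ref{prop:lifting process 2} requires in order to push validity of the premises through.

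I expect no real obstacle. The main thing to check is that the empty-$\Delta$ case, and multiset repetitions, are handled by the same argument. An empty $\Delta$ forces the unlabeled $\Gamma$ to be unsatisfiable, hence so is $\sigma:\Gamma$ via step~3.
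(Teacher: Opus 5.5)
Your proposal is correct and follows essentially the same route as the paper. The paper's argument for Proposition~\ref{prop:lifting process 2} transfers validity of $\Gamma\Rightarrow\Delta$ to $\sigma:\Gamma\Rightarrow\sigma:\Delta$ through a single witness $\rho'$ with $(\rho,\sigma)\se_A\rho'$, which is condition~1 only; condition~2 is needed only to move validity back from the labeled premises, and an axiom has no premises. You use the forward direction on $\Gamma$ and the backward direction on $\Delta$, exactly as that argument does.
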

\fi


\section{Case Studies}
\label{section:Case Studies}
In this section, we illustrate the potential usage of $\DLp$ by several instances. 

We firstly show how labeled dynamic formulas in $\DLp$ can be derived according to the cyclic proof system proposed in Table~\ref{table:General Rules for LDL}. 
We give an example of deduction for a while program (Section~\ref{section:Case Study}). In Appendix~\ref{section:Example Two: A Synchronous Loop Program}, 
we briefly introduce another instantiation of $\DLp$ for the synchronous language Esterel~\cite{Berry92} and show a cyclic derivation of an Esterel program. 
The second example better highlights the advantages of $\DLp$ since the loop structures of some Esterel programs are implicit. 

Secondly, we take the rules in FODL as examples to illustrate how to carry out rule lifting in $\DLp$ (Section~\ref{section:Lifting Process in While Programs}). 
It demonstrates the compatibility of $\DLp$ to the existing dynamic-logic theories, allowing them to be reused in $\DLp$. As we can see, this also helps increasing the efficiency of the derivations in $\DLp$ by adopting the compositional rules in special domains.

In Section~\ref{section:Dynamic Logic LDL} and~\ref{section:A Cyclic Proof System for LDL}, we have seen the instantiation theory $\DLpWP$ of $\DLp$. 
Section~\ref{section:Instantiation of FODL in DLp} and~\ref{section:Encoding of Complex Configurations} further introduces more complex instantiations of $\DLp$. 

In Section~\ref{section:Instantiation of FODL in DLp}, We embed FODL theory into $\DLp$. 
This example shows the potential usefulness of $\DLp$ for different program models in practice, because FODL is the basic theory underlying many dynamic-logic variations (such as~\cite{Beckert2016,Platzer07b,Zhang22,Pardo22}). 
By an example of reasoning about both while programs and regular programs at the same time, 
we also show the heterogeneity of $\DLp$, that different program models can be easily compared with different operational semantics. 

In Section~\ref{section:Encoding of Complex Configurations}, more complex encoding of labels and formulas are further displayed. 
We propose to encode a first-ordered version of process logic~\cite{Harel82} into $\DLp$.
Process logic provides a logical framework for not just reasoning about program properties after the terminations, but also properties during the executions. 
From this example, we see that in $\DLp$ the labels and formulas can be more expressive than in traditional Hoare-style logics where only before-after properties can be reasoned about. 
In Appendix~\ref{section:An Encoding of Separation Logic in DLp}, we give another example to show this by encoding separation logic~\cite{Reynolds02} in $\DLp$. 
It provides a novel way of reasoning about separation-logic formulas directly through symbolic executions. 

\ifx
In Section~\ref{section:Encoding of Complex Configurations} more complex encoding of labels and formulas are further displayed, where 
we propose to encode (partial) separation logic~\cite{Reynolds02} in $\DLp$. 
From this example, we see that in $\DLp$ the labels and formulas can be more expressive than in traditional Hoare-style logics where only before-after properties can be reasoned about. 
The example also provides a novel way of reasoning about separation-logic formulas directly through symbolic executions. 
\fi

\ifx
Note that our following instantiations (i.e. $\DLpFODL, \DLpPL$ and $\DLpSP$) might not be complete, which means that currently we are not sure whether they are (relative) complete or have the same expressiveness as their original counterparts. 
Making a full analysis of these theories is definitely out of the scope of this paper. 
However, our analysis provide strong proofs of the potential usage of $\DLp$ in different domains. 
\fi

\subsection{A Cyclic Deduction of A While Program}
\label{section:Case Study}

We prove the property in Example~\ref{example:DLp specifications} according to the rules in Table~\ref{table:General Rules for LDL}. 
This property can be captured by the following equivalent labeled sequent
$$
\nu_1 \dddef \sigma_1: n \ge 0 \Rightarrow  \sigma_1 : [\textit{WP}] (s = ((N+1)N)/2), 
$$
where $\sigma_1 \dddef \{n\mapsto N, s\mapsto 0\}$,  
describing the initial configuration of $\WP$. 

\renewcommand{\arraystretch}{1}
\begin{table}[tb]
        \noindent\makebox[\textwidth]{%
        \scalebox{1}{
        \begin{tabular}{l|l}
        \toprule
        \begin{tabular}{c}
        \begin{tikzpicture}[->,>=stealth', node distance=3cm]
        \node[draw=none] (txt2) {
            $
            \infer[^{(\Sub)}]
                {\mbox{$\nu_1$: 1}}
                {
                    \infer[^{(\textit{Cut})}]
                    {2}
                    {
                        \infer[^{(\textit{Wk} R)}]
                        {3}
                        {
                            \infer[^{(\textit{Ter})}]
                            {17}
                            {}
                        }
                        &    
                        \infer[^{(\vee L)}]
                        {4}
                        {\infer[^{([\alpha]R)}]
                            {5}
                            {
                            \infer[^{([\alpha]R)}]
                                {9}
                                {
                                    \infer[^{(\textit{Cut})}]
                                    {10}
                                    {
                                        \infer[^{(\textit{Wk} L)}]
                                        {11}
                                        {
                                            \infer[^{(\Sub)}]
                                            {14}
                                            {
                                                \infer[^{(\textit{Wk} L)}]
                                                {15}
                                                {16}
                                            }
                                        }
                                        &
                                        \infer[^{(\textit{Wk} R)}]
                                        {12}
                                        {
                                            \infer[^{(\textit{Ter})}]
                                            {13}
                                            {}
                                        }
                                    }
                                }
                            }
                        &
                        \infer[^{([\alpha]R)}]
                            {6}
                            {
                            \infer[^{([\ter])}]
                                {7}
                                {\infer[^{(\textit{Ter})}]
                                    {8}
                                    {}
                                }
                            }
                        }
                    }
                }
            $
        };


        \path
        ;


        \draw[dotted,thick,red] ([xshift=-1.35cm, yshift=1.6cm]txt2.center) -- 
        ([xshift=-1.6cm, yshift=1.6cm]txt2.center) --
        ([xshift=-1.6cm, yshift=-1.25cm]txt2.center) --
        ([xshift=-1.1cm, yshift=-1.25cm]txt2.center);

        \end{tikzpicture}
        \end{tabular}
        &
        \begin{tabular}{l}
            Definitions of other symbols:
            \\
            $
            \textit{WP}
             \dddef
            \{
            \textit{while}\
            (n > 0)\
            \textit{do}\
            s := s + n\ ;\
            n := n - 1\
            \textit{end}\
           \}$
            \\
            $\alpha_1\dddef s := s + n\ ;\ n := n - 1$
            \\
            $\phi_1 \dddef (s = ((N + 1)N)/2)$
            \\
            $\sigma_1 \dddef \{n\mapsto N, s\mapsto 0\}$
            \\
            $\sigma_2\dddef \{n\mapsto N - m,
            s\mapsto (2N - m + 1)m/2\}$
            \\
            $\sigma_3 \dddef \{n\mapsto N - m,
            s\mapsto (2N - (m + 1) + 1)(m + 1)/2\}$
            \\
            $\sigma_4\dddef \{n\mapsto N - (m + 1),
            s\mapsto (2N - (m + 1) + 1)(m + 1)/2\}$
            \\
            \end{tabular}
        \\
        \midrule
        \multicolumn{2}{c}{
            \begin{tabular}{l c c c}
            1: & $\sigma_1 : n \ge 0$ & $\Rightarrow$ & $\sigma_1 : [\textit{while}\
    (n > 0)\
    \textit{do}\
    \alpha_1\
    \textit{end}\ ]\phi_1$
            \\
            2: & $\sigma_2 : n\ge 0$ & $\Rightarrow$ & $\mbox{\ul{$\sigma_2 : [\textit{while}\
    (n > 0)\
    \textit{do}\
    \alpha_1\
    \textit{end}\ ]\phi_1$}}$
            \\
            3: & $\sigma_2 : n\ge 0$ & $\Rightarrow$ & 
            $
                 \sigma_2 : [\textit{while}\
    (n > 0)\
    \textit{do}\
    \alpha_1\
    \textit{end}\ ]\phi_1
            , \sigma_2 : (n > 0 \vee n\le 0)
            $
            \\
            17: & $\sigma_2 : n\ge 0$ & $\Rightarrow$ & $\sigma_2 : (n > 0 \vee n\le 0)$
            \\
            \midrule
            4: & $\sigma_2: n\ge 0, \sigma_2 : (n > 0 \vee n\le 0)$ & $\Rightarrow$ & $\mbox{\ul{$\sigma_2 : [\textit{while}\
    (n > 0)\
    \textit{do}\
    \alpha_1\
    \textit{end}\ ]\phi_1$}}$
            \\
            5: & $\sigma_2: n\ge 0, \sigma_2 : n > 0$ & $\Rightarrow$ & $\mbox{\ul{$\sigma_2 : [\textit{while}\
    (n > 0)\
    \textit{do}\
    \alpha_1\
    \textit{end}\ ]\phi_1$}}$
            \\
            9: & $\sigma_2: n\ge 0, \sigma_2 : n > 0$ & $\Rightarrow$ & $\mbox{\ul{$\sigma_3 : [n := n - 1;\ \textit{while}\
    (n > 0)\
    \textit{do}\
    \alpha_1\
    \textit{end}\ ]\phi_1$}}$
            \\
            10: & $\sigma_2: n\ge 0, \sigma_2 : n > 0$ & $\Rightarrow$ & $\mbox{\ul{$\sigma_4 : [\textit{while}\
    (n > 0)\
    \textit{do}\
    \alpha_1\
    \textit{end}\ ]\phi_1$}}$
            \\
            11: & $
                \sigma_2: n\ge 0, \sigma_2 : n > 0,
                \sigma_4: n \ge -1, \sigma_4: n \ge 0$
            & $\Rightarrow$ & $\mbox{\ul{$\sigma_4 : [\textit{while}\
    (n > 0)\
    \textit{do}\
    \alpha_1\
    \textit{end}\ ]\phi_1$}}$
            \\
            14: & $\sigma_4 : n \ge -1, \sigma_4: n \ge 0$ & $\Rightarrow$ & $\mbox{\ul{$\sigma_4 : [\textit{while}\
    (n > 0)\
    \textit{do}\
    \alpha_1\
    \textit{end}\ ]\phi_1$}}$
            \\
            15: & $\sigma_2: n \ge -1, \sigma_2: n \ge 0$ & $\Rightarrow$ & $\mbox{\ul{$\sigma_2 : [\textit{while}\
    (n > 0)\
    \textit{do}\
    \alpha_1\
    \textit{end}\ ]\phi_1$}}$
            \\
            16: & $\sigma_2: n \ge 0$ & $\Rightarrow$ & $\mbox{\ul{$\sigma_2 : [\textit{while}\
    (n > 0)\
    \textit{do}\
    \alpha_1\
    \textit{end}\ ]\phi_1$}}$
            \\
            \midrule
            12: & $\sigma_2 : n\ge 0, \sigma_2 : n > 0$ & $\Rightarrow$ & 
            $
                 \sigma_4 : [\textit{while}\
    (n > 0)\
    \textit{do}\
    \alpha_1\
    \textit{end}\ ]\phi_1
                 , \sigma_4: n \ge -1, \sigma_4: n \ge 0$
            \\
            13: & $\sigma_2 : n\ge 0, \sigma_2 : n > 0$ & $\Rightarrow$ & 
            $
                 \sigma_4 : n \ge -1, \sigma_4 : n \ge 0
            $
            \\
            \midrule
            6: & $\sigma_2 : n\ge 0, \sigma_2 : n \le 0$ & $\Rightarrow$ & $\sigma_2 : [\textit{while}\
    (n > 0)\
    \textit{do}\
    \alpha_1\
    \textit{end}\ ]\phi_1$
            \\
            7: & $\sigma_2 : n\ge 0, \sigma_2 : n \le 0$ & $\Rightarrow$ &  $\sigma_2 : [\ter]\phi_1$
            \\
            8: & $\sigma_2 : n\ge 0, \sigma_2 : n \le 0$ & $\Rightarrow$ & $\sigma_2 : (s=((N+1)N)/2)$
            \\
            \end{tabular}
        }
        \\
        \bottomrule
        \end{tabular}
        }
         }
        \caption{A Derivation of Property $\nu_1$}
        \label{figure:The derivation of Example 1}
\end{table}

Table~\ref{figure:The derivation of Example 1} shows its derivations.
We omit all side deductions as sub-proof procedures in instances of rule $([\alpha]R)$ derived using the inference rules in 
Table~\ref{table:An Example of Inference Rules for Program Behaviours}. 
Non-primitive rule $(\vee L)$ can be derived by the rules for $\neg$ and $\wedge$ as follows: 
$$
\begin{aligned}
    \infer[^{(\neg L)}]
    {\Gamma, \phi\vee \psi\Rightarrow \Delta}
    {
        \infer[^{(\wedge R)}]
        {\Gamma\Rightarrow (\neg\phi) \wedge (\neg\psi), \Delta}
        {
            \infer[^{(\neg R)}]
            {\Gamma\Rightarrow \neg\phi, \Delta}
            {\Gamma, \phi\Rightarrow \Delta}
        &
            \infer[^{(\neg R)}]
            {\Gamma\Rightarrow \neg\psi, \Delta}
            {\Gamma, \psi\Rightarrow \Delta}
        }
    }
\end{aligned}
.
$$
\ifx
For example, rule $(\vee L)$ can be derived as follows:
$$
\begin{aligned}
    \infer[^{(\neg L)}]
    {\Gamma, \phi\vee \psi\Rightarrow \Delta}
    {
        \infer[^{(\wedge R)}]
        {\Gamma\Rightarrow (\neg\phi) \wedge (\neg\psi), \Delta}
        {
            \infer[^{(\neg R)}]
            {\Gamma\Rightarrow \neg\phi, \Delta}
            {\Gamma, \phi\Rightarrow \Delta}
        &
            \infer[^{(\neg R)}]
            {\Gamma\Rightarrow \neg\psi, \Delta}
            {\Gamma, \psi\Rightarrow \Delta}
        }
    }
\end{aligned}
.
$$
\fi

The derivation from sequent 1 to 2 (also the derivation from 14 to 15) is according to the rule $(\Sub)$:
$$
\begin{aligned}
    \infer[^{(\Sub)}]
    {\Gamma[e/x]\Rightarrow \Delta[e/x]}
    {\Gamma\Rightarrow\Delta}, 
\end{aligned}
$$
where the function $(\cdot)[e/x]$ is an instantiation of the abstract subsitution defined in Definition~\ref{def:Substitution of Labels}. 
For any label $\sigma$, $\sigma[e/x]$ returns the label by substituting each free variable $x$ of $\sigma$ with term $e$.  
We observe that $\sigma_1 = \sigma_2[0/m]$, so sequent 1 is a special case of sequent 2 by substitution $(\cdot)[0/m]$. 
\ifx
can be writen as: 
$$\sigma_2[0/m] : n\ge 0\Rightarrow \sigma_2[0/m] : [\textit{while}\
    (n > 0)\
    \textit{do}\
    \alpha_1\
    \textit{end}\ ]\phi_1, $$
with $\sigma_1 = \sigma_2[0/m]$, 
as a special case obtained from sequent 2 by substitution $(\cdot)[0/m]$.  
\fi
Intuitively, label $\sigma_2$ captures the program configuration after the $m$th loop ($m\ge 0$) of program $\WP$. 
This step is crucial as starting from sequent 2, we can find a bud node --- 16 --- that is identical to node 2. 

The derivation from sequent 2 to \{3, 4\} provides a lemma: 
$\sigma_2 : (n > 0\vee n\le 0)$, which is trivially valid. 
Sequent 16 indicates the end of the $(m+1)$th loop of program $\WP$. 
From node 10 to 16, we transform the formulas on the left side into a trivial logical equivalent form in order to apply rule $(\Sub)$ from sequent 14 to 15. 
Sequent 14 is a special case of sequent 15 since $\sigma_4 = \sigma_2[m + 1/m]$. 
\ifx
can be written as:
$$
\sigma_2[m + 1/m] : n \ge -1, \sigma_2[m + 1/m] : n \ge 0\Rightarrow \sigma_2[m + 1/m] : [\textit{while}\
    (n > 0)\
    \textit{do}\
    \alpha_1\
    \textit{end}\ ]\phi_1, 
$$
with $\sigma_4 = \sigma_2[m + 1/m]$. 
\fi


The whole proof tree is cyclic because the only derivation path: $2,4,5,9,10,11,14,15,16,2,...$
has a progressive derivation trace whose elements are underlined in Table~\ref{figure:The derivation of Example 1}.


One feature of the above deduction process is that the loop structure of the while program $\WP$ (i.e. $\textit{while}...\textit{do}...\textit{end}$) is reflected in the cyclic derivation tree itself. 
To reason about $\WP$ one does not need the inference rule for decomposing the loop structure. 
This is useful especially in program models in which loop structures are usually implicit, such as CCS-like process algebras~\cite{Milner82,Milner92} and imperative synchronous languages~\cite{Berry92,Schneider17}.
$\DLp$ provides an incremental reasoning in which we can avoid prior program transformations as done in work like~\cite{DLForCCS,Schneider17}.

\ifx
Compared to the deduction processes in traditional dynamic logics and Hoare logics, a notable feature of the above deduction process is that the search for a loop invariant is reflected in looking for a suitable configuration (i.e. $\sigma_2$). 
One advantage brought by this cyclic derivation approach is that it does not rely on the inference rule for decomposing an explicit loop structure (here $\textit{while}...\textit{do}...\textit{end}$), which also makes it easily amendable for reasoning about programs with implicit loop structures, such as CCS-like process algebras~\cite{Milner82,Milner92} and some synchronous languages~\cite{Berry92,Schneider17}.  
\fi



\ifx
As a demonstration of its powerfulness,
in Appendix~\ref{section:Example Two: A Synchronous Loop Program}, 
we briefly introduce another instantiation of $\DLp$ for the synchronous language Esterel~\cite{Berry92} and show a cyclic derivation of an Esterel program. 
That example can better highlight the advantages of $\DLp$ since the loop structures of some Esterel programs are implicit. 
In Section~\ref{section:Instantiation of FODL in DLp}, we also briefly show that FODL~\cite{Pratt76} can be instantiated in $\DLp$. 
\fi

\subsection{Lifting Rules From FODL}
\label{section:Lifting Process in While Programs}
Two examples in $\DLpWP$ are given to illustrate how the existing inference rules from the theory of FODL (cf.~\cite{Harel00}) can be applied for deriving the compositional while programs through the lifting processes as defined in Section~\ref{section:Lifting Process From Program Domains}.  


In FODL, consider the rule
$$
\infer[^{([\seq])}]
{\Gamma \Rightarrow [\alpha\seq \beta]\phi, \Delta}
{\Gamma \Rightarrow [\alpha][\beta]\phi, \Delta},
$$
which means that to prove formula $[\alpha\seq \beta]\phi$, we only need to prove formula $[\alpha][\beta]\phi$ in which program $\alpha$ is firstly proved separated from program $\beta$.  
It comes from the valid formula $[\alpha][\beta]\phi\rightarrow [\alpha\seq \beta]\phi$, acting
as a compositional rule appearing in many dynamic logic calculi that are based on FODL (e.g.~\cite{Beckert2016}). 
By Proposition~\ref{prop:lifting process 2}, in $\DLpWP$, we can lift $([;])$ 
as a rule
$$
    \infer[^{(\sigma[\seq])}]
    {\sigma : \Gamma\Rightarrow \sigma : [\alpha\seq\beta]\phi, \sigma: \Delta}
    {\sigma : \Gamma\Rightarrow \sigma : [\alpha][\beta]\phi, \sigma : \Delta}, 
$$
where $\sigma$ is a free configuration in $\free(\Conf, \Gamma\cup \Delta\cup \{[\alpha][\beta]\phi, [\alpha\seq\beta]\phi\})$. 
As an additional rule, in system $\pfDLp$, $(\sigma[\seq])$ provides a compositional reasoning for sequential programs. 
    It is useful when verifying a property like $\Gamma, \sigma' : [\beta]\phi\Rightarrow \sigma : [\alpha\seq \beta]\phi, \Delta$, in which we might finish the proof by only symbolic executing program $\alpha$ as: 
    $$
    \infer[^{(\sigma[\seq])}]
    {\Gamma, \sigma' : [\beta]\phi\Rightarrow \sigma : [\alpha\seq\beta]\phi, \Delta}
    {
        \infer[^{([\alpha]R)}]
        {
        \Gamma, \sigma' : [\beta]\phi\Rightarrow \sigma : [\alpha][\beta]\phi, \Delta
        }
        {
            \infer*[]
            {...}
            {
                \infer[^{([\alpha]R)}]
                {...}
                {
                    \infer[^{([\ter])}]
                    {\Gamma, \sigma' : [\beta]\phi\Rightarrow \sigma' : [\ter][\beta]\phi, \Delta}
                    {
                        \infer[^{(\textit{ax})}]
                        {\Gamma, \sigma' : [\beta]\phi\Rightarrow \sigma' : [\beta]\phi, \Delta}
                        {}
                    }
                }
            }
        }
    }, 
    $$
    especially when verifying the program $\beta$ can be very costly. 

Another example is the rule
$$
\infer[^{([\textit{Gen}])}]
{[\alpha]\phi\Rightarrow [\alpha]\psi}
{\phi\Rightarrow \psi}
$$
for generating modality $[\cdot]$, 
which were used for deriving the structural rule of star regular programs in FODL (cf.~\cite{Harel00}). 
By Proposition~\ref{prop:lifting process 2}, we lift $([\textit{Gen}])$ as the following rule:
$$
\infer[^{(\sigma [\textit{Gen}])}]
{\sigma : [\alpha]\phi\Rightarrow \sigma : [\alpha]\psi}
{\sigma : \phi\Rightarrow \sigma : \psi}, 
$$
where $\sigma\in \free(\Conf, \{[\alpha]\phi, [\alpha]\psi, \phi, \psi\})$. 
It is useful, for example, when deriving a property
$\sigma : [\alpha]\la\beta\ra \phi\Rightarrow \sigma : [\alpha]\la\beta'\ra\psi$, where 
we can skip the derivation of program $\alpha$ as follows: 
$$
\infer[^{(\sigma [\textit{Gen}])}]
{\sigma : [\alpha]\la\beta\ra \phi\Rightarrow \sigma : [\alpha]\la\beta'\ra\psi}
{\sigma : \la\beta\ra\phi\Rightarrow \sigma : \la\beta'\ra\psi},  
$$
and directly focus on deriving the programs $\beta$ and $\beta'$. 

From these two examples it can be seen that in practical derivations, lifting process can be used to reduce the burden of certain verifications.

\ifx
\subsection{Instantiation of $\DLp$ as PDL}
\label{section:Instantiation of PDL in DLp}

\cite{Docherty19} proposed a labeled cyclic proof system for PDL, where.... 
Below we show that an alternative version of cyclic proof system for PDL can be realized in $\DLp$ under a certain instantiation. 
As PDL/FODL is the basic theory underlying many dynamic-logic variations (such as~\cite{Beckert2016,Platzer07b,Zhang22,Pardo22}), this example shows that how $\DLp$ can be potentially useful in specifying and reasoning about different programming languages and models. 

(introduce about how they have done in the previous work. ???)

In $\DLp$, we can realize a similar theory to~\cite{Docherty19} by specifying a suitable set $\Conf_\pdl$ of labels.  
We define a label $\sigma\in \Conf_\pdl$ as the form: 
$$
\sigma \dddef (x, \rel), 
$$
where $x$ is a world variable, $\rel$ is a set of action relations. 
With this, we thus instantiate the set $\pfOper$ for regular programs as shown in Table~\ref{???}, namely $(\pfOper)_\pdl$. 

\begin{table}[tb]
         \begin{center}
         \noindent\makebox[\textwidth]{%
         \scalebox{0.9}{
         \begin{tabular}{c}
         \toprule
        $
         \infer[^{(a)}]
         {\Gamma\Rightarrow (a, x)\trans(\ter, y), \Delta}
         {
         \Gamma\Rightarrow x : (x\xrightarrow{a} y), \Delta
         }
         $
         \ \ 
         $
         \infer[^{(\phi?)}]
         {\Gamma\Rightarrow (\phi?, \sigma)\trans(\ter, \sigma), \Delta}
         {
         \Gamma\Rightarrow \sigma : \phi, \Delta
         }
         $
         \\
         $
         \infer[^{1\ (;)}]
         {\Gamma \Rightarrow (\alpha\seq\beta, \sigma)\trans (\alpha'\seq \beta, \sigma'), \Delta}
         {
            \Gamma\Rightarrow (\alpha, \sigma)\trans (\alpha', \sigma'), \Delta
         }
         $
         \ \ 
         $
         \infer[^{(;\ter)}]
         {\Gamma\Rightarrow (\alpha\seq\beta, \sigma)\trans(\beta, \sigma), \Delta}
         {
         \Gamma\Rightarrow (\alpha, \sigma)\trans(\ter, \sigma'), \Delta)
         }
         $
         \\
         $
         \infer[^{(\cho 1)}]
         {\Gamma\Rightarrow (\alpha\cho\beta, \sigma)\trans (\alpha, \sigma), \Delta}
         {
         }
         $
         \ \ 
        $
         \infer[^{(\cho 2)}]
         {\Gamma\Rightarrow (\alpha\cho\beta, \sigma)\trans (\beta, \sigma), \Delta}
         {
         }
         $
         \\
         $
         \infer[^{2\ (*)}]
         {\Gamma\Rightarrow (\alpha^\lup, \sigma)\trans (\alpha\seq\alpha^\lup\cup \true?, \sigma), \Delta}
         {
         }
         $
         \ifx
         \ \ 
         $
         \infer[^{2\ (* 2)}]
         {\Gamma\Rightarrow \upd\xrightarrow{\alpha^*/\alpha^*}\upd', \Delta}
         {
         \Gamma\Rightarrow \upd\xrightarrow{\alpha/\ter}\upd' \Delta
         }
         $
          \ \ 
         $
         \infer[^{(* \ter)}]
         {\Gamma\Rightarrow \upd\xrightarrow{\alpha^*/\ter}\upd, \Delta}
         {
         }
         $
         \fi
         \\
         \bottomrule
         \end{tabular}
              }
              }
          \end{center}
          \caption{Inference Rules $(\pfOper)_\pdl$ for Regular Programs of PDL}
          \label{table:Inference Rules for Regular Programs of PDL}
    \end{table}
\fi

\subsection{Instantiation of $\DLp$ in FODL Theory}
\label{section:Instantiation of FODL in DLp}

We instantiate $\DLp$ with the theory of FODL. 
The resulted theory, namely $\DLpFODL$, provides an alternative way of reasoning about 
FODL formulas through symbolically executing regular programs.  

The instantiation process mainly follows that for while programs as we have seen in Example~\ref{example:While program}, \ref{example:Labels and Formulas of labeled Sequents}, \ref{example:Label Mappings} and~\ref{example:Proof system for program behaviours}, where the only differences are: (1) the parameter $\Prog$ is instantiated as the set of regular programs (Section~\ref{section:PDL FODL}), denoted by $\Prog_\fodl$; (2) the program behaviours are captured by a set of rules for regular programs, denoted by $(\pfOper)_\fodl$, whose rules for the part of the program transitions of regular programs are shown in Table~\ref{table:Inference Rules for Program Transitions of Regular Programs of FODL}. And we omit the part of the rules for program terminations of regular programs. 

\renewcommand{\arraystretch}{1.5}
\begin{table}[tb]
         \begin{center}
         \noindent\makebox[\textwidth]{%
         \scalebox{1}{
         \begin{tabular}{c}
         \toprule
        $
         \infer[^{(x:=e)}]
         {\Gamma\Rightarrow (x:=e, \sigma)\trans(\ter, \sigma^x_e), \Delta}
         {
         }
         $
         \ \ 
         $
         \infer[^{(\phi?)}]
         {\Gamma\Rightarrow (\phi?, \sigma)\trans(\ter, \sigma), \Delta}
         {
         \Gamma\Rightarrow \sigma : \phi, \Delta
         }
         $
         \\
         $
         \infer[^{1\ (;)}]
         {\Gamma\Rightarrow (\alpha\seq\beta, \sigma)\trans(\alpha'\seq\beta, \sigma'), \Delta}
         {
            \Gamma\Rightarrow (\alpha, \sigma)\trans(\alpha', \sigma'), \Delta
         }
         $
         \ \  
         $
         \infer[^{(;\ter)}]
         {\Gamma\Rightarrow (\alpha\seq\beta, \sigma)\trans(\beta, \sigma'), \Delta}
         {
            \Gamma\Rightarrow (\alpha, \sigma)\trans(\ter, \sigma'), \Delta
         }
         $
         \\
         $
         \infer[^{(\cho 1)}]
         {\Gamma\Rightarrow (\alpha\cho\beta, \sigma)\trans(\alpha, \sigma), \Delta}
         {
         }
         $
         \ \ 
        $
         \infer[^{(\cho 2)}]
         {\Gamma\Rightarrow (\alpha\cho\beta, \sigma)\trans(\beta, \sigma), \Delta}
         {
         }
         $
         \\
         $
         \infer[^{2\ (*)}]
         {\Gamma\Rightarrow (\alpha^\lup, \sigma)\trans(\alpha\seq\alpha^\lup\cho \true?, \sigma), \Delta}
         {
         }
         $
         \\
         \bottomrule
         \end{tabular}
              }
              }
          \end{center}
          \caption{Partial Rules of $(\pfOper)_\fodl$ for Program Transitions of Regular Programs}
          \label{table:Inference Rules for Program Transitions of Regular Programs of FODL}
    \end{table}

It is interesting to compare our proof system: $(\pfDLp)_\fodl \dddef \pfLDLp\cup (\pfOper)_\fodl$ for FODL with the traditional proof system of FODL (cf.~\cite{Harel00}). 
By simple observations, we can see that for non-star regular programs, our proof system can do what the traditional proof system can. 
For example, an FODL formula $$[(\alpha\seq \beta)\cup \alpha]\phi, $$
where let $\alpha\dddef (x := x + 1)$ and $\beta\dddef (y := 0)$, 
can be derived in the following process in the traditional proof system of FODL by using certain rules:
$$
\infer[^{(\cup)}]
{x \ge 0\Rightarrow [(\alpha\seq \beta)\cup \alpha]x > 0}
{
    \infer[^{(\seq)}]
    {x \ge 0\Rightarrow [\alpha\seq \beta]x > 0}
    {
        \infer[^{(x:=e)}]
        {x \ge 0\Rightarrow [\alpha][\beta]x > 0}
        {x \ge 0\Rightarrow [\beta]x + 1 > 0}
    }
    &
    x\ge 0\Rightarrow [\alpha]x > 0
}. 
$$
In $(\pfDLp)_\fodl$, correspondingly, we can find a logical equivalent labeled version: 
$$\{x\mapsto t\} : x\ge 0\Rightarrow \{x\mapsto t\} : [(\alpha\seq \beta)\cup \alpha]x > 0$$ and have the following derivations by applying the rule $([\alpha]R)$ and using the corresponding operational rules in $(\pfOper)_\fodl$ (which are not shown below):
$$
\infer[^{([\alpha]R)}]
{\{x\mapsto t\} : x\ge 0\Rightarrow \{x\mapsto t\} : [(\alpha\seq \beta)\cup \alpha]x > 0}
{
    \infer[^{([\alpha]R)}]
    {\{x\mapsto t\} : x\ge 0\Rightarrow \{x\mapsto t\} : [\alpha\seq \beta]x > 0}
    {
        \{x\mapsto t\} : x\ge 0\Rightarrow \{x\mapsto t + 1\} : [\beta]x > 0
    }
&
\{x\mapsto t\} : x\ge 0\Rightarrow \{x\mapsto t\} : [\alpha]x > 0
}. 
$$

Especially, we are interested in whether $(\pfDLp)_\fodl$ (w.r.t. a suitable $(\pfOper)_\fodl$ for which we only give a part of the rules here), like the traditional proof system of FODL, is complete  related to the arithmetical theory of integers. 
One way to prove this, as we can see now, is by applying our conditional completeness result for $\DLp$ proposed in Section~\ref{section:Conditional Completeness of DLp}, in which the crucial step is showing that regular programs in the proof system $(\pfDLp)_\fodl$ is well-behaved (Definition~\ref{def:Well-behaved Loop Programs}). 

More of these aspects will be discussed in detail in our future work. 

The heterogeneity of the verification framework of $\DLp$ can be reflected from this example. 
Although while programs are a subset of regular programs in the context of our discussion (see Section~\ref{section:PDL FODL}), 
the while programs have its own set of the inference rules for their program transitions (Table~\ref{table:An Example of Inference Rules for Program Behaviours}), which is different from those for regular programs (Table~\ref{table:Inference Rules for Program Transitions of Regular Programs of FODL}). 
Our $\DLp$ formulas provide a convenient way to compare the behaviours of different models.
For example, given a regular program $\WP_r$ which is syntactically equivalent to $\WP$ (Example~\ref{example:While program}): 
$$
\WP_r \dddef ((n > 0)?\seq s:=s+n\seq n:= n - 1)^*\seq \neg (n>0)?, 
$$
we want to verify that whether their behaviours lead to the same result according to their own operational semantics. This property can be described as a $\DLp$ formula as follows: 
$$
\cdot \Rightarrow \sigma : [\WP][\WP'_r](s = s'\wedge n = n'), 
$$
where $\sigma = \{s\mapsto t, n\mapsto N, s'\mapsto t, n'\mapsto N\}$; 
$\WP'_r$ is obtained from $\WP_r$ by replacing all appearances of the variables $s, n$ with their fresh counterparts $s', n'$ in order to avoid variable collisions;  
$t, N$ are fresh variables other than $s, n, s', n'$. 
Intuitively, the formula says that if the inputs of $\WP$ and $\WP_r$ are the same, after running them separately (without interactions), their outputs are the same.

\ifx
\begin{table}[tb]
         \begin{center}
         \noindent\makebox[\textwidth]{%
         \scalebox{0.9}{
         \begin{tabular}{c}
         \toprule
        $
         \infer[^{(x:=e)}]
         {\Gamma\Rightarrow \upd\xrightarrow{x:=e/\ter}\{x := e\}\upd, \Delta}
         {
         }
         $
         \ \
         $
         \infer[^{1\ (;)}]
         {\Gamma\Rightarrow \upd\xrightarrow{\alpha\seq \beta/\alpha'\seq \beta}\upd', \Delta}
         {
            \Gamma\Rightarrow \upd\xrightarrow{\alpha/\alpha'}\upd', \Delta
         }
         $
         \ \ 
         $
         \infer[^{(;\ter)}]
         {\Gamma\Rightarrow \upd\xrightarrow{\alpha\seq \beta/\beta}\upd', \Delta}
         {
            \Gamma\Rightarrow \upd\xrightarrow{\alpha/\ter}\upd', \Delta
         }
         $
         \ \ 
         $
         \infer[^{(\cho 1)}]
         {\Gamma\Rightarrow \upd\xrightarrow{\alpha\cho \beta/\alpha'}\upd', \Delta}
         {
            \Gamma\Rightarrow \upd\xrightarrow{\alpha/\alpha'}\upd', \Delta
         }
         $
         \\
        $
         \infer[^{(\cho 2)}]
         {\Gamma\Rightarrow \upd\xrightarrow{\alpha\cho \beta/\beta'}\upd', \Delta}
         {
            \Gamma\Rightarrow \upd\xrightarrow{\beta/\beta'}\upd', \Delta
         }
         $
         \ \ 
         $
         \infer[^{2\ (* 1)}]
         {\Gamma\Rightarrow \upd\xrightarrow{\alpha^*/\alpha'\seq\alpha^*}\upd', \Delta}
         {
         \Gamma\Rightarrow \upd\xrightarrow{\alpha/\alpha'}\upd', \Delta
         }
         $
         \ \ 
         $
         \infer[^{2\ (* 2)}]
         {\Gamma\Rightarrow \upd\xrightarrow{\alpha^*/\alpha^*}\upd', \Delta}
         {
         \Gamma\Rightarrow \upd\xrightarrow{\alpha/\ter}\upd' \Delta
         }
         $
          \ \ 
         $
         \infer[^{(* \ter)}]
         {\Gamma\Rightarrow \upd\xrightarrow{\alpha^*/\ter}\upd, \Delta}
         {
         }
         $
         \\
         \bottomrule
         \end{tabular}
              }
              }
          \end{center}
          \caption{Inference Rules $(\pfPT)_\fodl$ for Program Transitions of Regular Programs}
          \label{table:Inference Rules for Program Transitions of Regular Programs of FODL}
    \end{table}
\fi

\ifx
Let $\Prog_{\fodl}$ be the set of regular programs of FODL and $\Fmla_\fodl$ be the set of non-dynamic FODL formulas. 
$\Krp_\fodl=(\Wd_\fodl, \trans, \mcl{I}_\fodl)$ is the \PLK\ structure of FODL, where 
each world $w\in \Wd_\fodl$ is a mapping $w: \Var_\fodl\to \mbb{Z}$ from the set $\Var_\fodl$ of variables to integer domain. 

An `update' is of the form: $\{x := e\}$ (cf.~\cite{Beckert13}) where $x\in \Var_\fodl$ and $e$ is an expression. 
For a formula $\phi$, an update $\{x := e\}$ on $\phi$, denoted by $\{x := e\}(\phi)$, returns a formula by substituting each free variable $x$ of $\phi$ with $e$. 
A label $\upd\in \Conf_\fodl$ of FODL is a sequence of `updates', 
which applies to a formula $\phi$ by applying each update of $\sigma$ on $\phi$ from right to left, denoted by $\upd(\phi)$. 
For example, an application $\{x := y\}\{x := x + 1\}(x = 5)$ results in $y + 1 = 5$. 
When a label $\upd$ contains no free variables, it is a world in $\Wd$. 

For a world $w\in \Krp_\fodl$, let $w(\upd)$ return an update by substituting each free variable $x$ of $\upd$ with value $w(x)$. 
So $w(\upd)$ (without free variables) is a world itself. 
Therefore, each world $w$ itself is a label mapping, and we have $\LM_\fodl = \Wd$ in $\Krp_\fodl$.  



Transitional behaviours of $\Krp_\fodl$ are captured by the inference rules namely $(\pfPT)_\fodl$ in 
Table~\ref{table:Inference Rules for Program Transitions of Regular Programs of FODL} following Definition~\ref{def:Matching Proof System}. 
One can also define a set $(\pfPTer)_\fodl$ of inference rules for the terminations of regular programs, which we omit in this paper. 
\fi

\ifx
\subsection{Instantiation of FODL in $\DLp$}

We show that FODL can be instantiated in the theory of $\DLp$. 
As FODL is the basic theory underlying many dynamic-logic variations (such as~\cite{Beckert2016,Platzer07b,Zhang22,Pardo22}), it demonstrates how $\DLp$ can be potentially useful in specifying and reasoning about different languages.

\begin{table}[tb]
         \begin{center}
         \noindent\makebox[\textwidth]{%
         \scalebox{0.9}{
         \begin{tabular}{c}
         \toprule
        $
         \infer[^{(x:=e)}]
         {\Gamma\Rightarrow \upd\xrightarrow{x:=e/\ter}\{x := e\}\upd, \Delta}
         {
         }
         $
         \ \
         $
         \infer[^{1\ (;)}]
         {\Gamma\Rightarrow \upd\xrightarrow{\alpha\seq \beta/\alpha'\seq \beta}\upd', \Delta}
         {
            \Gamma\Rightarrow \upd\xrightarrow{\alpha/\alpha'}\upd', \Delta
         }
         $
         \ \ 
         $
         \infer[^{(;\ter)}]
         {\Gamma\Rightarrow \upd\xrightarrow{\alpha\seq \beta/\beta}\upd', \Delta}
         {
            \Gamma\Rightarrow \upd\xrightarrow{\alpha/\ter}\upd', \Delta
         }
         $
         \ \ 
         $
         \infer[^{(\cho 1)}]
         {\Gamma\Rightarrow \upd\xrightarrow{\alpha\cho \beta/\alpha'}\upd', \Delta}
         {
            \Gamma\Rightarrow \upd\xrightarrow{\alpha/\alpha'}\upd', \Delta
         }
         $
         \\
        $
         \infer[^{(\cho 2)}]
         {\Gamma\Rightarrow \upd\xrightarrow{\alpha\cho \beta/\beta'}\upd', \Delta}
         {
            \Gamma\Rightarrow \upd\xrightarrow{\beta/\beta'}\upd', \Delta
         }
         $
         \ \ 
         $
         \infer[^{2\ (* 1)}]
         {\Gamma\Rightarrow \upd\xrightarrow{\alpha^*/\alpha'\seq\alpha^*}\upd', \Delta}
         {
         \Gamma\Rightarrow \upd\xrightarrow{\alpha/\alpha'}\upd', \Delta
         }
         $
         \ \ 
         $
         \infer[^{2\ (* 2)}]
         {\Gamma\Rightarrow \upd\xrightarrow{\alpha^*/\alpha^*}\upd', \Delta}
         {
         \Gamma\Rightarrow \upd\xrightarrow{\alpha/\ter}\upd' \Delta
         }
         $
          \ \ 
         $
         \infer[^{(* \ter)}]
         {\Gamma\Rightarrow \upd\xrightarrow{\alpha^*/\ter}\upd, \Delta}
         {
         }
         $
         \\
         \bottomrule
         \end{tabular}
              }
              }
          \end{center}
          \caption{Inference Rules $(\pfPT)_\fodl$ for Program Transitions of Regular Programs}
          \label{table:Inference Rules for Program Transitions of Regular Programs of FODL}
    \end{table}

Let $\Prog_{\fodl}$ be the set of regular programs of FODL and $\Fmla_\fodl$ be the set of non-dynamic FODL formulas. 
$\Krp_\fodl=(\Wd_\fodl, \trans, \mcl{I}_\fodl)$ is the \PLK\ structure of FODL, where 
each world $w\in \Wd_\fodl$ is a mapping $w: \Var_\fodl\to \mbb{Z}$ from the set $\Var_\fodl$ of variables to integer domain. 
An `update' is of the form: $\{x := e\}$ (cf.~\cite{Beckert13}) where $x\in \Var_\fodl$ and $e$ is an expression. 
For a formula $\phi$, an update $\{x := e\}$ on $\phi$ returns a formula by substituting variable $x$ of $\phi$ with $e$. 
A label $\upd\in \Conf_\fodl$ of FODL is a sequence of `updates', 
which applies to a formula $\phi$ by applying each update of $\sigma$ on $\phi$ from right to left. 
For example, a label $\{x := y\}\{x := x + 1\}$ applied to a formula $x = 5$ results in $y + 1 = 5$. 
A label mapping $\rho\in \LM_\fodl$ of FODL is an evaluation on labels by mapping each free variable to a certain value, whose details is further illustrated in Appendix~\ref{Zhang25}. 

Program behaviours of $\Krp_\fodl$ are captured by the inference rules namely $(\pfPT)_\fodl$ in 
Table~\ref{table:Inference Rules for Program Transitions of Regular Programs of FODL} following Definition~\ref{def:Matching Proof System}. 
In Appendix~\ref{???}, a set $(\pfPTer)_\fodl$ of inference rules for the terminations of regular programs is given (see Table~\ref{???}). 
\fi

\subsection{An Encoding of Process Logic in $\DLp$}
\label{section:Encoding of Complex Configurations}

$\DLp$ allows even more complex labels and formulas: the labels can be more than simple program states, while the formulas can be more than simple static ones (which is either true or false at a world). 
Below, we give a sketch of a possible encoding of a first-ordered version of the theory of process logic~\cite{Harel82} (PL) in $\DLp$, namely $\DLpPL$. 
$\DLpPL$ is able to specify and reason about progressive behaviours of programs using temporal formulas. 
Compared to Hoare logic, it is more suitable for reactive systems~\cite{Halbwachs93b}, in which a program may never terminate and we care more about if a property holds on an intermediate state. 


Process logic can be seen as a type of dynamic logics in which the semantics of formulas is defined in terms of paths rather than worlds. 
The form $[\alpha]\phi$ of a PL dynamic formula inherits from that of PDL, where $\alpha$ is a regular program just as in PDL. Formula $\phi$ is a temporal formula defined such that (1) any atomic proposition $p$ is a temporal formula; (2) $\f \phi$ and $\phi \Suf \psi$ are temporal formulas, provided that $\phi$ and $\psi$ are temporal formulas; (3) $\neg\phi$, $\phi\wedge \psi$ are temporal formulas, if $\phi$ and $\psi$ are temporal formulas. 
The semantics of a regular program and a temporal formula is thus given by paths of worlds. 
For a regular program, its semantics corresponds to the set of its execution paths. 
For a temporal formula, given a path $tr$, its semantics is defined as follows:
\begin{enumerate}
    \item $tr\models p$, if $tr_b\models p$;
    \item $tr\models \f \phi$, if $tr_b\models \phi$;
    \item $tr\models \phi\Suf \psi$, if there is a path $tr'$ such that (i) $tr' \psuf tr$ and $tr'\models \psi$, and (ii) for all paths $tr''$ with $tr' \psuf tr'' \psuf tr$, $tr''\models \phi$;
    \item $tr\models \neg\phi$, if $tr\not\models \phi$, and $tr\models \phi\wedge \psi$, if $tr\models \phi$ and $tr\models \psi$. 
\end{enumerate}
Note that the operators $\f$ and $\Suf$ are sufficient to express the meaning of the usual temporal operators (cf.~\cite{Harel82}), for example the ``next'' operator: $\n \phi\dddef \false\Suf \phi$, the ``future'' operator: $\Diamond\phi\dddef \phi \vee (\true\Suf \phi)$, etc.  
Based on the semantics of temporal formulas, the semantics of a dynamic PL formula $[\alpha]\phi$ is defined w.r.t. a path $tr$ as: 
$$
tr\models [\alpha]\phi, \mbox{ if for all execution paths $tr'$ of $\alpha$, $tr'\cdot tr\models \phi$}. 
$$

Our instantiation process is almost the same as $\DLpFODL$, except that we choose $\Fmla$ to be the set of temporal formulas introduced above, denoted by $\Fmla_\pl$. 
And we choose a different set of labels named $\Conf_\pl$ in which each label is defined as a form that captures the meaning of a sequence of the configurations in $\Conf_W$: 
\begin{center}
    $\sigma_1\sigma_2...\sigma_n$ ($n\ge 1$), $\sigma_i\in \Conf_W$ ($1\le i\le n$). 
\end{center}
Besides, similar to the choosing of $(\pfOper)_\fodl$, we propose a set $(\pfOper)_\pl$ of rules for the regular programs. 
In $(\pfOper)_\pl$, the forms of the rules for the program transitions are the same as those in $(\pfOper)_\fodl$ (see Table~\ref{table:Inference Rules for Program Transitions of Regular Programs of FODL}), except that (1) we replace each configuration with the configuration in $\Conf_\pl$, and (2) we have the following rule for assignments: 
$$
\begin{aligned}
\infer[^{(x := e)}]
{\Gamma\Rightarrow (x:=e, l)\trans(\ter, l(\sigma_n)^x_e), \Delta}
{},
\end{aligned}
\mbox{ where $l = \sigma_1\sigma_2...\sigma_n$ ($n\ge 1$)}
$$
where instead we append the current result $(\sigma_n)^x_e$ to the tail of the current sequence $l$ of the configurations in $\Conf_W$.

\ifx
whose partial rules are shown in Table~\ref{table:Inference Rules for Program Transitions of Regular Programs of PL}.
We can see that the forms of the rules in $(\pfOper)_\pl$ for program transitions are the same as those in $(\pfOper)_\fodl$ except the rule $(x:=e)$. 

\begin{table}[tb]
         \begin{center}
         \noindent\makebox[\textwidth]{%
         \scalebox{1}{
         \begin{tabular}{c}
         \toprule
        $
         \infer[^{(x:=e)}]
         {\Gamma\Rightarrow (x:=e, l)\trans(\ter, \sigma^x_e), \Delta}
         {
         }
         $
         \ \ 
         $
         \infer[^{(\phi?)}]
         {\Gamma\Rightarrow (\phi?, \sigma)\trans(\ter, \sigma), \Delta}
         {
         \Gamma\Rightarrow \sigma : \phi, \Delta
         }
         $
         \\
         $
         \infer[^{1\ (;)}]
         {\Gamma\Rightarrow (\alpha\seq\beta, \sigma)\trans(\alpha'\seq\beta, \sigma'), \Delta}
         {
            \Gamma\Rightarrow (\alpha, \sigma)\trans(\alpha', \sigma'), \Delta
         }
         $
         \ \  
         $
         \infer[^{(;\ter)}]
         {\Gamma\Rightarrow (\alpha\seq\beta, \sigma)\trans(\beta, \sigma'), \Delta}
         {
            \Gamma\Rightarrow (\alpha, \sigma)\trans(\ter, \sigma'), \Delta
         }
         $
         \\
         $
         \infer[^{(\cho 1)}]
         {\Gamma\Rightarrow (\alpha\cho\beta, \sigma)\trans(\alpha, \sigma), \Delta}
         {
         }
         $
         \ \ 
        $
         \infer[^{(\cho 2)}]
         {\Gamma\Rightarrow (\alpha\cho\beta, \sigma)\trans(\beta, \sigma), \Delta}
         {
         }
         $
         \\
         $
         \infer[^{2\ (*)}]
         {\Gamma\Rightarrow (\alpha^\lup, \sigma)\trans(\alpha\seq\alpha^\lup\cho \true?, \sigma), \Delta}
         {
         }
         $
         \\
         \bottomrule
         \end{tabular}
              }
              }
          \end{center}
          \caption{Partial Rules of $(\pfOper)_\pl$ for Program Transitions of Regular Programs}
          \label{table:Inference Rules for Program Transitions of Regular Programs of PL}
    \end{table}
\fi


Consider an example of programs' temporal properties:
$$
\sigma : [\alpha\seq \alpha]\Diamond x > 0, 
$$
with $\sigma = \{x\mapsto -1\}$, $\alpha = (x := x + 1)$. 
It says that along the execution path of $\alpha\seq \alpha$, $x > 0$ eventually holds. 
By the rules in $(\pfOper)_\pl$, we have the following derivation: 
$$
\begin{aligned}
    \infer[^{([\alpha]R)}]
    {\cdot \Rightarrow \sigma : [\alpha\seq \alpha]\Diamond x > 0}
    {
        \infer[^{([\alpha]R)}]
        {\cdot \Rightarrow \sigma\sigma' : [\alpha]\Diamond x > 0}
        {
            \infer[^{([\ter])}]
            {\cdot \Rightarrow \sigma\sigma'\sigma'' : [\ter]\Diamond x > 0}
            {
                \cdot \Rightarrow \sigma\sigma'\sigma'' : \Diamond x > 0
            }
        }
    }
\end{aligned}, 
$$
where $\sigma' = \{x\mapsto 0\}$, $\sigma'' = \{x\mapsto 1\}$. 

\renewcommand{\arraystretch}{1.5}
\begin{table}[tb]
     \begin{center}
     \noindent\makebox[\textwidth]{%
     \scalebox{1}{
     \begin{tabular}{c}
     \toprule
     $
    \infer=[^{(\f)}]
    {l : \f \phi}
    {l_b : \phi}
    $
    \ \ 
    $
    \infer[^{(\Suf R 1)}]
    {\Gamma\Rightarrow \sigma l : \phi\Suf \psi, \Delta}
    {\Gamma\Rightarrow l : \phi, \Delta
    &
    \Gamma\Rightarrow l : \phi\Suf\psi, \Delta
    }
    $
    \ \ 
    $
    \infer[^{(\Suf R 2)}]
    {\Gamma\Rightarrow \sigma l : \phi\Suf \psi, \Delta}
    {
    \Gamma\Rightarrow l : \psi, \Delta
    }
    $
    \\
    $
    \infer[^{(\Suf L)}]
    {\Gamma, \sigma l : \phi\Suf \psi\Rightarrow \Delta}
    {\Gamma, l : \phi, l : \phi\Suf\psi\Rightarrow \Delta
    &
    \Gamma, l : \psi\Rightarrow \Delta
    }
    $
    \ifx
    \ \ 
    $
    \infer[^{(\Suf L 2)}]
    {\Gamma, \sigma l : \phi\Suf \psi\Rightarrow \Delta}
    {\Gamma, l : \psi\Rightarrow \Delta
    }
    $
    \fi
     \\
     \midrule
 \multicolumn{1}{l}{
        in the above rules, $l\in \Conf_W\Conf^*_W$, $\sigma \in \Conf_W$. 
     }
        \\
     \bottomrule
     \end{tabular}
          }
          }
      \end{center}
      \caption{Rules for Labeled Temporal Formulas in $\DLpPL$}
      \label{table:Additional Rules for Labeled Temporal Formulas in DLpPL}
\end{table}

Unlike the formulas of $\Fmla_\afo$ in $\DLpWP$ and $\DLpFODL$, in $\DLpPL$ we can further derive labeled temporal formulas (e.g. $\sigma\sigma'\sigma'' : \Diamond x > 0$) using the following additional rules shown in Table~\ref{table:Additional Rules for Labeled Temporal Formulas in DLpPL}. 
These rules are directly according to the semantics of the operators $\f$ and $\Suf$. 
\ifx
In Table~\ref{table:Additional Rules for Labeled Temporal Formulas in DLpPL}, the rule $(\f)$ is according to the semantics of the operator $\f$. The rules $(\Suf R)$ and $(\Suf L)$ are directly from the logical equations $\phi\Suf \psi \leftrightarrow ()$
\fi

A cyclic derivation for iterative programs (like $\alpha^*$) in $\DLpPL$, however, requires higher-ordered label structures together with a suitable instantiation of the abstract substitutions as defined in Definition~\ref{def:Substitution of Labels}. 
Our future work will discuss more about it, as well as the analysis of the (relative) completeness of $\DLpPL$ and its comparison to the traditional theory of PL. 


\section{Analysis of Soundness and Completeness of $\DLp$}
\label{section:Proof of Theorem - theo:Soundness of A Cyclic Preproof}


\ifx
In this section, we study the soundness and completeness of the proof system $\pfDLp$. 
Section~\ref{section:Conditional Soundness of DLp} discusses about the soundness of $\pfDLp$, and Section~\ref{section:Conditional Completeness of DLp} discusses about the completeness of $\pfDLp$. 
\fi

In this section, we analyze the soundness and completeness of the proof system $\pfDLp$. 
Currently, we consider the soundness under a restriction on the program behaviours of $\Prog$ (Definition~\ref{def:Program Properties}). 
However, as analyzed below in detail, the set of programs under the restriction is still a rich one. 
For the completeness, since $\DLp$ is not a specific logic, generally, it is impossible to discuss about its completeness without any restrictions on the parameters of $\DLp$. 
Instead, we study under which conditions (Definition~\ref{def:Expression Finiteness Property} and \ref{def:Well-behaved Loop Programs}) can we obtain a completeness result relative to the labeled non-dynamic formulas. 

Section~\ref{section:Conditional Soundness of DLp} discusses about the soundness of $\pfDLp$, while Section~\ref{section:Conditional Completeness of DLp} discusses about its completeness.

\subsection{Conditional Soundness of $\DLp$}
\label{section:Conditional Soundness of DLp}

We first introduce the concept of \emph{minimum execution paths}. 

\begin{definition}
    An execution path (Definition~\ref{def:Execution Path}) $w_1...w_n$ ($n\ge 1$) is called ``minimum'', if there are no two relations $w_i\xrightarrow{\alpha_i/\cdot}\cdot$ and $w_j\xrightarrow{\alpha_j/\cdot}\cdot$ for some $1\le i < j < n$ such that 
$w_i = w_j$ and $\alpha_i = \alpha_j$. 
\end{definition}

Intuitively, in a minimum execution path, there are no two relations starting from the same world and program. 

The restriction condition is stated in the following definition. 
\begin{definition}[Termination Finiteness]
\label{def:Program Properties}
Starting from a world $w\in \Wd$ and a program $\alpha\in \Prog$, there is only a finite number of minimum execution paths (of the form:  $w\xrightarrow{\alpha/\cdot}...$). 
\ifx
In \PLK\ structure $\Kr$,  a program $\alpha\in \Prog$ satisfies the ``termination finiteness'' property, if for a world $w\in \Wd$, 
        there is only a finite number of minimum execution paths starting from a relation of the form $w\xrightarrow{\alpha/\cdot}\cdot$. 
        \fi
\end{definition}

The programs satisfying termination finiteness are in fact a rich set, including, for example, all the programs whose behaviour is deterministic, such as while programs discussed in this paper, programming languages like Esterel, C, Java, etc.  
There exist non-deterministic programs that obviously fall into this category. 
For example, automata that have non-deterministic transitions but have a finite number of states. 
More on this restriction will be discussed in our future work.



\begin{theorem}[Conditional Soundness of $\DLp$]
\label{theo:Soundness of A Cyclic Preproof}
If the programs in $\Prog$ satisfy the termination finiteness property, then
for any labeled formula $\sigma : \phi\in \DLpF$, $\pfDLp\vdash (\cdot \Rightarrow\sigma : \phi)$ implies $\models \sigma : \phi$. 
\end{theorem}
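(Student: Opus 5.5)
The plan is the standard infinite-descent argument for cyclic proofs (cf.~\cite{Brotherston08}), specialised to labeled formulas. If the proof of $\cdot\Rightarrow\sigma:\phi$ is finite, soundness follows at once from Theorem~\ref{theo:soundness of rules for LDL} by induction on the tree, since the side deductions for program transitions and terminations are sound by Definition~\ref{def:Matching Proof System}. So assume a cyclic proof whose root is invalid, and derive a contradiction.

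First I set up the metric. For a label mapping $\lm$ and a formula $\sigma:[\alpha]\psi$ that fails under $\lm$ (on the right of a sequent), or that holds under $\lm$ while $\sigma\termi\alpha$ holds (on the left), I attach a witness: a minimum execution path starting at $\lm(\sigma)\xrightarrow{\alpha/\cdot}$ that ends in a world refuting (resp.\ relevant to) $\psi$. Among such witnesses I take the length of the shortest one. Termination finiteness (Definition~\ref{def:Program Properties}) guarantees that only finitely many minimum paths start from a given pair $(w,\alpha)$. I therefore expect the metric either to be a natural number or to be a finite multiset of path lengths. Multisets ordered by $\pmult$ are well-founded, which is what the well-founded set of Definition~\ref{def:Relation pmult} is for.

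Second, I build an infinite path of invalid nodes. By local soundness, an invalid conclusion $\nu_i$ under $\lm_i$ has some premise $\nu_{i+1}$ that is invalid under a mapping $\lm_{i+1}$. For most rules $\lm_{i+1}=\lm_i$; for $(\Sub)$ it is the mapping $\lm'(\lm_i,\Sub)$ of Definition~\ref{def:Substitution of Labels}, which gives $\lm_{i+1}(\sigma)=\lm_i(\Sub(\sigma))$. Since the tree is finite with back-links, this yields an infinite derivation path. By cyclicity it carries a progressive trace $\tau_1\tau_2\ldots$. I then prove the key lemma (the paper's Lemma~\ref{lemma:infinite descent sequence}) by induction along the trace: every $\tau_j$ is ``false in the relevant sense'' under $\lm_j$, the metric never increases, and it strictly decreases at each progressive step. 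The non-modal rules and $(\Sub)$ preserve the world exactly, so the metric is unchanged. For $([\alpha]R)$, the coincidence condition lets me choose the premise $(\alpha',\sigma')$ that is the first step of a shortest counter-path, and by completeness that premise lies in $\Phi$. The witness then loses its first edge, so its length drops. For $([\alpha]L)$, the side condition $\pfDLp\vdash\Gamma\Rightarrow\sigma\termi\alpha,\Delta$ together with soundness ensures that a terminating path exists, so the measure is defined. Soundness of the side transition then places $\lm(\sigma')$ as the successor world, and again the measure drops.

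Infinitely many strict decreases then contradict well-foundedness, so the root must be valid. The main obstacle is the $([\alpha]L)$ case together with choosing the right measure. On the left side the formula $\sigma:[\alpha]\psi$ is assumed true, so the relevant information is the set of all terminating minimum paths, not a single counterexample. The premise's program state is also fixed by the rule rather than chosen by us. This is exactly why I use a finite multiset, justified by termination finiteness, compared by $\pmult$: passing to $(\alpha',\sigma')$ keeps only the suffixes that start after the chosen step, each one strictly shorter, which gives a strict multiset decrease. The remaining care is to show that loops revisiting a pair $(w,\alpha)$ can be cut out, so restricting to minimum paths loses nothing.
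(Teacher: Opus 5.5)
Your overall skeleton is the paper's: follow counter-models down the tree, take the progressive trace that cyclicity provides, and show a well-founded measure drops along it. Your $([\alpha]R)$ case is fine. Choosing the premise along a shortest counter-path, which lies in $\Phi$ by coincidence and completeness, is the right way to build the invalid path.

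\textbf{The gap is the progressive $([\alpha]L)$ case.} The side condition $\sigma\termi\alpha$ only says that \emph{some} run from $(\lm(\sigma),\alpha)$ terminates, while the successor $(\alpha',\sigma')$ is fixed by the rule. Prefixing the chosen step to a terminating minimum path $tr'$ from $(\lm(\sigma'),\alpha')$ gives a minimum path from $(\lm(\sigma),\alpha)$ only if $tr'$ never passes through $(\lm(\sigma),\alpha)$. When it does pass through, cutting the loop (your ``remaining care'') produces a path that is strictly shorter than $tr'$. So $tr'$ need not be a proper suffix of anything in the old multiset, and the $\pmult$-comparison fails. You also cannot carry only the ``kept suffixes'' forward as the measure: that multiset can become empty (after one step in the example below) while progressive steps continue.

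\textbf{No better measure can repair this, because the statement fails in this generality.} Take a single world $w$ with $w\xrightarrow{\alpha/\alpha}w$ and $w\xrightarrow{\alpha/\ter}w$, one label $c$ with $\lm(c)=w$, and $\pfOper$ deriving exactly the valid transition and termination sequents. From $(w,\alpha)$ there is exactly one minimum execution path, so termination finiteness holds. Now consider
\[
\infer[^{(\neg R)}]{\cdot\Rightarrow c:\neg[\alpha]\true}{\infer[^{([\alpha]L)}]{c:[\alpha]\true\Rightarrow\cdot}{c:[\alpha]\true\Rightarrow\cdot}}
\]
where the $([\alpha]L)$ step uses the self-loop $(\alpha,c)\trans(\alpha,c)$ and the top sequent is a bud back-linked to its parent. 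This is a cyclic proof. Its unique derivation path carries a trace through $c:[\alpha]\true$ in which every step is progressive, since $\cdot\Rightarrow c\termi\alpha$ is derivable. Yet $c:\neg[\alpha]\true$ is invalid.

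The same cycle arises in $\DLpFODL$ for $(x:=x)^*$ under $\{x\mapsto t\}$, whose unfolding returns to the identical program state. So this is not an artificial corner case.

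The paper's Lemma~\ref{lemma:infinite descent sequence} asserts exactly your step, namely that $\lm(\sigma)\cdot tr\in\EX(\lm(\sigma),[\alpha]\phi)$. It fails on this example too: $w\xrightarrow{\alpha/\alpha}w\xrightarrow{\alpha/\ter}w$ is not minimum, and both $\EX$-sets equal $\{w\xrightarrow{\alpha/\ter}w\}$.

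To make the argument go through you need an extra hypothesis. One option is determinism: then $\sigma\termi\alpha$ means the unique run terminates and passes through the chosen successor, so the singleton set shrinks to a proper suffix. Another option is a progress side condition that excludes infinite runs from $(\lm(\sigma),\alpha)$, so that the maximal run length decreases.

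Separately, a ``shortest witness for the outermost modality'' is not monotone across a $([\ter])$ step into a nested modality. You would need either concatenated witnesses, as in the paper's $\EX$, or an argument that an infinite trace eventually stays within a single modality.
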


\textbf{Main Idea for Proving the Soundness}. 
We follow the main idea behind~\cite{Brotherston08} to prove Theorem~\ref{theo:Soundness of A Cyclic Preproof} by contradiction. 
The key point is that, 
if the conclusion of a cyclic proof is invalid, then by the soundness of all the rules in $\pfLDLp$ (Theorem~\ref{theo:soundness of rules for LDL}), 
there must exist an \emph{invalid derivation path} in which each node is invalid, and one of its progressive traces leads to an infinite descent sequence of some well-founded set (introduced below), which violates the definition of the well-foundedness (cf.~\cite{Dershowitz79}) itself. 

\ifx
We need to show that
if a preproof is cyclic (Definition~\ref{def:Cyclic Preproof}), that is, if every derivation path of the preproof is followed by a  progressive trace, then the conclusion is valid. 
We follow the main idea behind~\cite{Brotherston08} to carry out a proof by contradiction: 
suppose the conclusion is invalid, then 
we will show that there exists an \emph{invalid derivation path} in which each node is invalid, 
and 
one of its progressive traces starting from a formula $\tau$ in a node $\nu$ leads to an infinite descent sequence of elements along this trace ordered by a well-founded relation $\mult$ (introduced below), which violates the definition of well-foundedness itself. 
\fi


\ifx
In this subsection, we analyze and prove Theorem~\ref{theo:Soundness of A Cyclic Preproof}. 
We only focus on the case when the conclusion is of the form $\Gamma \Rightarrow \sigma : [\alpha]\phi$ or $\Gamma\Rightarrow \sigma : \la\alpha\ra$, where dynamic $\DLp$ formulas $ \sigma : [\alpha]\phi$ and $\sigma : \la\alpha\ra\phi$ are the only formula on the right side of the sequent.
Other cases are trivial.

To prove Theorem~\ref{theo:Soundness of A Cyclic Preproof}, we need to show that
if a preproof is cyclic (Definition~\ref{def:Cyclic Preproof}), that is, if any derivation path is followed by a progressive trace, then the conclusion is sound. 
We carry out the proof by contradiction following the main idea behind~\cite{Brotherston08}: Suppose the conclusion is not sound, that is, proposition $\mfr{P}(\Gamma\Rightarrow \sigma: [\alpha]\phi)$ (resp. $\mfr{P}(\Gamma\Rightarrow\sigma: \la\alpha\ra\phi)$) is not true,
then we will show that it induces an infinite descent sequence of elements ordered by 
the well-founded relation $\mult$ introduced as follows, which violates Definition~\ref{def:Well-foundedness}. 
\fi
Below we firstly introduce the well-founded relation $\mult$ we rely on, then we focus on the main skeleton of proving Theorem~\ref{theo:Soundness of A Cyclic Preproof}. 
Other proof details are given in Appendix~\ref{section:Other Propositions and Proofs}. 


\textbf{Well-foundedness \& Relation $\mult$}. 
\ifx
A relation $\preceq$ on a set $S$ is \emph{partially ordered}, if 
it satisfies the following properties: 
(1) Reflexivity. $t\preceq t$ for each $t\in S$. 
(2) Anti-symmetry. For any $t_1, t_2\in S$, if $t_1\preceq t_2$ and $t_2\preceq t_1$, then $t_1$ and $t_2$ are the same element in $S$, we denote by $t_1 = t_2$.
(3) Transitivity. For any $t_1, t_2, t_3\in S$, if $t_1\preceq t_2$ and $t_2\preceq t_3$, then $t_1\preceq t_3$. 
$t_1 \prec t_2$ is defined as $t_1\preceq t_2$ and $t_1\neq t_2$. 
\fi
    Given a set $S$ and a partial-order relation $\preceq$ on $S$, 
    $\preceq$ is called a \emph{well-founded relation} over $S$, if for any element $a$ in $S$, there is no infinite descent sequence: $a \succ a_1 \succ a_2 \succ ...$ in $S$. 
   Set $S$ is called a \emph{well-founded} set w.r.t. $\preceq$. 

\ifx
Given two paths $tr_1$ and $tr_2$, 
relation $tr_1 \suf tr_2$ is defined if $tr_2$ is a suffix of $tr_1$. 
Obviously $\suf$ is partially ordered. 
Relation $tr_1\psuf tr_2$ expresses that $tr_1$ is a proper suffix of $tr_2$. 
In a set of finite paths, relation $\suf$ is well-founded,  because every finite path has only a finite number of suffixes. 
\fi

\begin{definition}[Relation $\mult$]
\label{def:Relation pmult}
    Given two finite sets $\cnt_1$ and $\cnt_2$ of finite execution paths,  
    $\cnt_1\mult \cnt_2$ is defined if either (1) $\cnt_1 = \cnt_2$; or (2) set $\cnt_1$ can be obtained from $\cnt_2$ by replacing one or more elements of $\cnt_2$ each with a finite number of elements, such that
    for each replaced element $tr$, its replacements $tr_1,...,tr_n$ ($n\ge 1$) in $\cnt_1$ are proper suffixes of $tr$.  

\end{definition}

\ifx
\begin{proposition}
\label{prop:relation mult is partially ordered}
$\mult$ is a partial-order relation. 
\end{proposition}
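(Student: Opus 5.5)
We verify the three properties of a partial order for $\mult$ directly from Definition~\ref{def:Relation pmult}. The main device is an explicit normal form of a single step $\cnt_1\mult\cnt_2$. Either $\cnt_1=\cnt_2$, or there are a nonempty set $R\subseteq\cnt_2$ and, for each $tr\in R$, a finite nonempty set $S_{tr}$ of proper suffixes of $tr$, such that
$$\cnt_1=(\cnt_2\setminus R)\cup\textstyle\bigcup_{tr\in R}S_{tr}.$$
Since $\cnt_1,\cnt_2$ are sets, replacements may coincide with kept elements or with each other; the union absorbs this, so the normal form faithfully captures clause (2). We use two facts about paths. First, $\psuf$ is transitive. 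Second, a proper suffix of a finite path is strictly shorter, so $\psuf$ is irreflexive and well-founded on finite paths.

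\emph{Reflexivity} is immediate from clause (1).

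\emph{Transitivity.} Suppose $\cnt_1\mult\cnt_2$ via $(R,S)$ and $\cnt_2\mult\cnt_3$ via $(R',S')$; if either step is an equality there is nothing to prove. We compose the replacements as follows:
\begin{itemize}
\item an element $tr\in\cnt_3\setminus R'$ that also lies outside $R$ is kept;
\item an element $tr\in\cnt_3\setminus R'$ that lies in $R$ is replaced by $S_{tr}$;
\item an element $tr\in R'$ is replaced by
$$T_{tr}=(S'_{tr}\setminus R)\cup\textstyle\bigcup_{u\in S'_{tr}\cap R}S_u .$$
\end{itemize}
Each $T_{tr}$ is finite and nonempty, since each $S_u$ is nonempty. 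Each element of $T_{tr}$ is a proper suffix of $tr$, either directly or as a proper suffix of a proper suffix. One must check that the resulting set equals $\cnt_1$. The one care point is that an element $u\in R$ may simultaneously be a kept element of $\cnt_3$ and a replacement in some $S'_{tr}$. In that case both occurrences are removed in $\cnt_1$, and our composite also replaces both. So the set identity holds by a direct element chase.

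\emph{Antisymmetry.} This is the step we expect to be the real obstacle, since a priori a replacement could "recreate" an element that was removed. We handle it by comparing with the Dershowitz--Manna multiset (here: set) extension of $\psuf$. Define $X<Y$ iff $X\neq Y$ and every $x\in X\setminus Y$ is a proper suffix of some $y\in Y\setminus X$. This is a strict order: irreflexive trivially, and transitive by the standard argument (cf.~\cite{Dershowitz79}), using well-foundedness of $\psuf$ on finite paths.

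We then show that $\cnt_1\mult\cnt_2$ with $\cnt_1\neq\cnt_2$ implies $\cnt_1<\cnt_2$. Take $x\in\cnt_1\setminus\cnt_2$. Then $x\in S_{tr_0}$ for some $tr_0\in R$. If $tr_0\notin\cnt_1$ we are done. Otherwise $tr_0\in\cnt_1$, but $tr_0\in R$ is not kept, so $tr_0\in S_{tr_1}$ for some $tr_1\in R$ with $tr_0\psuf tr_1$. Iterating, the lengths strictly increase inside the finite set $\cnt_2$. The chain must therefore stop at some $tr_k\in\cnt_2\setminus\cnt_1$, and by transitivity of $\psuf$ we get $x\psuf tr_k$.

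Finally, if $\cnt_1\mult\cnt_2$ and $\cnt_2\mult\cnt_1$ with $\cnt_1\neq\cnt_2$, we get $\cnt_1<\cnt_2<\cnt_1$. Transitivity then gives $\cnt_1<\cnt_1$, contradicting irreflexivity. Hence $\cnt_1=\cnt_2$, and $\mult$ is a partial order.
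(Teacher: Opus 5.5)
Your proof is correct. Transitivity follows the paper's idea, and antisymmetry takes a genuinely different route.

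\textbf{Transitivity.} The paper only asserts it (``by the definition of replacements and transitivity of $\psuf$''). You make it explicit through the composed family $T_{tr}$, including the overlap case where an element of $R$ is both kept from $\cnt_3$ and a replacement in $\cnt_2$. That is the same idea, carried out in detail.

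\textbf{Antisymmetry.} The paper argues directly. It assumes $\cnt_1\mult\cnt_2$, $\cnt_2\mult\cnt_1$ and $\cnt_1\neq\cnt_2$, starts from some $tr\in\cnt_1\setminus\cnt_2$, and builds a chain $tr\psuf f_{\cnt_1,\cnt_2}(tr)\psuf\cdots$. At each step it uses whichever of the two replacement relations applies, depending on whether the current element lies in $\cnt_1$, and it gets a contradiction because the chain cannot stop.

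You instead use a one-directional lemma: a strict step $\cnt_1\pmult\cnt_2$ lies inside the Dershowitz--Manna set extension $<$ of $\psuf$. You then appeal to the fact that $<$ is a strict order. This costs you an extra ingredient, transitivity of $<$. It buys more than antisymmetry, though. The same embedding makes precise the paper's unproved remark that $\mult$ is a special case of the multiset ordering. So it also delivers Proposition~\ref{prop:well-foundedness of relation pmult} directly from \cite{Dershowitz79}.

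You can in fact avoid transitivity of $<$. If $\cnt_1<\cnt_2$ and $\cnt_2<\cnt_1$, take a longest path in $(\cnt_1\setminus\cnt_2)\cup(\cnt_2\setminus\cnt_1)$. It would have to be a proper suffix of a strictly longer path in the same set, which is impossible.

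\textbf{Two small corrections.}
\begin{itemize}
\item The chains in question increase in length, so what makes them stop is the finiteness of the sets involved. Well-foundedness of $\psuf$ does not rule out infinite increasing chains. This applies to the paper's chain and to your claimed transitivity of $<$. You say this correctly in your own antisymmetry chain (``inside the finite set $\cnt_2$''), but you cite well-foundedness for $<$.
\item The remark after Definition~\ref{def:Relation pmult} allows replacing an element by the empty path, i.e.\ removing it. So your normal form should permit $S_{tr}=\emptyset$. Nothing in your argument uses nonemptiness except the aside that each $T_{tr}$ is nonempty, which can simply be dropped; an empty $T_{tr}$ is again a removal.
\end{itemize}
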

\fi

In Definition~\ref{def:Relation pmult}, note that we can replace an element of $\cnt_2$ with an \emph{empty execution path} whose length is $0$. And if we do so, it is equivalent to that we remove an element from $\cnt_2$. 

\begin{proposition}
\label{prop:relation mult is partially ordered}
    $\mult$ is a partial-order relation. 
\end{proposition}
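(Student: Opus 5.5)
We verify the three partial-order axioms directly from Definition~\ref{def:Relation pmult}. Before doing so, we fix the reading of clause (2) as a multiset-style replacement. There is a set $R\subseteq \cnt_2$ of ``replaced'' elements, $R\neq\emptyset$. Each $tr\in R$ is assigned a finite family $\E(tr)$ of proper suffixes of $tr$, where the empty path is allowed. Then
$$\cnt_1=(\cnt_2\setminus R)\cup\textstyle\bigcup_{tr\in R}\E(tr).$$
Reflexivity is immediate from clause (1).

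\textbf{Transitivity.} Suppose $\cnt_1\mult\cnt_2$ via $(R,\E)$ and $\cnt_2\mult\cnt_3$ via $(R',\E')$; if either relation holds by clause (1), the conclusion is trivial. Otherwise we build a witness for $\cnt_1\mult\cnt_3$.

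The replaced set is $R''=R'\cup\{tr\in\cnt_3\setminus R' : tr\in R\}$. For each $tr\in R'$, set
$$\E''(tr)=(\E'(tr)\setminus R)\cup\textstyle\bigcup_{u\in\E'(tr)\cap R}\E(u).$$
For elements of $\cnt_3$ that survive into $\cnt_2$ and are then replaced by $\E$, keep $\E$ as the replacement.

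Every element of $\E''(tr)$ is a proper suffix of a proper suffix of $tr$, hence a proper suffix of $tr$, because $\psuf$ is transitive and suffixes compose. All families stay finite. Since $\cnt_1$ and $\cnt_2$ are sets, an element appearing in several families is harmless. The main care needed is bookkeeping: an element produced by $\E'$ may coincide with an unreplaced element of $\cnt_3$. Taking unions of sets resolves this, since the definition only requires that $\cnt_1$ can be obtained by such a replacement.

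\textbf{Antisymmetry, the main obstacle.} We must show that $\cnt_1\mult\cnt_2$ and $\cnt_2\mult\cnt_1$ together force $\cnt_1=\cnt_2$. Suppose not, so by transitivity $\cnt_1\mult\cnt_1$ holds via a nontrivial replacement $(R,\E)$ with $R\neq\emptyset$.

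Pick $tr\in R$ of maximal length; this exists because $\cnt_1$ is a finite set of finite paths. Every element of $\cnt_1$ of that maximal length which lies in $R$ is removed, and its replacements are strictly shorter. Elements of $\cnt_1\setminus R$ stay as they are. Therefore the number of elements of maximal length $\ell$ in the result is strictly smaller than in $\cnt_1$, unless some replacement has length $\ell$, which is impossible since replacements are proper suffixes.

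More cleanly, we compare the length-profiles lexicographically from the longest length down. The map sending $\cnt$ to its vector of counts of paths of each length is strictly decreased by any nontrivial replacement, which contradicts the equality $\cnt_1=\cnt_1$. Hence $\cnt_1=\cnt_2$.

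This is essentially the Dershowitz--Manna multiset-ordering argument restricted to sets. The same length-profile measure will also yield the well-foundedness of $\pmult$ used later.
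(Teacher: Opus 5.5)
Your proof is correct. Reflexivity and transitivity follow the paper, which only asserts transitivity ``by the definition of replacements and transitivity of $\psuf$''. Your explicit composed witness $(R'',E'')$ supplies that bookkeeping correctly, including the case where some $u\in R$ lies both in $\cnt_3\setminus R'$ and in some $E'(t)$.

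For antisymmetry you take a genuinely different route. The paper argues locally. It picks $tr\in\cnt_1\setminus\cnt_2$ and chases back and forth with the ``origin'' maps $f_{\cnt_1,\cnt_2}$ and $f_{\cnt_2,\cnt_1}$, producing a chain $tr\psuf f(tr)\psuf\cdots$ of strictly longer paths. Because this chain grows in length, the contradiction really rests on all its members lying in the finite set $\cnt_1\cup\cnt_2$, not on well-foundedness of $\suf$ as the paper phrases it.

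You instead use a global potential: the length profile $\mathrm{prof}(\cnt)$, compared lexicographically from the longest length down, which strictly drops under every nontrivial replacement. This makes the termination argument explicit rather than leaving it implicit in a chase. The same measure is the Dershowitz--Manna multiset order on lengths, so it also yields the subsequent well-foundedness of $\mult$, whose proof the paper omits.

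Your detour through transitivity is unnecessary, though. Since the profile strictly decreases under each nontrivial replacement, $\cnt_1\mult\cnt_2$ and $\cnt_2\mult\cnt_1$ with $\cnt_1\neq\cnt_2$ immediately give $\mathrm{prof}(\cnt_1)<\mathrm{prof}(\cnt_2)<\mathrm{prof}(\cnt_1)$, without composing witnesses.
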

The proof of Proposition~\ref{prop:relation mult is partially ordered} is given in Appendix~\ref{section:Other Propositions and Proofs}. 


\begin{example}
    Let $C_1=\{tr_1, tr_2, tr_3\}$, where $tr_1 \dddef w w_1 w_2 w_3 w_4, tr_2 \dddef w w_1w_5w_6w_7$ and $tr_3 \dddef ww_8$; 
$C_2 = \{tr'_1, tr'_2\}$, where 
$tr'_1 \dddef w_1 w_2 w_3 w_4, tr'_2 \dddef w_1 w_5w_6w_7$. 
We see that $tr'_1$ is a proper suffix of $tr_1$ and $tr'_2$ is a proper suffix of $tr_2$. 
$C_2$ can be obtained from $C_1$ by replacing $tr_1$ and $tr_2$ with $tr'_1$ and $tr'_2$  respectively, and removing $tr_3$. 
Hence $C_2\mult C_1$. 
Since $C_1\neq C_2$, $C_2\pmult C_1$. 
\end{example}

\begin{proposition}
\label{prop:well-foundedness of relation pmult}
    Relation $\mult$ is a  well-founded relation. 
\end{proposition}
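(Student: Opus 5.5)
We prove Proposition~\ref{prop:well-foundedness of relation pmult} by reducing $\mult$ to the Dershowitz--Manna multiset ordering (cf.~\cite{Dershowitz79}) over the proper-suffix order $\psuf$ on finite execution paths. The reduction needs two facts.

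\textbf{Step 1: $\psuf$ is well-founded.} A finite execution path of length $k$ has only finitely many suffixes. Moreover, each proper suffix is strictly shorter than the path it is a suffix of. So any descending chain $tr \succ_s tr_1 \succ_s \cdots$ strictly decreases the natural number given by the length, and must therefore be finite. The empty execution path of length $0$ is the minimal element.

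\textbf{Step 2: $\pmult$ is contained in the multiset extension.} Suppose $\cnt_1 \pmult \cnt_2$. By Definition~\ref{def:Relation pmult}, $\cnt_1$ arises from $\cnt_2$ by deleting a nonempty collection of elements $tr$. Each deleted $tr$ is replaced by finitely many paths, each of which is a proper suffix of $tr$. This is exactly the defining clause of the Dershowitz--Manna ordering $\gg$ over $(\,\cdot\,,\psuf)$, so $\cnt_2 \gg \cnt_1$. Two subtleties need care here:
\begin{itemize}
    \item Because $\cnt_1,\cnt_2$ are \emph{sets} while the replacements may coincide with elements already present, we treat them as multisets (or note that merging duplicates only removes elements, which keeps us inside the ordering).
    \item Since $\cnt_1 \neq \cnt_2$, at least one element is genuinely replaced. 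Otherwise all replacements would be identical, which is impossible because the replacements are proper suffixes.
\end{itemize}

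\textbf{Step 3: conclude.} The Dershowitz--Manna theorem states that the multiset extension of a well-founded order is well-founded. An infinite chain $\cnt \pmultr \cnt_1 \pmultr \cnt_2 \pmultr \cdots$ would therefore yield an infinite $\gg$-descending chain, which is a contradiction.

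For self-containedness, we can instead give a direct König's-lemma argument. Build a forest whose roots are the elements of $\cnt$, and attach to each replaced element its replacements as children, carried forward through the chain. Each node has finitely many children, and every edge goes to a proper suffix. By Step 1, every branch of this forest is finite. Each step of the chain adds at least one edge, so an infinite chain would make the forest infinite. An infinite, finitely branching forest with finitely many roots has an infinite branch by König's lemma, which is a contradiction.

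\textbf{Expected obstacle.} The main difficulty is the bookkeeping in Step 2: matching the informal ``replace one or more elements'' clause precisely to the multiset ordering when sets collapse duplicates. We handle this by working with multisets throughout and observing that the collapse is harmless.
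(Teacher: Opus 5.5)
Your proposal is correct and takes essentially the same route as the paper. The paper omits a detailed proof and justifies the proposition by noting that $\mult$ is a special case of the Dershowitz--Manna multiset ordering over the proper-suffix order, which is well-founded because proper suffixes are strictly shorter. Your handling of the set-versus-multiset collapse in Step~2 and your optional K\"onig-style argument make explicit details that the paper leaves to the citation.
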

We omit the proof of Proposition~\ref{prop:well-foundedness of relation pmult}. 
Relation $\mult$ is just a special case of the ``multi-set ordering'' introduced in~\cite{Dershowitz79}, where it has been proved to be well-founded. 
Intuitively, we observe that for two sets $\cnt_1$ and $\cnt_2$ such that $\cnt_1\pmult \cnt_2$, 
for each set $D_{tr}$ of the paths in $\cnt_1$ that replaces an element $tr$ in $\cnt_2$, 
the maximum length of the elements of $D_{tr}$ is strictly smaller than that of $tr$. 
By that $\cnt_2$ is finite, we can see that such a replacement decreases the number of the paths that have the maximum length of the elements in $\cnt_2$.

\ifx
The proof of Proposition~\ref{prop:well-foundedness of relation pmult} is omitted since 
relation $\mult$ is just the ``multi-set ordering'' introduced in~\cite{Dershowitz79}, where it has been proved that multi-set ordering is well-founded. 
\fi

\textbf{Proof Skeleton of Theorem~\ref{theo:Soundness of A Cyclic Preproof}}. 
Below we give the main skeleton of the proof by skipping the details of the proof of Lemma~\ref{lemma:infinite descent sequence}, which can be found in Appendix~\ref{section:Other Propositions and Proofs}. 

Following the main idea above, 
we first introduce the concept of the ``execution paths of a dynamic $\DLp$ formula''.  
They are the elements of a well-founded relation $\mult$. 
Next, we propose Lemma~\ref{lemma:infinite descent sequence}, which plays a key role in the proof of Theorem~\ref{theo:Soundness of A Cyclic Preproof} that follows. 

\ifx
Following the main idea at the beginning of Section~\ref{section:Proof of Theorem - theo:Soundness of A Cyclic Preproof}, 
next we first introduce the notion of ``counter-example paths'' that makes a formula $\sigma : [\alpha]\phi$ or $\sigma : \la\alpha\ra \phi$ invalid.   
\fi

\ifx
\begin{definition}[Minimum Paths]
An execution path (Definition~\ref{def:Execution Path}) $s_1...s_n$ ($n\ge 1$) is `minimum' if there are no two transitions $s_i\xrightarrow{\alpha_i/\cdot}\cdot$ and $s_j\xrightarrow{\alpha_j/\cdot}\cdot$ for some $1\le i < j < n$ such that 
$s_i = s_j$ and $\alpha_i = \alpha_j$. 
\end{definition}
Intuitively, in a minimum execution path, there are no two transitions starting from the same world and program. 
\fi

\begin{definition}[Execution Paths of Dynamic Formulas]
\label{def:counter-example set}
    Given a world $w\in \Wd$ and a dynamic formula $\phi$, 
    the execution paths $\EX(w, \phi)$ of $\phi$ w.r.t. $w$ is inductively defined according to the structure of $\phi$ as follows:
    \begin{enumerate}
    \item $\EX(w, [\alpha]F)\dddef \mex(w, \alpha)$, where $F\in \Fmla$;
    \item $\EX(w, [\alpha]\phi_1)\dddef \mex(w, \alpha)\cup \{tr_1\cdot tr_2\ |\ tr_1\in \mex(w, \alpha), tr_2\in \EX((tr_1)_e, \phi_1)\}$;
    \item $\EX(w, \neg\phi_1)\dddef \EX(w, \phi_1)$;
    \item $\EX(w, \phi_1\wedge \phi_2)\dddef \EX(w, \phi_1)\cup \EX(w, \phi_2)$.
    \end{enumerate}
    Where $\mex(w, \alpha)\dddef \{w...w'\ |\ \mbox{$w\xrightarrow{\alpha/\cdot}...\xrightarrow{\cdot/\ter}w'$ is a min. exec. path for some $w'\in \Wd$}\}$ is the set of all minimum paths of $\alpha$ starting from world $w$. 
\end{definition}


In Definition~\ref{def:counter-example set}, an execution path of a dynamic formula may be concatenated by several execution paths that belong to different programs in a sequence of modalities. 
As seen in the proof of Lemma~\ref{lemma:infinite descent sequence} (Appendix~\ref{section:Other Propositions and Proofs}), this consideration is necessary because a dynamic formula may contain more than one modality (e.g. $[\alpha][\beta]\phi$). 
It is also one of the main differences between our proof and the proof given in~\cite{zhang2025parameterizeddynamiclogic}.

In the following, we call $\lm\in \LM$ a \emph{counter-example mapping} of a node $\nu$, if it makes $\nu$ invalid. 

\begin{lemma}
    \label{lemma:infinite descent sequence}
    In a cyclic proof (where there is at least one derivation path), 
    let $(\sigma:\phi, \sigma':\phi')$ be a step of a derivation trace over a derivation $(\nu, \nu')$ of an invalid derivation path, where $\phi, \phi'\in \DLF$.  
    For any set $\EX(\lm(\sigma), \phi)$ of $\sigma : \phi$ w.r.t. a counter-example mapping $\lm$ of $\nu$, 
    there exists a counter-example mapping $\lm'$ of $\nu'$ and a set $\EX(\lm'(\sigma'), \phi')$ of $\sigma':\phi'$ such that $\EX(\lm'(\sigma'), \phi')\mult \EX(\lm(\sigma), \phi)$. Moreover, if $(\sigma:\phi, \sigma':\phi')$ is a progressive step, then 
    $\EX(\lm'(\sigma'), \phi')\pmult \EX(\lm(\sigma), \phi)$. 
\end{lemma}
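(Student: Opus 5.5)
The proof goes by case analysis on the rule instance of which $(\nu, \nu')$ is an instance. We follow the clauses of Definition~\ref{def:Derivation Traces}. In every case we start from a counter-example mapping $\lm$ of $\nu$, so $\lm\models\Gamma$ and $\lm\not\models\Delta$. By the soundness of the rules (Theorem~\ref{theo:soundness of rules for LDL}, whose proof in fact shows that a counter-example of the conclusion yields a counter-example of some premise), we choose a counter-example mapping $\lm'$ of $\nu'$. We choose it so that it is tied to $\lm$ in a way that lets us compare $\EX(\lm'(\sigma'),\phi')$ with $\EX(\lm(\sigma),\phi)$.

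\textbf{Non-progressive cases.} These need only $\mult$.
\begin{itemize}
\item If $\tau_i=\tau_{i+1}$ and the rule is not $(\Sub)$, we keep $\lm'=\lm$ whenever $\lm$ already falsifies $\nu'$. This holds for the propositional rules, $([\ter])$, $(\textit{Wk})$, $(\textit{Con})$ and $(\textit{ax})$. For $(\textit{Cut})$ we pick whichever premise $\lm$ falsifies. The two $\EX$-sets then coincide.
\item For $(\neg R)$, $(\neg L)$, $(\wedge R)$ and $(\wedge L)$ on the target formula, again $\lm'=\lm$. Clauses 3 and 4 of Definition~\ref{def:counter-example set} give $\EX(w,\neg\phi_1)=\EX(w,\phi_1)$ and $\EX(w,\phi_i)\subseteq\EX(w,\phi_1\wedge\phi_2)$. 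Dropping elements is allowed in $\mult$, since we may replace an element by the empty path, so we obtain $\mult$.
\item For $([\ter])$, $\EX(w,[\ter]\phi)$ consists only of length-$0$ paths, up to concatenation with $\EX(w,\phi)$. Hence $\EX(w,\phi)\mult\EX(w,[\ter]\phi)$.
\item For $(\Sub)$, we take $\lm'=\lm'(\lm,\Sub)$ from Definition~\ref{def:Substitution of Labels}. Then $\lm'(\sigma)=\lm(\Sub(\sigma))$ for every label. So $\lm'$ falsifies $\nu'$, and the two $\EX$-sets are literally equal because the worlds are equal.
\end{itemize}

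\textbf{Rules $([\alpha]R)$ and $([\alpha]L)$.} This is the essential case. Let $w=\lm(\sigma)$. By clause 2, every path in $\EX(w,[\alpha]\phi)$ starts with a relation $w\xrightarrow{\alpha/\alpha''}w''$.
\begin{itemize}
\item For $([\alpha]R)$: $\lm\not\models\sigma:[\alpha]\phi$. Since $\lm\models\Gamma$, the coincidence assumption (Definition~\ref{def:Matching Proof System}) and completeness w.r.t.\ $\PT$ give a label $\sigma'$ with $(\alpha',\sigma')\in\Phi$ and $\lm(\sigma')=w'$, where $w\xrightarrow{\alpha/\alpha'}w'$ is the first step of a falsifying path. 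This makes $\nu'$ invalid under $\lm'=\lm$. The premise that actually occurs on the invalid derivation path is one of these; we need to argue that the path can be chosen so that it is the falsified premise.
\item For $([\alpha]L)$: the side derivation is sound, so $w\xrightarrow{\alpha/\alpha'}\lm(\sigma')$. Also $\lm\models\sigma':[\alpha']\phi$, since every execution of $\alpha'$ from $\lm(\sigma')$ extends to an execution of $\alpha$ from $w$. Again $\lm'=\lm$.
\end{itemize}
In both rules, each path $tr\in\EX(\lm(\sigma'),[\alpha']\phi)$ yields $w\cdot tr\in\EX(w,[\alpha]\phi)$, with $tr$ a proper suffix, provided $w\cdot tr$ is still minimum. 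If the prefixed path is not minimum, the relation $(\alpha,w)$ recurs inside $tr$. We then map $tr$ to the shortened minimum path and still obtain a strict suffix comparison. This is where termination finiteness ensures the sets are finite, so that $\mult$ is applicable.

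\textbf{Strictness and the main obstacle.} Grouping the paths of the new set under the old paths they are suffixes of exhibits $\EX(\lm'(\sigma'),\phi')$ as a replacement of elements of $\EX(\lm(\sigma),\phi)$ by proper suffixes. So $\mult$ holds. Strictness $\pmult$ requires that at least one element is genuinely replaced, i.e.\ the old set is nonempty.
\begin{itemize}
\item For $([\alpha]R)$: $\lm$ falsifies $\sigma:[\alpha]\phi$, so there is a terminating path and the set is nonempty.
\item For $([\alpha]L)$: this is exactly why the progressive-step definition demands the side deduction $\pfDLp\vdash(\Gamma\Rightarrow\sigma\termi\alpha,\Delta)$. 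By soundness w.r.t.\ $\PTer$ and $\lm\not\models\Delta$, we get $\lm\models\sigma\termi\alpha$, so $\mex(w,\alpha)\neq\emptyset$.
\end{itemize}
I expect the main obstacle to be the bookkeeping of minimality together with nested modalities in clause 2. When one path is cut off at its first step, the suffix of a concatenation $tr_1\cdot tr_2$ must again decompose as an element of $\EX(\lm(\sigma'),[\alpha']\phi)$, and vice versa. I would handle this by a small auxiliary claim: for $w\xrightarrow{\alpha/\alpha'}w'$, the map $tr\mapsto$ (tail of $tr$) sends the relevant elements of $\EX(w,[\alpha]\phi)$ onto a set that contains $\EX(w',[\alpha']\phi)$. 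Elements beginning with other transitions are simply discarded, which $\mult$ permits.
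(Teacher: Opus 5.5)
Your route is the paper's own. You split on the rule, take $\lm'=\lm$ everywhere except $(\Sub)$ (where $\lm'(\lm,\Sub)$ makes the two $\EX$-sets literally equal), use inclusion for the propositional rules, use termination finiteness for finiteness, and use the $\termi$ side deduction for strictness at $([\alpha]L)$. However, the two places you single out as obstacles are genuine gaps. The paper does not close them either: it silently keeps $\lm'=\lm$ and simply asserts that $\lm(\sigma)\cdot tr\in\EX(\lm(\sigma),[\alpha]\phi)$ with $tr$ a proper suffix.

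\textbf{First gap: your minimality repair runs the wrong way.} Suppose $tr\in\mex(w',\alpha')$ revisits $(w,\alpha)$. The shortened path in $\mex(w,\alpha)$ is then a proper suffix of $tr$, not conversely. Moreover, no element of $\mex(w,\alpha)$ can have $tr$ as a proper suffix, since it would repeat $(w,\alpha)$.
\begin{itemize}
\item Concretely, take the transitions $w\xrightarrow{\alpha/\alpha'}w'$, $w'\xrightarrow{\alpha'/\ter}w_1$, $w'\xrightarrow{\alpha'/\alpha}w$, $w\xrightarrow{\alpha/\ter}w_2$. Then $\mex(w,\alpha)=\{ww'w_1,\ ww_2\}$ and $\mex(w',\alpha')=\{w'w_1,\ w'ww_2\}$. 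So $\EX(w',[\alpha']F)\mult\EX(w,[\alpha]F)$ fails, and your auxiliary ``tail'' claim is false. If $w_1\not\models F$ and $w_2\models F$, this is exactly the premise of $([\alpha]R)$ that $\lm$ falsifies, so a careful choice of premise does not help.
\item Nonemptiness does not give strictness either. Take $w\xrightarrow{\alpha/\alpha}w$, $w\xrightarrow{\alpha/\ter}w_f$, $\LM=\{\lm\}$ and $\lm(\sigma)=w$. The $([\alpha]L)$ step from $\sigma:[\alpha]F\Rightarrow\cdot$ to itself is progressive, because $\sigma\termi\alpha$ is valid and hence derivable. Yet it leaves $\EX(w,[\alpha]F)=\{ww_f\}$ unchanged. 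In fact this single back-link is a cyclic proof of that sequent, which is invalid when $w_f\models F$, although termination finiteness holds.
\end{itemize}
So strict descent cannot be established for non-deterministic programs under the current progress condition. For deterministic programs the modal cases do go through: a terminating execution path never repeats a pair $(w,\alpha)$, so $tr\mapsto w\cdot tr$ maps the new set bijectively onto the old one.

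\textbf{Second gap: premise selection.} For every branching rule, namely $([\alpha]R)$, $(\wedge R)$ and $(\textit{Cut})$, the premise $\nu'$ is fixed by the given invalid path, and the given $\lm$ need not falsify it. So ``pick whichever premise $\lm$ falsifies'' (your $(\textit{Cut})$ case) and ``again $\lm'=\lm$'' (your $(\wedge R)$ case) are not available. Another counter-example of $\nu'$ has no reason to yield a comparable $\EX$-set.

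This cannot be repaired inside the lemma as stated, where both the path and $\lm$ are arbitrary. What is needed is to build the invalid path together with the counter-example mappings in the proof of Theorem~\ref{theo:Soundness of A Cyclic Preproof}. At each step you choose a premise falsified by the current mapping; for $([\alpha]R)$ that premise comes from the first transition of a falsifying execution, exactly as your coincidence argument produces it. For deterministic programs $\lm$ in fact falsifies every premise of $([\alpha]R)$, but the issue remains for $(\wedge R)$ and $(\textit{Cut})$.
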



Intuitively, Lemma~\ref{lemma:infinite descent sequence} helps us discover suitable execution-path sets imposed by a well-founded relation $\mult$ between them in an invalid derivation path. 

Based on Proposition~\ref{prop:well-foundedness of relation pmult} and Lemma~\ref{lemma:infinite descent sequence}, we give the proof of Theorem~\ref{theo:Soundness of A Cyclic Preproof} as follows. 

\begin{proof}[Proof of Theorem~\ref{theo:Soundness of A Cyclic Preproof}]
    Let $\nu = (\cdot \Rightarrow \sigma : \phi)$. 
    By contradiction, suppose $\not\models \sigma : \phi$, that is, $\nu$ is invalid. 
    Then by the soundness of each rule in $\pfDLp$ (Theorem~\ref{theo:soundness of rules for LDL}), there exists an invalid derivation path $p$ from $\nu$ (where every sequent is invalid). 
    Since $\pfDLp\vdash \nu$ (i.e., a cyclic proof tree is formed to prove the conclusion $\nu$), 
    let $\tau_1\tau_2...\tau_k...$ be a progressive trace over $p$ of the form: 
    $\nu...\nu_1\nu_2...\nu_k...$ ($k\ge 1$), where each formula $\tau_i$ is in $\nu_i$ ($i\ge 1$).  
    Let $\tau_i\dddef \sigma_i : \phi_i$.  

    Since $\nu_1$ is invalid, let $\lm_1$ be one of its counter-example mappings. 
    By Lemma~\ref{lemma:infinite descent sequence}, 
    from $\EX(\lm_1(\sigma_1), \phi_1)$, there exists an infinite sequence of sets $\EX_1,...,\EX_k,...$ $(k\ge 1)$, 
    where each $\EX_i\dddef \EX(\lm_i(\sigma_i), \phi_i)$ $(i\ge 1)$ with $\lm_i$ a counter-example mapping of node $\nu_i$, 
    and which satisfies that 
    $\EX_1\multr...\multr \EX_k\multr...$.
    Moreover, since trace $\tau_1\tau_2...\tau_k...$ is progressive (Definition~\ref{def:Progressive Step/Progressive Derivation Trace}), 
    there must be an infinite number of $j\ge 1$ 
    such that $\EX_j\pmultr \EX_{j+1}$. 
    This thus forms an infinite descent sequence w.r.t. $\pmult$, violating 
    the well-foundedness of relation $\mult$ (Proposition~\ref{prop:well-foundedness of relation pmult}). 
    
    \ifx
    From that $\Gamma\Rightarrow \tau_1, \Delta$ is invalid, 
    there is an evaluation $\rho$ satisfying that $\rho\models \Gamma$ but $\rho\not\models \tau_1$, so $\CT_1(\rho, \tau_1, \nu_1)$ is well defined. 
    By Lemma~\ref{lemma:infinite descent sequence}, 
    from $\CT_1$ we can obtain an infinite sequence of counter-examples: 
    $\CT_1,\CT_2,...,\CT_m,...$, with each $\CT_i$ ($i\ge 1$) being the counter-example of formula $\tau_i$, and which satisfies that
    $\CT_1\multr \CT_2\multr...\multr \CT_m\multr...$. 
    Moreover, since $\tau_1\tau_2...\tau_m...$ is progressive (Definition~\ref{def:Cyclic Preproof}), 
     there must be an infinite number of strict relation $\pmult$ among these $\mult$s. This thus forms an infinite descent sequence w.r.t. $\pmult$, violating 
    the well-foundedness of relation $\mult$ (Proposition~\ref{prop:well-foundedness of relation pmult}). 
    \fi
\end{proof}


\ifx
Therefore, it is impossible to discuss the completeness of its proof system, unless the structures of formulas, programs and configurations are given specifically.
Whether the proof system of a ``specific'' $\DLp$, which is given by specific formulas, programs and configurations, is complete or not, depends on whether we can build a cyclic preproof for any formula in this logic, by constructing suitable configurations so that every infinite derivation path has a progressive derivation trace.
\fi

\subsection{Conditional Completeness of $\DLp$}
\label{section:Conditional Completeness of DLp}

We propose two sufficient conditions for the relative completeness of $\DLp$: 
1) that the program models of $\DLp$ always have finite expressions (Definition~\ref{def:Expression Finiteness Property}); 
and 2) that their \emph{loop programs} are always \emph{well-behaved} (Definition~\ref{def:Well-behaved Loop Programs}) during the reasoning process.  
A loop program is a program that eventually reaches itself during a sequence of symbolic-execution reasoning under a label. 

Below we first introduce these conditions, under them we then prove the relative completeness of $\DLp$. 
We only give an outline and put the technical details of the proof in Appendix~\ref{section:Other Propositions and Proofs}.

\ifx
We propose a sufficient condition for the relative completeness of $\DLp$ --- a so-called \emph{well-behaved} property for \emph{loop programs}. 
A loop program eventually reaches itself during a sequence of symbolic-execution reasoning under a label. 
We show that $\DLp$ is relatively complete if the program models of $\DLp$ always have finite expressions (Definition~\ref{def:Expression Finiteness Property}) and their loop programs are always well-behaved (Definition~\ref{def:Well-behaved Loop Programs}) during the reasoning process. 

Below we first introduce the key concepts of the completeness condition (Definition~\ref{def:Expression Finiteness Property} and~\ref{def:Well-behaved Loop Programs}), then we prove the relative completeness of $\DLp$ under the condition. 
We put the technical details of the proof (the proof of Lemma~\ref{def:Well-behaved Loop Programs}) in Appendix~\ref{section:Other Propositions and Proofs} but only give an outline here. 
\fi


\begin{definition}[Program Sequences]
    Given a context $\Gamma$, a program $\alpha\in \Prog$ and a label $\sigma\in \Conf$, a (potentially infinite) sequence: $\Gamma : (\alpha_1, \sigma_1, \Gamma_1)(\alpha_2,\sigma_2, \Gamma_2)...(\alpha_n, \sigma_n, \Gamma_{n})...$ ($1\le n < \infty$, $\alpha_1 = \alpha$, $\sigma_1 = \sigma$), called an ``$\alpha$ sequence'',  is defined if there is
    a sequence of derivations in system $\pfDLp$ as follows: 
    $\pfDLp \vdash (\Gamma_1\Rightarrow (\alpha_1,\sigma_1)\trans(\sigma_2, \sigma_2))$, 
    $\pfDLp \vdash (\Gamma_2\Rightarrow (\alpha_2,\sigma_2)\trans(\sigma_3, \sigma_3))$, 
    ...,
    $\pfDLp \vdash (\Gamma_{n-1} \Rightarrow(\alpha_{n-1}, \sigma_{n-1})\trans(\alpha_n, \sigma_n))$, ..., which satisfies that 
    $\models (\Gamma_1\Rightarrow \Gamma)$ and $\models (\Gamma_n\Rightarrow \Gamma_{n-1})$ for all $n\ge 2$. 

    We call a sequence $\Gamma : (\alpha_1, \sigma_1)(\alpha_2, \sigma_2)...(\alpha_n, \sigma_n)...$ a ``core $\alpha$ sequence'' if there exist $\Gamma_1,\Gamma_2,...,\Gamma_n,...$ such that  $\Gamma : (\alpha_1, \sigma_1, \Gamma_1)(\alpha_2,\sigma_2, \Gamma_2)...(\alpha_n, \sigma_n, \Gamma_{n})...$ is an $\alpha$ sequence. 
    
    \ifx
    a derivation branch in system $\pfDLp$ as follows: 
    $$
    \infer[]
    {\Gamma_1\Rightarrow (\alpha_1, \sigma_1)\trans(\alpha_2,\sigma_2)}
    {
        ...
        &
        \infer[]
        {\Gamma_2\Rightarrow (\alpha_2, \sigma_2)\trans (\alpha_3,\sigma_3)}
        {
            \infer*[]
            {...}
            {
                \infer[]
                {...}
                {
                    ...
                    &
                    \Gamma_{n-1}\Rightarrow(\alpha_{n-1}, \sigma_{n-1})\trans(\alpha_n, \sigma_n)
                    &
                    ...
                }
            }
        }
        &
        ...
    }, 
    $$
    where each derivation step is performed by using some rules in $\pfDLp$;
    And it satisfies that $\Gamma_k\Rightarrow \Gamma_{k-1}$ for all $2\le k\le n-1$. 
    \fi
\end{definition}

Intuitively, starting from $(\alpha, \sigma)$ under context $\Gamma$, an $\alpha$ sequence is a sequence of derivations where in each step, the context can only be strengthen from $\Gamma$. 
By the soundness of $\pfDLp$ w.r.t. $\PT$ (see Definition~\ref{def:Matching Proof System}), each derivation of an $\alpha$ sequence is actually a symbolic execution step of the program. 

\begin{definition}[Program Loop Sequences]
    An ``$\alpha$-loop sequence'' of a program $\alpha\in \Prog$ is an $\alpha$ sequence: 
    $\Gamma : (\alpha_1, \sigma_1, \Gamma_1)...(\alpha_n,\sigma_n, \Gamma_n)$ ($n\ge 1$) such that $\alpha_1 = \alpha_n = \alpha$, and $\alpha_i \neq \alpha_j$ for any other $\alpha_i$ and $\alpha_j$, with $1 \le i < j \le n$. 
    
\end{definition}

\begin{example}
    In the instantiation theory $\DLpFODL$ as defined in Section~\ref{section:Instantiation of FODL in DLp}, let $\alpha\dddef (x := x + 1)$, $\beta\dddef (y := 0)$, $\sigma =\{x\mapsto t\}$, then for the program $\alpha^*\seq \beta$, we have an $\alpha$ sequence: $\emptyset : (\alpha^*\seq \beta, \sigma, \emptyset)((\alpha\seq\alpha^*\cup \true?)\seq \beta, \sigma, \emptyset)(\alpha\seq \alpha^*\seq \beta, \sigma, \emptyset)(\alpha^*\seq \beta, \sigma', \emptyset)$, 
    where $\sigma' = \{x\mapsto t + 1\}$. 
    It corresponds to the following derivations: 
    $$
    \begin{aligned}
        &(\pfDLp)_\fodl\vdash (\cdot \Rightarrow (\alpha^*\seq \beta, \sigma)\trans ((\alpha\seq\alpha^*\cup \true?), \sigma),\\
        &(\pfDLp)_\fodl\vdash (\cdot \Rightarrow ((\alpha\seq\alpha^*\cup \true?)\seq \beta, \sigma)\trans (\alpha\seq \alpha^*\seq \beta, \sigma)),\\
        &(\pfDLp)_\fodl\vdash (\cdot \Rightarrow (\alpha\seq \alpha^*\seq \beta, \sigma)\trans(\alpha^*\seq \beta, \sigma'))\\
    \end{aligned}
    $$
    according to the corresponding rules in Table~\ref{table:Inference Rules for Program Transitions of Regular Programs of FODL}. 
    This $\alpha$ sequence is also a loop one. 
    
\end{example}

\begin{definition}[Expression Finiteness Property]
\label{def:Expression Finiteness Property}
    For a program $\alpha\in \Prog$, there is a natural number $N_{\alpha}$ such that
    in each $\alpha$ sequence under a label $\sigma\in \Conf$ and a context $\Gamma$: $\Gamma : (\alpha_1, \sigma_1, \Gamma_1)...(\alpha_n,\sigma_n, \Gamma_n)...$ ($1\le n < \infty$, $\alpha_1 = \alpha$, $\sigma_1 = \sigma$), the number of the different programs among $\alpha_1,...,\alpha_n,...$ is no greater than $N_\alpha$. 
    \ifx
    starting from a world $w\in \Wd$ and a program $\alpha\in \Prog$, for any (potentially infinite) path: $w\xrightarrow{\alpha/\alpha_1}w_1\xrightarrow{\alpha_1/\alpha_2}...\xrightarrow{\alpha_{n-1}/\alpha_n}w_n\xrightarrow{\alpha_n/\alpha_{n+1}}...$, the set $\{\alpha, \alpha_1,\alpha_2,...,\alpha_{n-1}, \alpha_n, \alpha_{n+1},...\}$ of programs appearing along the path is finite. 
    \fi
\end{definition}

Definition~\ref{def:Expression Finiteness Property} means that as a program proceeds, it eventually reaches to the form of itself in a limit number of steps. 
Most program models in practice satisfy this property. 
But there are exceptions, for example, 
the programs in $\pi-$calculus with the replication operator (cf.~\cite{Milner92}).

\ifx
\begin{definition}[Loop Programs]
    A program $\alpha\in \Prog$ is called a ``loop program'', if there exists an $\alpha$-loop sequence under some context. 
\end{definition}
\fi

\begin{definition}[Well-behaved Loop Programs]
\label{def:Well-behaved Loop Programs}
    A program $\alpha\in \Prog$ is called a ``loop program'', if there exists an ``$\alpha$-loop sequence'' for some label and context. 

    A loop program $\alpha$ is ``well-behaved'', if for any label $\sigma\in \Conf$ and context $\Gamma$, 
    there exist a label $\sigma'\in \Conf$, a context $\Gamma'$ and a substitution $\eta : \Conf\to \Conf$ satisfying the following conditions: 
    \begin{enumerate}
        \item $\sigma = \eta(\sigma')$ and $\Gamma = \eta(\Gamma')$;
        \item $\models (\Gamma\Rightarrow \sigma\termi \alpha)$ implies $\models (\Gamma'\Rightarrow \sigma'\termi\alpha)$;
        \item 
        For each $\alpha$-loop sequence: $\Gamma' : (\alpha, \sigma', \Gamma_1)...(\alpha, \sigma'', \Gamma_n)$ ($n\ge 1$), 
        there exist a context $\Gamma''$ and a substitution $\xi : \Conf\to \Conf$ such that $\Gamma'' = \xi(\Gamma')$, $\sigma'' = \xi(\sigma')$ and $\models (\Gamma_n\Rightarrow \Gamma'')$. 
    \end{enumerate}
\end{definition}

The well-behaved property describes the ``ability'' of a loop program $\alpha$ to form back-links along its symbolic executions. 
To be more specific, for a label $\sigma\in \Conf$, it is possible to find a suitable label $\sigma'$ from which $\sigma$ can be obtained through substitutions, and for every symbolic execution starting from $(\alpha, \sigma')$, we can go back to $(\alpha, \sigma')$ through substitutions. 
The $\sigma'$ here plays the same role of the loop invariants in the normal deduction approaches of program logics. 

\begin{theorem}[Conditional Completeness of $\DLp$]
\label{theo:Completeness of DLp}
    If the programs in $\Prog$ satisfy the expression finiteness property and among them all loop programs are well-behaved, then for any labeled formula $\sigma : \phi\in \DLpF$, $\models \sigma : \phi$ implies $\pfDLp\vdash (\cdot \Rightarrow\sigma : \phi)$.  
\end{theorem}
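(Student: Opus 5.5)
The plan is to construct, for every valid $\sigma : \phi$, either a finite proof or a cyclic proof of $\cdot\Rightarrow \sigma:\phi$. The construction goes by induction on the structure of $\phi$, after generalising the statement to valid sequents $\Gamma\Rightarrow \sigma:\phi$ in which $\Gamma$ contains only labeled non-dynamic formulas. For the propositional connectives I use $(\neg R)$, $(\neg L)$, $(\wedge R)$ and $(\wedge L)$. These rules preserve validity in both directions. When nested modalities end up on the left, I dualise them through $\neg$, and I use $(\textit{Cut})$ with valid lemmas (e.g.\ $\sigma:(\psi\vee\neg\psi)$) wherever a case split is needed. Non-dynamic leaves are closed by $(\textit{Ter})$, since they are valid. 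The real work is therefore the modal case $\Gamma\Rightarrow \sigma:[\alpha]\psi$ (and its dual $\la\alpha\ra$ on the other side). Here I assume, by induction hypothesis, that every valid sequent of the form $\Gamma^\dagger\Rightarrow\sigma^\dagger:\psi$ is derivable.

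For the box case, I symbolically execute with $([\alpha]R)$. The set $\Phi$ of successors is exactly the set of all $(\alpha',\sigma')$ with a valid transition. This uses completeness w.r.t.\ $\PT$ (Definition~\ref{def:Matching Proof System}), together with coincidence, to guarantee that every $\Kr$-successor is represented. Each premise $\Gamma\Rightarrow\sigma':[\alpha']\psi$ remains valid, by the semantics of $[\cdot]$. By ``simple conditions'', I may replace the context guarding each transition by a non-dynamic $\Gamma'$ and weaken with $(\textit{WkL})$, so contexts stay non-dynamic. Along any branch this generates an $\alpha$ sequence. When $\alpha'=\ter$, I apply $([\ter])$ and then the induction hypothesis on $\psi$.

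By the expression finiteness property (Definition~\ref{def:Expression Finiteness Property}), only $N_\alpha$ distinct programs occur along any branch. Hence every infinite branch revisits some program, which therefore must be a loop program. At the first recurring program $\beta$ with current label $\sigma_0$ and context $\Gamma_0$, I invoke well-behavedness (Definition~\ref{def:Well-behaved Loop Programs}) to obtain $\sigma'$, $\Gamma'$ and $\eta$ with $\sigma_0=\eta(\sigma')$ and $\Gamma_0=\eta(\Gamma')$. I then apply $(\Sub)$ with $\eta$ to move to the generalised node $\Gamma'\Rightarrow\sigma':[\beta]\psi$, and this node becomes the companion. Continuing symbolic execution, each $\beta$-loop sequence returns to some $(\beta,\sigma'',\Gamma_n)$. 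By condition~3, $\models\Gamma_n\Rightarrow\xi(\Gamma')$ and $\sigma''=\xi(\sigma')$. So by $(\textit{Cut})$/weakening followed by $(\Sub)$ with $\xi$, I reach a bud identical to the companion. Since there are finitely many programs and finitely many branches (finiteness of $\Phi$), the resulting preproof is finite modulo back-links.

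Two verifications remain. First, the companion must itself be valid, otherwise the induction hypothesis cannot be used below it. Second, the preproof must be cyclic. I expect the first to be the main obstacle. The generalised $\Gamma'\Rightarrow\sigma':[\beta]\psi$ is more general than the valid node, and nothing in Definition~\ref{def:Well-behaved Loop Programs} directly states that its validity transfers. I would first try to argue that the choice of $\sigma'$ can be taken as the weakest-precondition-like invariant. Condition~2, on preservation of termination, appears precisely designed to make the diamond case transfer. If this does not work, I would strengthen the argument by choosing $\Gamma'$ to contain the non-dynamic formulas expressing validity (using relative completeness w.r.t.\ labeled non-dynamic formulas). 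For cyclicity, every infinite derivation path passes infinitely often through $([\alpha]R)$ steps on the traced formula between companion and bud, so the trace is progressive. For the $\la\cdot\ra$ case, where $([\alpha]L)$ is used on the left, progressiveness additionally requires the side deduction $\Gamma\Rightarrow\sigma\termi\alpha$. This I obtain from validity of the diamond formula, condition~2, and completeness w.r.t.\ $\PTer$.
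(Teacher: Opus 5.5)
Your construction follows the paper's proof of Lemma~\ref{lemma:completeness of DLp a more general case} closely. You generalise with the witness $(\sigma',\Gamma',\eta)$ of Definition~\ref{def:Well-behaved Loop Programs} via $(\Sub)$, execute symbolically, and close every loop sequence by $(\textit{Cut})$ on $\xi(\Gamma')$ followed by $(\Sub)$ with $\xi$. Progress comes from $([\alpha]R)$, resp.\ from condition~2 and completeness w.r.t.\ $\PTer$.

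The obstacle you isolate is a genuine gap, and the paper's argument does not close it. Its derivation Proc also passes from node~1 to node~2 by $(\Sub)$. It then invokes the induction hypothesis at node~10, whose validity is never established and, for $[\cdot]$, could only be inherited from node~2.

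Conditions~1--3 cannot supply this, because they are existential and never mention $\phi$. In the $\WP$ example of Section~\ref{section:Case Study}, take the generic witness $\sigma'=\{n\mapsto a, s\mapsto b\}$, $\Gamma'=\{\sigma':n\ge 0\}$, $\eta=[N/a,0/b]$. It passes conditions~1--3 just as the paper's $\sigma_2$ does: every loop returns to $\{n\mapsto a-1, s\mapsto b+a\}$, the image of $\sigma'$ under $[a-1/a,\,b+a/b]$, in a context entailing $a>0$; and $\WP$ always terminates. Yet $\sigma':n\ge 0\Rightarrow\sigma':[\WP]\phi_1$ is invalid because $b$ is unconstrained, and its exit branch ends in the invalid leaf $a\ge 0,\ a\le 0\Rightarrow b=(N+1)N/2$.

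Condition~2 does not make the $\la\cdot\ra$ case transfer either. It propagates $\sigma\termi\alpha$, i.e.\ some terminating run, not one ending in a $\phi$-state, and the paper uses it only for progressiveness.

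Neither of your remedies works under the stated hypotheses:
\begin{itemize}
\item You do not get to choose $\sigma'$. The hypothesis hands you some witness, and a weakest-precondition-like label of your own need not satisfy condition~3.
\item Enlarging $\Gamma'$ by non-dynamic formulas $I$ fails for three reasons. It is not covered by condition~1. It presupposes that weakest preconditions are expressible by non-dynamic formulas, whereas $(\textit{Ter})$ only decides validity and gives no expressiveness. Decisively, condition~3 yields $\Gamma_n\Rightarrow\xi(\Gamma')$ but not $\Gamma_n\Rightarrow\xi(I)$, so the bud would not match the enlarged companion; preservation of $I$ around the loop is exactly the missing invariant property.
\end{itemize}
What is needed is an extra hypothesis, e.g.\ a formula-dependent analogue of condition~2: $\models(\Gamma\Rightarrow\sigma:[\la\alpha\ra]\phi)$ implies $\models(\Gamma'\Rightarrow\sigma':[\la\alpha\ra]\phi)$. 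For $[\cdot]$ this makes every node of the construction valid, by the semantics of $[\cdot]$ and soundness of $(\Sub)$, which the appeals to the induction hypothesis require. For $\la\cdot\ra$ one must also show that the successor picked by the single-premise rule keeps the sequent valid, which neither your proposal nor the paper addresses.

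Where you depart from the paper, there is a second problem. The class you induct over (non-dynamic antecedent, one succedent formula) is not reachable from every valid $\sigma:\phi$. Non-dynamic literals can be moved across with $\neg$, but a clause with two modal literals cannot. For example, a valid $\sigma:[\alpha]p\to[\beta]q$ leaves you with $\sigma:[\alpha]p\Rightarrow\sigma:[\beta]q$, and ``dualising'' only turns this into $\cdot\Rightarrow\sigma:\neg[\alpha]p,\sigma:[\beta]q$. Induction on the structure of $\phi$ cannot help, since a valid disjunction need not have a valid disjunct.

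The paper instead reduces, via CNF, to $\sigma:\neg\psi\Rightarrow\sigma:[\la\alpha\ra]\phi$ with an arbitrary, possibly dynamic, context. It then inducts simultaneously on the number $M_\nu$ of modalities and on $N_\alpha$. Side goals $\Gamma_n\Rightarrow\Gamma''$ have fewer modalities, and branches that leave the loop of $\alpha$ for good reach programs $\beta'$ with $N_{\beta'}<N_\alpha$.

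Once contexts may be dynamic, your side goals $\Gamma_n\Rightarrow\xi(\Gamma')$ can no longer be closed by $(\textit{Ter})$ and need such an induction. Some measure of this kind is also what your claim that the preproof is ``finite modulo back-links'' needs. Finitely many programs per branch and finite branching alone do not bound how companions for different loop programs nest along a branch.
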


Note that the relativeness of the completeness result to labeled non-dynamic formulas is reflected by the rule $(\textit{Ter})$ (Table~\ref{table:General Rules for LDL}). 

\textbf{Main Idea for Proving the Completeness}. 
To prove Theorem~\ref{theo:Completeness of DLp}, we firstly reduce it to the special case of deriving a sequent of the form $\Gamma\Rightarrow [\la\alpha\ra]\phi$ as shown in Lemma~\ref{lemma:completeness of DLp a more general case}. For this step we take a similar approach from~\cite{Harel79}. 
The main technical part is deriving the sequent $\Gamma\Rightarrow \sigma : [\la\alpha\ra]\phi$. 
We proceed by a simultaneous induction on the number of the modalities and maximum number of the forms of the programs that can appear during the derivation of the sequent. 
The critical observation is that by the finiteness when executing $\alpha$ (Lemma~\ref{lemma:Finite Loop Sequences}) and the well-behaved property (Definition~\ref{def:Well-behaved Loop Programs}) it satisfies, each non-terminal derivation branch from $\Gamma\Rightarrow \sigma : [\la\alpha\ra]\phi$ is able to form a back-link, which in-turn shows that the whole derivation of $\Gamma\Rightarrow \sigma : [\la\alpha\ra]\phi$ can form a cyclic proof. 

We put the proof of Lemma~\ref{lemma:completeness of DLp a more general case} in Appendix~\ref{section:Other Propositions and Proofs} in details.



\begin{lemma}
\label{lemma:completeness of DLp a more general case}
    Under the same conditions as in Theorem~\ref{theo:Completeness of DLp}, for any valid sequent of the form: $\Gamma\Rightarrow \sigma : [\la\alpha \ra]\phi$, $\pfDLp \vdash (\Gamma\Rightarrow \sigma : [\la\alpha \ra]\phi)$. 
\end{lemma}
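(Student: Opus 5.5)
We prove Lemma~\ref{lemma:completeness of DLp a more general case} by a simultaneous induction on two measures of $\sigma : [\la\alpha\ra]\phi$: first the number of modalities in $\phi$, and second the number $N_\alpha$ of distinct programs that can occur in any $\alpha$ sequence. The bound $N_\alpha$ exists by the expression finiteness property (Definition~\ref{def:Expression Finiteness Property}). We treat the cases $[\alpha]$ and $\la\alpha\ra$ in parallel; the $\la\cdot\ra$ case is obtained through $(\neg R)/(\neg L)$.

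\textbf{Base case.} If $\alpha = \ter$, apply $([\ter])$ to reduce the goal to $\Gamma\Rightarrow \sigma:\phi$. This sequent has fewer modalities. If $\phi$ is non-dynamic we close it with $(\textit{Ter})$; otherwise we invoke the outer induction hypothesis, together with the usual reduction of the propositional structure of $\phi$ via $(\neg R),(\neg L),(\wedge R),(\wedge L),(\textit{Cut})$. Before any modal step, we use the ``simple conditions'' assumption (Definition~\ref{def:Matching Proof System}) to replace $\Gamma$ by a non-dynamic context $\Gamma'$ with $\models(\Gamma\Rightarrow\Gamma')$. This is done with $(\textit{Cut})$ and weakening, and it lets contexts be compared purely semantically.

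\textbf{Step case.} If $\alpha\neq\ter$, we first check whether $\alpha$ is a loop program.

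If it is not, apply $([\alpha]R)$. By completeness w.r.t.\ $\PT$, the set $\Phi$ of premises contains exactly the semantically possible successors, and each premise $\Gamma\Rightarrow\sigma':[\alpha']\phi$ is valid by coincidence and the semantics. Since $\alpha$ cannot recur, the successor programs have strictly fewer reachable distinct programs, so the inner induction hypothesis applies.

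For $\la\alpha\ra$ we apply $([\alpha]L)$ after the negation rules. We pick the successor witnessed by a terminating path, which exists by coincidence. We also record the side deduction $\Gamma\Rightarrow\sigma\termi\alpha$, available by completeness w.r.t.\ $\PTer$.

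If $\alpha$ is a well-behaved loop program (Definition~\ref{def:Well-behaved Loop Programs}), choose $\sigma',\Gamma',\eta$ as in that definition and apply $(\Sub)$ to move to $\Gamma'\Rightarrow\sigma':[\alpha]\phi$. Condition (2) preserves the termination side condition. We then symbolically execute. Every branch either reaches a program other than $\alpha$ that never returns to $\alpha$, in which case the inner induction hypothesis handles it with smaller $N$, or returns to $\alpha$ via an $\alpha$-loop sequence ending in $(\alpha,\sigma'',\Gamma_n)$. For the latter, condition (3) supplies $\xi$ with $\sigma''=\xi(\sigma')$, $\Gamma''=\xi(\Gamma')$ and $\models\Gamma_n\Rightarrow\Gamma''$. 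We weaken/cut to $\Gamma''\Rightarrow\sigma'':[\alpha]\phi$ and apply $(\Sub)$ with $\xi$, producing a bud identical to $\Gamma'\Rightarrow\sigma':[\alpha]\phi$.

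Lemma~\ref{lemma:Finite Loop Sequences} guarantees that loop sequences are finite, so each branch reaches such a bud in finitely many steps.

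\textbf{Main obstacle.} The hard part is showing that the resulting preproof is cyclic. Every infinite derivation path must pass infinitely often through some companion $\Gamma'\Rightarrow\sigma':[\alpha]\phi$, and along it the trace of the modal formula passes through a $([\alpha]R)$ step at least once per cycle. The definition of derivation traces permits this through $(\Sub)$, weakening and cut.

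For $[\cdot]$ this gives a progressive trace immediately. For $\la\cdot\ra$, the progressive steps of $([\alpha]L)$ additionally need the termination side deduction at each step. This is where validity is essential: a valid $\la\alpha\ra$ formula forces termination, which is preserved by condition (2) and by choosing successors along a terminating path. I expect the delicate point to be maintaining termination under the strengthened context $\Gamma_n$, and ensuring that nested loop programs, each with its own companion, still give every path a single progressive trace. I would handle the nesting by taking the outermost companion visited infinitely often.
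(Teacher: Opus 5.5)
Your overall construction is the paper's: a double induction on modalities and $N_\alpha$, then $(\Sub)$ via $\eta$ to a companion $\Gamma'\Rightarrow\sigma':[\la\alpha\ra]\phi$. From there you use symbolic execution under strengthened contexts, close each $\alpha$-loop sequence by a cut with $\Gamma_n\Rightarrow\Gamma''$ and $(\Sub)$ with $\xi$ into a bud, handle non-returning branches by the induction on $N$, and get finiteness from Lemma~\ref{lemma:Finite Loop Sequences}.

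The gap is in how you treat the context. You use the ``simple conditions'' assumption to replace $\Gamma$ by a non-dynamic $\Gamma'$ with $\models(\Gamma\Rightarrow\Gamma')$. Two load-bearing steps rest on this. One is closing the base case with $(\textit{Ter})$, whose side condition requires every formula of the sequent, including those of $\Gamma$, to be non-dynamic. The other is discharging the cut on $\Gamma_n\Rightarrow\Gamma''$ at each bud ``purely semantically''.

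Item~\ref{item:Simple Conditions} of Definition~\ref{def:Matching Proof System} does not give you this. It only says that the hypotheses needed to derive one particular transition $(\alpha,\sigma)\trans(\alpha',\sigma')$ can be taken non-dynamic. It provides no non-dynamic weakening of $\Gamma$ under which the goal stays valid. With $\Gamma=\{\sigma:[\beta]\psi\}$ and goal $\sigma:[\beta]\psi$, weakening the hypothesis away yields an invalid sequent, and nothing in the paper lets you trade $[\beta]\psi$ for a formula of $\Fmla$.

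Keeping $\Gamma$ and merely adding $\Gamma'$ by $(\textit{Cut})$ does not help either. You must then derive $\Gamma\Rightarrow\tau$ for each $\tau\in\Gamma'$, a sequent whose modalities all sit in $\Gamma$. If you only meant that each transition can be derived from non-dynamic hypotheses, that is the paper's use of the assumption, but then $\Gamma$ stays dynamic and the two steps above need another justification. Dynamic contexts are unavoidable: the proof of Theorem~\ref{theo:Completeness of DLp} calls the lemma with $\Gamma=\sigma:\neg\psi$ for an arbitrary disjunction $\psi$ of literals, and your own propositional reduction of $\phi$ in the base case moves dynamic formulas to the left.

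For the same reason your induction measure is the wrong one. The number of modalities of $\phi$ alone does not decrease when formulas move between the two sides of a sequent. So it cannot license deriving $\Gamma\Rightarrow\tau$, $\Gamma_n\Rightarrow\Gamma''$, or the residual $\Gamma\Rightarrow\sigma:\phi$ when $\Gamma$ is dynamic.

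The paper instead keeps $\Gamma$ intact and counts $M_\nu$, the modalities of the whole sequent $\nu$. It uses the simple-conditions assumption only to ensure that strengthening $\Gamma'$ to $\Gamma_1,\dots,\Gamma_n$ adds no dynamic formulas. It notes that $\xi$ changes labels but not formulas, and obtains $\pfDLp\vdash(\Gamma_n\Rightarrow\Gamma'')$ from the induction hypothesis on this whole-sequent measure. That is the ingredient you need in place of the non-dynamic-context shortcut.

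The point you flag as delicate does not arise. A progressive trace needs only one progressive step per cycle, and the paper takes it at the first symbolic-execution step after the companion. There condition (2) gives $\models(\Gamma'\Rightarrow\sigma'\termi\alpha)$, hence $\models(\Gamma_1\Rightarrow\sigma'\termi\alpha)$ since $\Gamma_1$ strengthens $\Gamma'$. Completeness w.r.t.\ $\PTer$ turns this into the required side deduction, so termination under $\Gamma_n$ is never used.
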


In Lemma~\ref{lemma:completeness of DLp a more general case}, ``$[\la\alpha\ra]$'' just means either $[\alpha]$ or $\la\alpha\ra$.

\begin{lemma}
\label{lemma:Finite Loop Sequences}
    Starting from a program $\alpha\in \Prog$ and a label $\sigma\in \Conf$ under a context $\Gamma$, there is only a finite number of core $\alpha$-loop sequences.     
\end{lemma}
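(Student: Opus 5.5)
The plan is to bound every core $\alpha$-loop sequence in length and in branching. Finiteness will then follow from a König-style counting argument.

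Fix $\alpha$, $\sigma$ and $\Gamma$.

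\textbf{Length bound.} By definition, a core $\alpha$-loop sequence $\Gamma : (\alpha_1,\sigma_1)\dots(\alpha_n,\sigma_n)$ has $\alpha_1=\alpha_n=\alpha$, and the programs at all other positions are pairwise distinct. It is in particular an $\alpha$ sequence. So the expression finiteness property (Definition~\ref{def:Expression Finiteness Property}) bounds the number of distinct programs in it by $N_\alpha$. Since only the first and last entries may repeat, this gives $n\le N_\alpha+1$.

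\textbf{Branching bound.} Extending a core sequence by one step from $(\alpha_i,\sigma_i)$ requires a derivation $\pfDLp\vdash(\Gamma_i\Rightarrow(\alpha_i,\sigma_i)\trans(\alpha_{i+1},\sigma_{i+1}))$. We want only finitely many pairs $(\alpha_{i+1},\sigma_{i+1})$ to be possible.

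Here I would use the same finiteness that the paper invokes for rule $([\alpha]R)$: since $\pfDLp$ consists of finitely many rules, only finitely many successor forms $(\alpha',\sigma')$ can be derived from a given program state. The contexts $\Gamma_i$ vary, but each must satisfy $\models(\Gamma_i\Rightarrow\Gamma)$. To remove the dependence on $\Gamma_i$, I would argue via the soundness and completeness of $\pfOper$ (Definition~\ref{def:Matching Proof System}): the set of possible successors is determined by the finitely many rule instances in $\pfOper$ whose conclusion matches $(\alpha_i,\sigma_i)$. A stronger context only enables some subset of these conclusions; it never creates new syntactic successors.

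\textbf{Conclusion.} The core sequences form a tree rooted at $(\alpha,\sigma)$. The tree is finitely branching and has depth at most $N_\alpha+1$. A finitely branching tree of bounded depth has finitely many nodes, and each core $\alpha$-loop sequence is a path in it, so there are only finitely many of them.

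The main obstacle is the branching step. The paper's justification that only finitely many forms $(\alpha',\sigma')$ can be derived is informal, and the quantification over arbitrary strengthened contexts $\Gamma_i$ makes it delicate. I would first try to make explicit that every $\pfOper$ rule has a target transition whose right-hand side is syntactically determined by the left-hand side up to finitely many choices, as in Tables~\ref{table:An Example of Inference Rules for Program Behaviours} and~\ref{table:Inference Rules for Program Transitions of Regular Programs of FODL}. If this is not derivable from the stated assumptions, I would adopt it as the implicit finiteness assumption on $\pfOper$ that the discussion of $([\alpha]R)$ already relies on.
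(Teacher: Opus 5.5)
Your overall route is the paper's. Its two-sentence proof also bounds the length of a core $\alpha$-loop sequence using the expression finiteness property plus the pairwise distinctness built into loop sequences, which is your $n\le N_\alpha+1$. It bounds branching by appealing to the finiteness of $\pfOper$, and leaves the combination implicit, which is exactly your tree argument. Your length bound and your counting step are fine.

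The branching step cannot be obtained the way you sketch it; under the paper's stated assumptions it actually fails. Completeness w.r.t.\ $\PT$ (Item~\ref{item:assump 3} of Definition~\ref{def:Matching Proof System}) makes every \emph{valid} transition sequent derivable. So a stronger context can create new successors; it does not merely select from a fixed finite set of rule conclusions.

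Here is a concrete failure. Take $\Gamma_1 := \Gamma, \sigma : (F\wedge\neg F)$ for some $F\in\Fmla$. Then $\models(\Gamma_1\Rightarrow\Gamma)$ holds. Moreover, $\Gamma_1\Rightarrow(\alpha,\sigma)\trans(\alpha,\sigma^{*})$ is vacuously valid, hence derivable, for every $\sigma^{*}\in\Conf$. So $\Gamma:(\alpha,\sigma)(\alpha,\sigma^{*})$ is a core $\alpha$-loop sequence for every $\sigma^{*}$. The lemma as literally stated is therefore false whenever $\Conf$ is infinite.

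Your fallback is necessary, not optional: the uniform branching bound has to be added as an explicit hypothesis. For example, admit only satisfiable contexts, and require that for each $(\beta,l)$ the successors derivable under \emph{any} admissible context lie in one fixed finite set. This union-over-contexts formulation, which you state correctly, is what the tree argument needs. The paper's sentence asserts only a uniform per-context bound on the number of transitions, and it silently relies on the same unstated assumption.
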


\begin{proof}
    By the finiteness of set $\pfOper$, fixing a program $\beta$, there is a maximum number of transitions starting from $(\beta, l)$ in the form of: $\Gamma \Rightarrow (\beta, l)\trans ...$ for all labels $l\in \Conf$ and contexts $\Gamma$. 
    By the expression finiteness property and the characteristic of program loop sequences, it is not hard to see the result. 
\end{proof}

To close this section, we give the proof of Theorem~\ref{theo:Completeness of DLp}. 

\begin{proof}[Proof of Theorem~\ref{theo:Completeness of DLp}]
    For a labelled formula $\sigma : \phi$, 
    $\phi$ is semantically equivalent to a conjunctive normal form: $C_1\wedge ...\wedge C_n$ $(n\ge 1)$.
    Each clause $C_i$ ($1\le i\le n$) is a disjunction of literals: 
    $C_i = l_{i,1}\vee ...\vee l_{i,m_i}$, where $l_{i,j}$ ($1\le i\le n, 1\le j\le m_i$) is an atomic $\DLp$ formula or its negation. 
    By the rules for labeled proposition logical formulas in Table~\ref{table:General Rules for LDL}, 
    to prove formula $\sigma : \phi$, 
    it is enough to show that for each clause $C_i$, $\models \sigma : C_i$ implies $\pfDLp\vdash \sigma : C_i$.  
    Without loss of generality, let $C_i = \psi\vee [\la\alpha\ra]\phi$. 
    Then it is sufficient to prove $\pfDLp\vdash (\sigma : \neg \psi\Rightarrow \sigma : [\la\alpha\ra]\phi)$. 
    But it is just a special case of Lemma~\ref{lemma:completeness of DLp a more general case}. 
\end{proof}

\section{Related Work}
\label{section:Related Work} 

    \textbf{Matching Logic and Its Variations}. 
    The idea of reasoning about programs based on their operational semantics is not new. 
    Previous work such as~\cite{Rosu12,Rosu13,Stefanescu14,X.Chen19} in the last decade has addressed this issue using theories based on rewriting logic~\cite{Meseguer12}. 
    Matching logic~\cite{Rosu12} is based on patterns and pattern matching. 
    Its basic form, a reachability rule $\varphi\Rightarrow \varphi'$ (where $\Rightarrow$ has another meaning from its use in this paper), captures whether pattern $\varphi'$ is reachable from pattern $\varphi$ in a given pattern reachability system. 
    Based on matching logic, one-path and all-paths reachability logics~\cite{Rosu13,Stefanescu14} were developed by enhancing the expressive power of the reachability rule. 
    A more powerful matching $\mu$-logic~\cite{X.Chen19} was proposed by adding a least fixpoint $\mu$-binder to matching logic. 

    In these theories, 
    ``patterns'' are more general structures. So to encode the dynamic forms $[\alpha]\phi$ of $\DLp$ requires additional work and program transformations. 
    On the other hand, dynamic logics like $\DLp$ provide a more direct way to express and reason about complex before-after and temporal program properties with their modalities $[\cdot]$ and $\la\cdot\ra$. 
    In terms of expressiveness, matching logic and one-path reachability logic cannot capture the semantics of modality $[\cdot]$ in dynamic logic when the programs are non-deterministic (which means that there are more than one execution path starting from a world and a program). 
    We conjecture that matching $\mu$-logic can encode $\DLp$, as it has been claimed that it can encode traditional dynamic logics (cf.~\cite{X.Chen19}).  
    
    \ifx
    In terms of expressiveness, the semantics of modality $[\cdot]$ in dynamic logic cannot be fully captured by matching logic and one-path reachability logic when the programs are non-deterministic. Because in their theories the reachability rule $\varphi\Rightarrow{}\varphi'$ only considers the existence of one execution path.   
    Matching $\mu$-logic is more general and 
    we conjecture that it can also encode $\DLp$, as 
    it has been claimed that it can encode traditional dynamic logics (cf.~\cite{X.Chen19}).  
    Compared to these theories, 
    the specialty of $\DLp$ is that it can naturally build up nested modalities to support 
    efficiently capturing and reasoning about certain program specifications without the need of further encoding or transformation. 
    \fi
    
    \textbf{General Frameworks based on Set Theories}. 
    \cite{Moore18}~proposed a general program verification framework based on coinduction. 
    Using the terminology in this paper, 
    a program specification $\sigma : [\alpha]\phi$ can be expressed as a pair $((\alpha, \sigma), P(\phi))$ in~\cite{Moore18}, with $P(\phi)$ a set of program states capturing the semantics of formula $\phi$.    
    A method was designed to derive a program specification in a coinductive way 
    according to the operational semantics of $(\alpha, \sigma)$. 
    Following~\cite{Moore18}, \cite{X.Li21}~also proposed a general framework for program reasoning,  but via big-step operational semantics. 
    Unlike the frameworks in~\cite{Moore18} and~\cite{X.Li21} which are directly built up on mathematical set theory, 
    $\DLp$ is in logical forms, and is based on a cyclic deduction approach rather than coinduction.   
    In terms of expressiveness, the meaning of modality $\la\cdot\ra$ in $\DLp$ cannot be expressed in the framework of~\cite{Moore18}.

    \textbf{Updates}. 
    The structure ``updates'' adopted in work~\cite{Platzer07b,Beckert13,Beckert2016}
    are ``delay substitutions'' of variables and terms. 
    They in fact can be defined as a special case of the more general structure labels in $\DLp$ by choosing suitable label mappings accordingly. 

    \ifx
    in dynamic logics, used in Java dynamic logic~\cite{Beckert2016}, differential dynamic logic~\cite{Platzer07b}, dynamic trace logic~\cite{Beckert13}, etc., works as a special case of the configurations proposed in this paper, if we take the configuration as a meta variable without explicit structures. 
    As illustrated in Section~\ref{section:Example One: A While Program}, a configuration in $\DLp$ can be more than just a ``delay substitution'' (cf.~\cite{Beckert13}) of variables and terms. 
    \fi

    \textbf{Logics based on Cyclic Proof Approach}.
    The proof system of $\DLp$ relies on the cyclic proof theory which firstly arose in~\cite{Stirling91} and was later developed in 
    different logics such as~\cite{Brotherston07,Brotherston08}, and more recent work like~\cite{Gadi20,Jones22,Afshari22}. 
    \cite{Jungteerapanich09}
    proposed a complete cyclic proof system for $\mu$-calculus, which subsumes PDL~\cite{Fischer79} in its expressiveness. In~\cite{Docherty19}, the authors proposed a complete labeled cyclic proof system for PDL. 
    Both logics in \cite{Jungteerapanich09,Docherty19} are propositional and cannot be used to prove many valid formulas in particular domains, for example, the arithmetic first-order formulas in number theory as shown in our example. 
    The labeled form of $\DLp$ formula $\sigma : [\alpha]\phi$ is inspired from~\cite{Docherty19}, where a label is just a variable of worlds in a traditional Kripke structure. 
    On the other hand, the labels in $\DLp$ allow arbitrary terms from actual program configurations. 

    \textbf{Generalizations of Dynamic Logic}. 
    There are some recent work for generalizing the theories of dynamic logics~\cite{Hennicker19,Acclavio2024-uv,Teuber2025-et}. 
    \cite{Hennicker19}~proposes a general dynamic logic by allowing the program models of PDL to be any forms than regular programs. The semantics of a program is given by a set of so-called ``interaction-based'' behaviours, very similar to the program transitions here. 
    However, there, it only focuses on the building of the logic theory.
    No associated proof systems were proposed.
    In~\cite{Acclavio2024-uv}, an operational version of PDL (namely OPDL) was studied. 
    There, the proof of a dynamic formula $[\alpha]\phi$ can be reduced to the proof of formula $[a][\beta]\phi$ if $\alpha\xrightarrow{a}\beta$ is a transition by doing an action $a$. 
    Similar to~\cite{Docherty19}, a complete non-well-founded proof system was built for OPDL. 
    Although in~\cite{Acclavio2024-uv} it was claimed that OPDL can be adapted to arbitrary program models, its theory was analyzed only on the propositional level and only for regular programs. 
    \cite{Teuber2025-et}~develops heterogeneous dynamic logic (HDL), a theoretical framework in which different dynamic-logic theories can be compared and jointly used.  
    Unlike~\cite{Teuber2025-et} which makes a systematic analysis of the integration of different theories, 
    the start point of our work (as well as~\cite{zhang2025parameterizeddynamiclogic}) is to facilitate the operationally-based reasoning of different programs. 
    This leads to the introduction of labels and the development of the cyclic reasoning in $\DLp$ as critical contributions, 
    while the lifting process acts as a ``side technique'' to compensate for the core proof system. 
    The result of~\cite{Teuber2025-et} offers a thorough guide for the analysis of the completeness and other properties of the lifted theories in $\DLp$ in our future work. 



    
    
    \ifx
    However, generally speaking, they neither consider structures as general as allowed by programs and configurations in $\DLp$, nor adopt a similar approach for reasoning about programs. 
    \cite{MossakowskiEA09} proposed a complete generic dynamic logic for monads of programming languages. 
    In the dynamic logic proposed in~\cite{Hennicker19}, more general programs can be reasoned about than regular expressions of PDL, based on so-called ``interaction-based'' behaviours. 
    But there program transitions are captured by abstract actions in a form e.g. $\alpha\xrightarrow{a}\beta$, where no explicit structures of program configurations are allowed. 
    And yet no proof systems have been built for that logic.

    \textbf{Generalizations of Dynamic Logic}. 
    There has been some other work for generalizing the theories of dynamic logics, such as~\cite{MossakowskiEA09,Hennicker19}. 
    However, generally speaking, they neither consider structures as general as allowed by programs and configurations in $\DLp$, nor adopt a similar approach for reasoning about programs. 
    \cite{MossakowskiEA09} proposed a complete generic dynamic logic for monads of programming languages. 
    In the dynamic logic proposed in~\cite{Hennicker19}, more general programs can be reasoned about than regular expressions of PDL, based on so-called ``interaction-based'' behaviours. 
    But there program transitions are captured by abstract actions in a form e.g. $\alpha\xrightarrow{a}\beta$, where no explicit structures of program configurations are allowed. 
    And yet no proof systems have been built for that logic.  
    \fi

    \section{Conclusion \&\ Future Work}
\label{section:Discussions and Future Work}

In this paper, we propose a novel verification framework based on dynamic logic for reasoning about programs based on their operational semantics. 
We mainly build the theory of $\DLp$ and analyze its soundness and completeness under certain conditions. 
Through the examples and case studies, we have shown the potential usage of this formalism in different aspects of program reasoning. 

For future work we focus on two aspects. 
On the theoretical aspect, we are interested in whether we can further relax our conditions for proving the soundness and completeness of $\DLp$. This is important to know how our framework can be also adapted to more complex models, such as hybrid or probabilistic systems. 
We will further study the instantiated theory $\DLpPL$, as a promising first-ordered version of process logic ever built, and also $\DLpSP$. 
On the practical aspect, we are carrying out a full mechanization of $\DLp$ in Rocq~\cite{Bertot04}. 
Currently, we have managed to deeply embed the whole theory of $\DLp$ (cf.~\cite{Zhang25-code}). 
To explore the potential applications of $\DLp$ in practice, we are working on applying $\DLp$ for specifying and reasoning about different program models, like Esterel, Rust, etc.  

\ifx
In this paper, we propose a novel dynamic logic $\DLp$ that supports reasoning about general forms of programs and formulas based on programs' operational semantics. We mainly build the theory of $\DLp$ and propose a sound cyclic proof system of $\DLp$ that is proved to be useful and applicable. As the main theoretical result we prove the soundness of $\DLp$. 

On theoretical aspects, we are interested in obtaining a complete proof system by fixing the parameters of $\DLp$ in some algebraic domains, e.g., a multi-sorted signature for imperative programs as in~\cite{Joseph96}. 
We also want to analyze whether our framework can be adapted to a wider range of program behaviours, i.e., to relax the property Definition~\ref{def:Program Properties}. 
From an applied perspective, 
we are carrying out a full mechanization of $\DLp$ in Coq~\cite{Bertot04}. 
To see the full potential of $\DLp$, 
 we will instantiate more types of programs or system models in $\DLp$, and to specify and verify their properties using $\DLp$ formulas.  
\fi

\ifx
In this paper, we propose a dynamic-logic-like parameterized formalism $\DLp$ allowing instantiations of a rich set of programs and formulas in interested domains. $\DLp$ supports a direct symbolic-execution-based reasoning of programs, which not only helps fast-prototyping program derivation rules based on transitional behaviours, but also provides a convenient framework for verifying non-compositional models, such as for verifying transitional models~\cite{Hennicker19} or neural networks~\cite{XiyueZhang22}. 

One future work is a full mechanization of $\DLp$ using Coq~\cite{Bertot04}, which on one hand can help verifying the correctness of the theory itself, and on the other hand provides a cyclic verification framework built upon Coq to support verifying general programs. To see the full potential of $\DLp$, we are also trying to instantiate more types of programs or system models in $\DLp$. 
\fi

\ifx
$\DLp$ provides a flexible verification framework to encompass existing dynamic-logic theories, by supporting a direct symbolic-execution-based reasoning according to programs behaviours, while still preserving structure-based reasoning through lifting processes. 
In practical verification, trade-off can be made between these two types of reasoning.  
Through examples displayed in this paper, one can see the potential of $\DLp$ to be used for other types of programs, especially those do not support structural rules, such as transitional models~\cite{Hennicker19} or neural networks~\cite{XiyueZhang22}. 
On the other hand, due to its generality, $\DLp$ loses the completeness that a dynamic logic usually possesses, and must rely on additional lifted rules to support a compositional reasoning.  
\fi

\ifx

About future work, 
currently we plan a full mechanization of $\DLp$ using Coq~\cite{Bertot04}, which on one hand can help verifying the correctness of the theory itself, and on the other hand provides a verification framework built upon Coq to support verifying general programs. 
Our long-term goal is to develop an independent program verifier based on the theory of $\DLp$, which could fully make use of the advantages of symbolic-execution-based reasoning and cyclic proofs to maximize automation of verifications. 
One possible way is based on \textit{Cyclist}~\cite{Brotherston12}, a proof engine supported by an efficient cyclic-proof-search algorithm. 
To see the full potential of $\DLp$, we are also trying to use $\DLp$ to describe and verify more types of programs or system models. 
\fi



\ifx
\textbf{Acknowledgment}.
    This work is partially supported by the Youth Project of National Science Foundation of China (No. 62102329), the Project
of National Science Foundation of China (No. 62272397), and the New Faculty Start-up Foundation of NUAA (No. 90YAT24003). 
\fi

\paragraph{Acknowledgment}
    This work is partially supported by the New Faculty Start-up Foundation of NUAA (No. 90YAT24003) and the General Program of the National Natural Science Foundation of China (No. 62272397).

\bibliographystyle{ACM-Reference-Format}
\bibliography{main}

\appendix

\section{Other Propositions and Proofs}
\label{section:Other Propositions and Proofs}


\begin{lemma}
\label{lemma:from logical consequence to sequent derivation}
Given labeled formulas $\tau_1,...,\tau_n, \tau$, 
    for any label mapping $\lm\in \LM$ with $\lm\models \Gamma$, if $\lm\models \tau_1$ and ... and $\lm\models \tau_n$ implies $\lm\models \tau$, 
then rule 
    $$
\begin{aligned}
\infer[]
{
    \Gamma\Rightarrow \tau, \Delta
}
{
    \Gamma\Rightarrow \tau_1, \Delta
    &
    ...
    &
    \Gamma\Rightarrow \tau_n, \Delta
}\end{aligned}
$$ is sound for any contexts $\Gamma, \Delta$. 
\end{lemma}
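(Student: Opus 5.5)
The argument is a direct unfolding of the definition of soundness for a rule. The statement is local to each label mapping, and we need to lift it to validity of sequents. The rule is sound if validity of all premises implies validity of the conclusion. So assume that each premise $\Gamma\Rightarrow \tau_i, \Delta$ ($1\le i\le n$) is valid, and fix an arbitrary label mapping $\lm\in\LM$ with $\lm\models\Gamma$. We must show $\lm\models\tau$, or $\lm\models\tau'$ for some $\tau'\in\Delta$.

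We split into two cases according to whether $\lm$ satisfies some formula of $\Delta$.

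In the first case, $\lm\models\tau'$ for some $\tau'\in\Delta$. Then the conclusion $\Gamma\Rightarrow\tau,\Delta$ is satisfied by $\lm$ immediately, and nothing more is needed.

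In the second case, $\lm\not\models\tau'$ for every $\tau'\in\Delta$. For each $i$, validity of the premise $\Gamma\Rightarrow\tau_i,\Delta$ together with $\lm\models\Gamma$ gives that $\lm$ satisfies $\tau_i$ or some member of $\Delta$. The latter is excluded, so $\lm\models\tau_i$ for all $1\le i\le n$. The hypothesis of the lemma now applies to this $\lm$, since $\lm\models\Gamma$ and $\lm\models\tau_1,\dots,\lm\models\tau_n$. It yields $\lm\models\tau$, so $\lm$ satisfies the conclusion.

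Since $\lm$ was arbitrary, the conclusion $\Gamma\Rightarrow\tau,\Delta$ is valid. The only point requiring care is to read the hypothesis as quantified per label mapping $\lm$ satisfying $\Gamma$, rather than as a validity-level implication. This per-$\lm$ reading is what lets the case split on $\Delta$ go through. There is no real obstacle beyond the bookkeeping of the semantics of labeled sequents given in Section~\ref{section:labeled Sequent Calculus}. The degenerate case $n=0$ is covered by the same argument: the hypothesis then directly gives $\lm\models\tau$.
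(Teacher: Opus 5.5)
Your proof is correct and follows the paper's proof: assume the premises are valid, fix $\lm$ with $\lm\models\Gamma$ and no formula of $\Delta$ true under it, extract $\lm\models\tau_1,\dots,\lm\models\tau_n$ from the premises, and apply the per-mapping hypothesis to get $\lm\models\tau$. Your explicit case split on $\Delta$ is the same argument the paper gives more tersely by assuming $\lm\not\models\tau'$ for all $\tau'\in\Delta$.
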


\begin{proof}[Proof of Lemma~\ref{lemma:from logical consequence to sequent derivation}]
    Assume $\Gamma\Rightarrow \tau_1, \Delta$,...,$\Gamma\Rightarrow \tau_n, \Delta$ are valid, 
    for any $\lm\in \LM$ with $\lm\models \Gamma$,  let $\Kr, \lm\not\models \tau'$ for all $\tau'\in \Delta$, 
    we need to prove $\lm\models \tau$. 
    From the assumption we have $\lm\models \tau_1$, ..., $\lm\models \tau_n$. 
    Then $\lm\models \tau$ is an immediate result. 
\end{proof}

\begin{lemma}
\label{lemma:from logical consequence to sequent derivation 2}
Given labeled formulas $\tau_1,...,\tau_n, \tau$, 
    for any label mapping $\lm\in \LM$ with $\Kr,\lm\models \Gamma$, if $\lm\models \tau$ implies either $\lm\models \tau_1$ or ... or $\lm\models \tau_n$, 
then rule 
    $$
\begin{aligned}
\infer[]
{
    \Gamma, \tau\Rightarrow \Delta
}
{
    \Gamma, \tau_1\Rightarrow \Delta
    &
    ...
    &
    \Gamma, \tau_n\Rightarrow \Delta
}\end{aligned}
$$ is sound for any contexts $\Gamma, \Delta$. 
\end{lemma}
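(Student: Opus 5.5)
The plan is to argue directly from the definition of validity of a labeled sequent, in the same way as the proof of Lemma~\ref{lemma:from logical consequence to sequent derivation}, but with the roles of the antecedent and succedent exchanged. Assume that all premises $\Gamma, \tau_1\Rightarrow \Delta$, \ldots, $\Gamma, \tau_n\Rightarrow \Delta$ are valid. To show that the conclusion $\Gamma, \tau\Rightarrow \Delta$ is valid, fix an arbitrary label mapping $\lm\in \LM$ with $\lm\models \Gamma$ and $\lm\models \tau$. It then suffices to exhibit some $\tau'\in \Delta$ with $\lm\models \tau'$.

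The key step is to apply the hypothesis of the lemma. Since $\lm\models \Gamma$, the assumed semantic implication is available for this particular $\lm$. From $\lm\models \tau$ it yields an index $i$ with $1\le i\le n$ such that $\lm\models \tau_i$. Hence $\lm\models \Gamma\cup\{\tau_i\}$.

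Now instantiate the validity of the $i$-th premise $\Gamma, \tau_i\Rightarrow \Delta$ at $\lm$. This gives some $\tau'\in\Delta$ with $\lm\models\tau'$, which is exactly what was required. Since $\lm$ was arbitrary, the conclusion is valid and the rule is sound. The argument places no restriction on $\Gamma$ or $\Delta$ beyond the side condition, so it holds for any contexts.

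I do not expect a real obstacle here. The only point needing care is the quantifier order. The hypothesis is stated for those $\lm$ satisfying $\Gamma$, so it must be invoked only after fixing $\lm$ and establishing $\lm\models\Gamma$. The disjunction it produces, that is the choice of $i$, may depend on $\lm$. This is harmless, because each premise is valid for every label mapping.
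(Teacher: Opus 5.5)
Your proposal is correct and follows essentially the same argument as the paper: fix $\lm$ with $\lm\models\Gamma$ and $\lm\models\tau$, use the hypothesis to get some $\tau_i$ with $\lm\models\tau_i$, and conclude from the validity of the $i$-th premise. Your conclusion, some $\tau'\in\Delta$ with $\lm\models\tau'$, is actually stated more precisely than the paper's shorthand ``$\lm\models\Delta$''. Under the paper's own convention, that shorthand would mean every formula in $\Delta$ holds.
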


\begin{proof}[Proof of Lemma~\ref{lemma:from logical consequence to sequent derivation 2}]
    Assume $\Gamma, \tau_1\Rightarrow  \Delta$,...,$\Gamma, \tau_n\Rightarrow \Delta$ are valid.
    For any $\lm\in \LM$ with $\lm\models \Gamma$, if $\lm\models \tau$, 
    by the assumption, $\lm\models \tau_i$ for some $1\le i\le n$. 
    By the validity of $\Gamma, \tau_i\Rightarrow \Delta$, $\lm\models \Delta$. 
    By the arbitrariness of $\lm$ we know that $\Gamma, \tau\Rightarrow \Delta$ is valid. 
\end{proof}

\ifx
\begin{lemma}
\label{lemma:from logical consequence to sequent derivation}
For any labeled formulas $\phi_1,...,\phi_n, \phi$, 
    if formula $\phi_1\wedge ...\wedge\phi_n\to \phi$ is valid, 
    then sequent 
    $$
\begin{aligned}
\infer[]
{
    \Gamma\Rightarrow \phi, \Delta
}
{
    \Gamma\Rightarrow \phi_1, \Delta
    &
    ...
    &
    \Gamma\Rightarrow \phi_n, \Delta
}\end{aligned}$$ is sound for any contexts $\Gamma, \Delta$. 
\end{lemma}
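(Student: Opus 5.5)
The plan is to reduce this statement to Lemma~\ref{lemma:from logical consequence to sequent derivation}, which was just proved. That lemma has a \emph{weaker} hypothesis: the consequence from $\tau_1,\dots,\tau_n$ to $\tau$ is only required for label mappings $\lm$ with $\lm\models\Gamma$. Here we are given something stronger, namely that $\phi_1\wedge\dots\wedge\phi_n\to\phi$ is valid.

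First I fix what validity of this implication means for labeled formulas. Since $\wedge$ and $\to$ are not primitive connectives on labeled formulas, I take it as the meta-level reading: for every $\lm\in\LM$, if $\lm\models\phi_1,\dots,\lm\models\phi_n$ then $\lm\models\phi$. When all $\phi_i$ and $\phi$ carry the same label $\sigma$, this coincides with $\Kr,\lm(\sigma)\models\phi_1\wedge\dots\wedge\phi_n\to\phi$ for all $\lm$, by clauses 2--3 of Definition~\ref{def:Semantics of dynamic logical formulas} and clause 1 of Definition~\ref{definition:Semantics of labeled Dlp Formulas}. Under either reading, the property holds for every $\lm$, so in particular for those with $\lm\models\Gamma$, whatever the context $\Gamma$ is.

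Next I instantiate Lemma~\ref{lemma:from logical consequence to sequent derivation} with $\tau_i:=\phi_i$ and $\tau:=\phi$. Its hypothesis is exactly the restriction just described, so the rule with premises $\Gamma\Rightarrow\phi_i,\Delta$ and conclusion $\Gamma\Rightarrow\phi,\Delta$ is sound for arbitrary $\Gamma,\Delta$. As a sanity check, the direct argument is equally short. Take $\lm$ with $\lm\models\Gamma$ and $\lm\not\models\tau'$ for all $\tau'\in\Delta$. Validity of each premise gives $\lm\models\phi_i$ for every $i$, and then the valid implication yields $\lm\models\phi$. Hence $\Gamma\Rightarrow\phi,\Delta$ is valid.

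The only point needing care is the first step: making precise how the implication between labeled formulas is read, when the $\phi_i$ may carry different labels or be transitions and terminations in $\PT\cup\PTer$. I would state the pointwise-over-$\lm$ reading explicitly. After that, the rest is immediate.
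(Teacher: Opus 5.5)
Your proposal is correct and takes essentially the paper's approach. The paper argues directly: for any mapping that satisfies $\Gamma$ and falsifies every formula of $\Delta$, validity of the premises gives each $\phi_i$, and the valid implication then gives $\phi$. This is exactly your ``sanity check'', except that the paper uses evaluations $\rho$ where you use label mappings $\lm$. Routing the argument through the weaker-hypothesis version of the lemma only repackages the same two-line argument, and the paper also relies on your pointwise reading of the implication between labeled formulas, though only implicitly.
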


\begin{proof}
    Assume $\mfr{P}(\Gamma\Rightarrow \phi_1, \Delta)$,...,$\mfr{P}(\Gamma\Rightarrow \phi_n, \Delta)$ are valid, 
    for any evaluation $\rho\in \Eval$, let $\rho\models \Gamma$ and $\rho\not\models \psi$ for all $\psi\in \Delta$, 
    we need to prove $\rho\models \phi$. 
    From the assumption we have $\rho\models \phi_1$, ..., $\rho\models \phi_n$. 
    $\rho\models \phi$ is an immediate result since
    $\phi_1\wedge ...\wedge \phi_n\to \phi$. 
\end{proof}
\fi

{\parindent 0pt
\textbf{Content of Theorem~\ref{theo:soundness of rules for LDL}: }
Each rule from $\pfLDLp$ in Table~\ref{table:General Rules for LDL} is sound. 
}

Below we only prove the soundness of the rules $([\alpha]R)$, $([\alpha]L)$ and $(\Sub)$. 
Other rules can be proved similarly based on the semantics of $\DLp$ formulas (Definition~\ref{def:Semantics of dynamic logical formulas}).  

\begin{proof}[Proof of Theorem~\ref{theo:soundness of rules for LDL}]   

    For rule $([\alpha]R)$, by Lemma~\ref{lemma:from logical consequence to sequent derivation}, it is sufficient to prove that 
    for any $\lm\in \LM$ with $\lm\models \Gamma$, if $\lm\models \sigma' : [\alpha']\phi$ for all $(\alpha', \sigma')\in \Phi$, 
    then $\lm\models \sigma : [\alpha]\phi$. 
    For any relation $\lm(\sigma)\xrightarrow{\alpha/\alpha_0}w_0$ with some $\alpha_0\in \Prog$ and $w_0\in \Wd$, by Item~\ref{item:coincidence} of Definition~\ref{def:Matching Proof System}, there is a label $\sigma_0\in \Conf$ such that $\lm(\sigma_0) = w_0$ and $\models (\Gamma\Rightarrow{} (\alpha, \sigma)\trans(\alpha_0, \sigma_0))$. 
    So $\pfDLp\vdash (\Gamma\Rightarrow (\alpha, \sigma)\trans(\alpha_0, \sigma_0))$ (by Item~\ref{item:assump 3} of Definition~\ref{def:Matching Proof System}). 
    Hence $(\alpha_0, \sigma_0)\in \Phi$, 
    and by assumption, $\Kr,\lm\models \sigma_0 : [\alpha_0]\phi$. 
    By the arbitrariness of $\alpha_0$ and $\sigma_0$, we have $\lm\models \sigma : [\alpha]\phi$ according to the semantics of formula $[\alpha]\phi$ (Definition~\ref{def:Semantics of dynamic logical formulas}). 

    \ifx new proof for the case ([alpha]R)
    From $\vdash (\Gamma\Rightarrow (\alpha, \sigma)\trans(\alpha', \sigma'))$, by the assumption made about the soundness of $\pfDLp$ w.r.t. $\PT$ (Item~\ref{item:assump 2} of Definition~\ref{def:Matching Proof System}), 
    $\models (\Gamma\Rightarrow (\alpha, \sigma)\trans(\alpha', \sigma'))$. 
    Since $\lm\models \Gamma$, $\lm\models (\alpha, \sigma)\trans(\alpha', \sigma')$, which means $\lm(\sigma)\xrightarrow{\alpha/\alpha'}\lm(\sigma')$ is a relation on $\Kr$. 
    From $\Kr,\lm\models \sigma' : [\alpha']\phi$ and the arbitrary of $(\alpha', \sigma')$, by the semantics of dynamic formulas (Definition~\ref{def:Semantics of dynamic logical formulas}), we have $\lm\models \sigma : [\alpha]\phi$. 
    \fi
    
    \ifx old proof for the case ([alpha]R)
    For any relation $\lm(\sigma)\xrightarrow{\alpha/\alpha_0}w_0$ with some $\alpha_0\in \Prog$ and $w_0\in \Wd$, by the completeness w.r.t. $\PT$ of system $\pfDLp$ (Item~\ref{item:assump 3} of Definition~\ref{def:Matching Proof System}), there is a label $\sigma_0\in \Conf$ such that $\lm(\sigma_0) = w_0$ and $\pfDLp\vdash (\cdot\Rightarrow{} (\alpha, \sigma)\trans(\alpha_0, \sigma_0))$. 
    So $(\alpha_0, \sigma_0)\in \Phi$. 
    Hence by assumption, $\Kr,\lm\models \sigma_0 : [\alpha_0]\phi$. 
    By the arbitrariness of $\alpha_0$ and $\sigma_0$, we have $\Kr, \lm\models \sigma : [\alpha]\phi$ according to the semantics of formula $[\alpha]\phi$ (Definition~\ref{def:Semantics of dynamic logical formulas}). 
    \fi


    For rule $([\alpha]L)$,  
    by Lemma~\ref{lemma:from logical consequence to sequent derivation 2}, it is sufficient to prove that 
    for any $\lm\in \LM$ with $\lm\models \Gamma$, if $\lm\models \sigma : [\alpha]\phi$, 
    then $\lm\models \sigma' : [\alpha']\phi$. 
    For any execution path $\lm(\sigma')\xrightarrow{\alpha'/\cdot}...\xrightarrow{\cdot/\ter}w$ of $\alpha'$ for some $w\in \Wd$, by the soundness of the side deduction $\pfDLp\vdash (\Gamma\Rightarrow{} (\alpha, \sigma)\trans(\alpha', \sigma'), \Delta)$, $\lm(\sigma)\xrightarrow{\alpha/\alpha'}\lm(\sigma')$ is a relation on $\Kr$, so 
    $\lm(\sigma)\xrightarrow{\alpha/\alpha'}\lm(\sigma')\xrightarrow{\alpha'/\cdot}...\xrightarrow{\cdot/\ter}w$ is an execution path of $\alpha$. 
    By the semantics of the dynamic formulas (Definition~\ref{def:Semantics of dynamic logical formulas}), we obtain the result.


    For rule $(\Sub)$, 
    we need to prove that the validity of $\Gamma\Rightarrow\Delta$ implies the validity of $\Sub(\Gamma)\Rightarrow \Sub(\Delta)$. 
    For any $\lm\in \LM$ satisfying $\lm\models \Sub(\Gamma)$, 
    by Definition~\ref{def:Substitution of Labels}, there exists a $\lm'$ such that for any formula $\sigma : \phi\in \Gamma\cup \Delta$, $\lm'(\sigma) = \lm(\Sub(\sigma))$. 
    So $\Kr, \lm'\models \Gamma$. 
    Since $\Gamma\Rightarrow \Delta$ is valid, 
    $\Kr, \lm'\models \sigma_0 : \phi_0$ for some $\sigma_0 : \phi_0\in \Delta$. 
    By that $\lm'(\sigma_0) = \lm(\Sub(\sigma_0))$, $\lm\models \Sub(\sigma_0) : \phi_0$ with $\Sub(\sigma_0) : \phi_0\in \Sub(\Delta)$. 
    By the arbitrariness of $\lm$, $\Sub(\Gamma)\Rightarrow \Sub(\Delta)$ is valid. 
    
    \ifx
    it is sufficient to prove that for any label mapping $\lm\in \LM$, there exists a $\lm'\in \LM$ such that for each formula $\tau\in \Gamma\cup \Delta$, 
    we have $\lm(\Sub(\tau)) = \lm'(\tau)$. 
    But this just matches the definition of $\Sub$ in Definition~\ref{def:Abstract Substitution}. 
    \fi

    \ifx
    For rule $(\sigma \textit{Sub})$, 
    for any evaluation $\rho\in \Eval$, let $\rho'\dddef \rho[x\mapsto \rho(t)]$. 
    By the definition of substitutions in Section~\ref{section:Terms and Substitutions}, we can have that 
    for any term $u\in \TA$, $\rho(u[t/x]) \equiv \rho|_{\FV(u)\setminus\FV_x(u)}(u)[\rho(t)/x] \equiv \rho'^*|_{\FV(u)\setminus\FV_x(u)}(u)[\rho(t)/x] \equiv \rho'^*(u)$. 
    Therefore, it is easy to prove that $\mfr{P}(\Gamma\Rightarrow \Delta)$ implies $\mfr{P}(\Gamma[t/x]\Rightarrow \Delta[t/x])$. 
    \fi
\end{proof}

{\parindent 0pt
\textbf{Content of Proposition~\ref{prop:relation mult is partially ordered}:} 
$\mult$ is a partial-order relation. 

Before proving Proposition~\ref{prop:relation mult is partially ordered}, we firstly review the notion of \emph{partial-order relation}. 

A relation $\preceq$ on a set $S$ is \emph{partially ordered}, if 
it satisfies the following properties: 
(1) Reflexivity. $t\preceq t$ for each $t\in S$. 
(2) Anti-symmetry. For any $t_1, t_2\in S$, if $t_1\preceq t_2$ and $t_2\preceq t_1$, then $t_1$ and $t_2$ are the same element in $S$, we denote by $t_1 = t_2$.
(3) Transitivity. For any $t_1, t_2, t_3\in S$, if $t_1\preceq t_2$ and $t_2\preceq t_3$, then $t_1\preceq t_3$. 
$t_1 \prec t_2$ is defined as $t_1\preceq t_2$ and $t_1\neq t_2$. 

Recall that we use $\suf$ to represent the suffix relation between execution paths. $\suf$ is obviously a partial-order relation. 

\begin{proof}[Proof of Proposition~\ref{prop:relation mult is partially ordered}]
    The reflexivity is trivial. 
    The transitivity can be proved by the definition of `replacements' as described in Definition~\ref{def:Relation pmult} and the transitivity of relation $\psuf$.  
    Below we only prove the anti-symmetry. 

    For any finite sets $D_1, D_2$ of finite paths, if $D_1\mult D_2$ but $D_1\neq D_2$, let $f_{D_1,D_2} : D_1\to D_2$ be the function 
    defined such that for any $tr\in D_1$, either (1) $f_{D_1,D_2}(tr) \sufeq tr$; or (2) $tr$ is one of the replacements of a replaced element $f_{D_1,D_2}(tr)$ in $D_2$ with $tr\psuf f_{D_1,D_2}(tr)$. 

    For the anti-symmetry, 
    suppose $\cnt_1\mult\cnt_2$ and $\cnt_2\mult\cnt_1$ but $\cnt_1\neq \cnt_2$. 
    Let $tr\in \cnt_1$ but $tr\notin \cnt_2$. 
    Then from $\cnt_1\mult \cnt_2$, we have $tr\psuf f_{\cnt_1, \cnt_2}(tr)$. 
    If $f_{\cnt_1, \cnt_2}(tr)\in \cnt_1$, then we must have $f_{\cnt_1, \cnt_2}(tr)\pmult f_{\cnt_1, \cnt_2}(f_{\cnt_1, \cnt_2}(tr))$ because $f_{\cnt_1, \cnt_2}(tr)$ is already a replaced element in $\cnt_2$. 
    If $f_{\cnt_1, \cnt_2}(tr)\notin\cnt_1$, then by $\cnt_2\mult\cnt_1$, we have $f_{\cnt_1, \cnt_2}(tr)\psuf f_{\cnt_2, \cnt_1}(f_{\cnt_1, \cnt_2}(tr))$. 
    Continuing this process by considering $f_{\cnt_1, \cnt_2}(f_{\cnt_1, \cnt_2}(tr))$ or $f_{\cnt_2, \cnt_1}(f_{\cnt_1, \cnt_2}(tr))$ and further elements, we in fact can construct an infinite descent sequence
    like $tr\psuf f_{\cnt_1, \cnt_2}(tr)\psuf f_{\cnt_1, \cnt_2}(f_{\cnt_1, \cnt_2}(tr))\psuf ...$ w.r.t. relation $\suf$, which violates its well-foundedness. 
    So the only possibility is $\cnt_1 = \cnt_2$. 
    
    \ifx
    Since $f_{\cnt_1, \cnt_2}(tr)$ is the replaced element, so $f_{\cnt_1, \cnt_2}(tr)\notin \cnt_1$. 
    By $\cnt_2\mult\cnt_1$, we have $f_{\cnt_1, \cnt_2}(tr)\psuf f_{\cnt_2, \cnt_1}(f_{\cnt_1, \cnt_2}(tr))$. 
    Continuing this process, we can construct an infinite descent sequence
    $tr\psuf f_{\cnt_1, \cnt_2}(tr)\psuf f_{\cnt_2, \cnt_1}(f_{\cnt_1, \cnt_2}(tr))\psuf ...$ w.r.t. relation $\suf$, which violates its well-foundedness. 
    So the only possibility is $\cnt_1 = \cnt_2$. 
    \fi
    
\end{proof}

{\parindent 0pt
\textbf{Content of Lemma~\ref{lemma:infinite descent sequence}:}  
    In a cyclic proof (where there is at least one derivation path), 
    let $(\sigma:\phi, \sigma':\phi')$ be a step of a derivation trace over a derivation $(\nu, \nu')$ of an invalid derivation path, where $\phi, \phi'\in \DLF$.  
    For any set $\EX(\lm(\sigma), \phi)$ of $\sigma : \phi$ w.r.t. a counter-example mapping $\lm$ of $\nu$, 
    there exists a counter-example mapping $\lm'$ of $\nu'$ and a set $\EX(\lm'(\sigma'), \phi')$ of $\sigma':\phi'$ such that $\EX(\lm'(\sigma'), \phi')\mult \EX(\lm(\sigma), \phi)$. Moreover, if $(\sigma:\phi, \sigma':\phi')$ is a progressive step, then 
    $\EX(\lm'(\sigma'), \phi')\pmult \EX(\lm(\sigma), \phi)$. 
}

\begin{proof}[Proof of Lemma~\ref{lemma:infinite descent sequence}]
    Consider the rule application from node $\nu$, we only consider the cases when it is an instance of rule $([\alpha]R)$, rule $([\alpha]L)$, rule $(\Sub)$, 
    and rule $(\wedge R)$, and when the first element of the CP pair we consider is their target formula.  

    \textbf{Case for rule $([\alpha]R)$: }
    If from node $\nu$ rule $([\alpha]R)$ is applied with $\tau\dddef \sigma : [\alpha]\phi$ the target formula, 
    let $\tau' = (\sigma' : [\alpha']\phi)$ for some $\alpha'$ and $\sigma'$, so $\nu=(\Gamma\Rightarrow \tau, \Delta)$ and $\nu' = (\Gamma\Rightarrow \tau', \Delta)$. 
    (In this case, $(\nu, \nu')$ is already a progressive step. )
    Since $\lm$ is a counter-example of $\nu$, $\lm\not\models \tau$, so $\mex(\lm(\sigma), \alpha)\neq \emptyset$. Thus $\EX(\lm(\sigma), [\alpha]\phi)\neq \emptyset$. 
    By the soundness of rule $([\alpha]R)$ and the assumption that $\pfDLp\vdash \Gamma\Rightarrow (\alpha, \sigma)\xrightarrow{}(\alpha', \sigma'),\Delta$, for each path $tr = \lm(\sigma')s_1...s_n\in \EX(\lm(\sigma'), [\alpha']\phi)$ ($n\ge 0$), 
    path $\lm(\sigma)tr\in \EX(\lm(\sigma), [\alpha]\phi)$ has $tr$ as its proper suffix. 
    By Definition~\ref{def:Program Properties}, 
    $\EX(\lm(\sigma), [\alpha]\phi)$ and $\EX(\lm(\sigma'), [\alpha']\phi)$
    are also finite. 
    Therefore $\EX(\lm(\sigma'), [\alpha']\phi)\pmult \EX(\lm(\sigma), [\alpha]\phi)$. 
    
     \textbf{Case for rule $([\alpha]L)$: }
     If from node $\nu$ rule $([\alpha]L)$ is applied with $\tau\dddef \sigma : [\alpha]\phi$ the target formula, 
    let $\tau' = (\sigma' : [\alpha']\phi)$ for some $\alpha'$ and $\sigma'$, so $\nu=(\Gamma\Rightarrow \tau, \Delta)$ and $\nu' = (\Gamma\Rightarrow \tau', \Delta)$. 
    By the soundness of rule $([\alpha]L)$ and the assumption that $\pfDLp \vdash \Gamma\Rightarrow (\alpha, \sigma)\xrightarrow{}(\alpha', \sigma'),\Delta$, 
    for each path $tr = \lm(\sigma')s_1...s_n\in \EX(\lm(\sigma'), [\alpha']\phi)$ ($n\ge 0$), 
    path $\lm(\sigma)tr\in \EX(\lm(\sigma), [\alpha]\phi)$ has $tr$ as its proper suffix. 
     By Definition~\ref{def:Program Properties}, 
    $\EX(\lm(\sigma), [\alpha]\phi)$ and $\EX(\lm(\sigma'), [\alpha']\phi)$
    are also finite. 
    Therefore $\EX(\lm(\sigma'), [\alpha']\phi)\mult \EX(\lm(\sigma), [\alpha]\phi)$, where the equivalence relation $=$ holds only when $\EX(\lm(\sigma), [\alpha]\phi) = \emptyset$.
    When $(\tau, \tau')$ is progressive, which means that we also have the derivation $\pfDLp\vdash (\Gamma\Rightarrow \alpha \termi\sigma, \Delta)$, then $\mex(\lm(\sigma), \alpha)\neq \emptyset$. 
    This means that $\EX(\lm(\sigma), [\alpha]\phi)\neq \emptyset$. 
    Therefore $\EX(\lm(\sigma'), [\alpha']\phi)\pmult \EX(\lm(\sigma), [\alpha]\phi)$. 

    \ifx AN OLD PROOF
     Now consider two cases: 
    (1) If $\mex(\lm(\sigma), \alpha) = \emptyset$, then by the above $\mex(\lm(\sigma'), \alpha')$ is also empty. So by the definition of $\EX$, $\EX(\lm(\sigma), [\alpha]\phi) = \EX(\lm(\sigma'), [\alpha']\phi)$; 
    (2) If $\mex(\lm(\sigma), \alpha)\neq \emptyset$, especially, when $(\tau, \tau')$ is a progressive step with the support of the derivation $\pfDLp\vdash (\Gamma\Rightarrow \alpha' \termi\sigma', \Delta))$, 
    then $\EX(\lm(\sigma), [\alpha])\neq\emptyset$, 
    moreover, 
    we have $\EX(\lm(\sigma'), [\alpha']\phi)\pmult \EX(\lm(\sigma), [\alpha]\phi)$. 
    \fi
    
    \textbf{Case for rule $(\Sub)$: }
    If from node $\nu$ a substitution rule $(\Sub)$ is applied, 
    let $\tau = \Sub(\sigma) : \phi$ be the target formula of $\nu$, 
    then $\tau' = \sigma : \phi$. 
    By Definition~\ref{def:Substitution of Labels}, 
    for the label mapping $\lm$, 
    there exists a $\lm'$ such that 
    $\lm'(\sigma) = \lm(\Sub(\sigma))$. 
    So $\EX(\lm(\Sub(\sigma)), \phi) = \EX(\lm'(\sigma), \phi)$. 

    \ifx
    If from node $\nu$ a substitution rule $(\Sub)$ is applied, 
    let $\tau = \Sub(\sigma : \phi) = \Sub_p(\sigma) : \Sub(\phi)$ (for some $\Sub_p : \Prog\to \Prog$) be the target formula of $\nu$, 
    then $\tau' = \sigma : \phi$. 
    By Definition~\ref{def:Substitution of Labels}, 
    for the label mapping $\lm$, 
    there exists a $\lm'$ such that 
    $\lm\models \Sub(\sigma : \phi)$ iff $\lm'\models \sigma : \phi$. 
    So the invalidity of $\Sub(\sigma : \phi)$ implies the invalidity of $\sigma : \phi$ (, where $\lm'$ is a counter example mapping of $\sigma : \phi$). 
    Moreover, 
    \fi

    \textbf{Case for rule $(\wedge R)$: }
    If from node $\nu$ rule $(\wedge R)$ is applied, 
    let $\tau = \sigma : \phi\wedge \psi$ be the target formula of $\nu$, 
    and $\tau' = \sigma : \phi$. 
    By the definition of $\EX$ (Definition~\ref{def:counter-example set}), 
    we have $\EX(\lm(\sigma), \phi)\subseteq \EX(\lm(\sigma), \sigma:\phi\wedge \psi)$, 
    so $\EX(\lm(\sigma), \phi)\mult \EX(\lm(\sigma), \sigma:\phi\wedge \psi)$. 
\end{proof}

From the case of rule $([\alpha]L)$ in the above proof, 
we see that derivation $\pfDLp\vdash (\Gamma\Rightarrow \alpha \termi\sigma, \Delta))$ 
imposes $\EX(m(\sigma), [\alpha]\phi)\neq \emptyset$, which is the key to prove the strict relation $\pmult$ between $\EX(\lm(\sigma'), [\alpha']\phi)$ and $\EX(\lm(\sigma), [\alpha]\phi)$. 

{\parindent 0pt
\textbf{Content of Proposition~\ref{prop:lifting process 2}:} 
Given a sound rule of the form
$$
\begin{aligned}
    \infer[]
    {\Gamma\Rightarrow \Delta}
    {\Gamma_1\Rightarrow \Delta_1
    &...&
    \Gamma_n\Rightarrow \Delta_n}
\end{aligned}, \mbox{ $n\ge 1$}, 
$$
in which all formulas are unlabeled, then the rule 
$$
\begin{aligned}
    \infer[]
    {\sigma : \Gamma\Rightarrow \sigma : \Delta}
    {\sigma : \Gamma_1\Rightarrow \sigma : \Delta_1
    &...&
    \sigma : \Gamma_n\Rightarrow \sigma : \Delta_n}
\end{aligned}
$$
is sound for any label $\sigma\in \free(\Conf, \Gamma\cup \Delta\cup \Gamma_1\cup \Delta_1\cup...\cup\Gamma_n\cup\Delta_n)$. 
}

\begin{proof}[Proof of Proposition~\ref{prop:lifting process 2}]
    Let $A = \Gamma\cup \Delta\cup \Gamma_1\cup \Delta_1\cup...\cup\Gamma_n\cup\Delta_n$. 
    Assume sequents $\sigma : \Gamma_i\Rightarrow \sigma : \Delta_i$ $(1\le i\le n)$ are valid, we need to prove that sequent $\sigma : \Gamma\Rightarrow \sigma : \Delta$ is valid. 
    First, notice that each sequent $\Gamma_i\Rightarrow \Delta_i$ ($1\le i\le n$) is valid. 
    Because since $\sigma \in \free(\Conf, A)$, for any $s\in \Wd$ such that $s\models \Gamma_i$, there is a label mapping denoted by $\lm_s\in \LM$ with $s=_A \lm_s(\sigma)$ such that $\lm_s(\sigma)\models \Gamma_i$ (Definition~\ref{def:Same Effects}), so $\lm_s\models \sigma : \Gamma_i$ (By Definition~\ref{definition:Semantics of labeled Dlp Formulas}). By the validity of the sequent $\sigma : \Gamma_i\Rightarrow \sigma : \Delta_i$, $\lm_s\models \sigma : \phi$ for some $\phi\in \Delta_i$, which means $\lm_s(\sigma)\models \phi$. Hence $s\models \phi$. 
    From the validity of $\Gamma_i\Rightarrow\Delta_i$, we get that $\Gamma\Rightarrow \Delta$ is valid. 
    For any $\lm\in \LM$, if $\lm\models \sigma : \Gamma$, then $\lm(\sigma)\models \Gamma$ (Definition~\ref{definition:Semantics of labeled Dlp Formulas}). 
    By the validity of $\Gamma\Rightarrow\Delta$, we have $\lm(\sigma)\models \phi$ for some $\phi\in \Delta$. But this just means $\lm\models \sigma : \phi$. 
    Therefore, we have concluded that $\sigma : \Gamma\Rightarrow \sigma : \Delta$ is valid. 
    
    \ifx
    Let $A = \Gamma\cup \Delta\cup \Gamma_1\cup \Delta_1\cup...\cup\Gamma_n\cup\Delta_n$. 
    Assume formulas $\mfr{P}(\sigma : \Gamma_1\Rightarrow \sigma : \Delta_1)$,...,$\mfr{P}(\sigma : \Gamma_n\Rightarrow \sigma : \Delta_n)$ are valid, 
    we need to prove the validity of formula $\mfr{P}(\sigma : \Gamma\Rightarrow \sigma : \Delta)$. 
    First, notice that formula $\mfr{P}(\Gamma_i\Rightarrow \Delta_i)$ ($i\ge 1$) is valid, 
    because for any evaluation $\rho\in \Eval$, by Definition~\ref{def:free configurations}-\ref{item:free configurations cond 2} there exists a $\rho'\in \Eval$ satisfying that $(\rho', \sigma)\se_A \rho$. 
    So for any formula $\phi\in A$, $\rho\models \phi$ iff $\rho'\models \sigma : \phi$. 
    Therefore, formula $\mfr{P}(\Gamma\Rightarrow \Delta)$ is valid. 
    On the other hand, from that $\mfr{P}(\Gamma\Rightarrow \Delta)$ is valid, 
    we can get that $\mfr{P}(\sigma : \Gamma\Rightarrow \sigma : \Delta)$ is valid. 
    This is because for any evaluation $\rho\in \Eval$, by Definition~\ref{def:free configurations}-\ref{item:free configurations cond 1} there is a $\rho'\in \Eval$ such that $(\rho,\sigma)\se_A\rho'$, which means for any formula $\phi\in A$, $\rho\models\sigma : \phi$ iff $\rho'\models \phi$. 
    \fi
\end{proof}


{\parindent 0pt
\textbf{Content of Lemma~\ref{lemma:completeness of DLp a more general case}:} 
    Under the same conditions as in Theorem~\ref{theo:Completeness of DLp}, for any valid sequent of the form: $\Gamma\Rightarrow \sigma : [\la\alpha \ra]\phi$, $\pfDLp \vdash (\Gamma\Rightarrow \sigma : [\la\alpha \ra]\phi)$. 

\begin{proof}[Proof of Lemma~\ref{lemma:completeness of DLp a more general case}]
    Let $\nu\dddef (\Gamma\Rightarrow \sigma : [\la\alpha\ra]\phi)$. 
    We proceed by simultaneous induction on the number $M_\nu$ of the modalities $[\cdot]$ or $\la\cdot \ra$ in $\nu$ and the maximum number $N_{\alpha}$ of the different programs along the program sequences starting from $\alpha$ for all labels and contexts (see Definition~\ref{def:Expression Finiteness Property}). 

    \textbf{Base case}. $M_\nu = N_\alpha = 1$, so in the sequent $\nu$ there is only one modality, which is $[\la\alpha\ra]$, and program $\alpha$ is either $\ter$ or a loop program $\neq \ter$ that can only perform transitions of the form: $(\alpha, \sigma)\trans (\alpha, \sigma')$ for some $\alpha'$ under some context.
    The case for $\alpha = \ter$ is trivial, as by applying rule $([\ter])$ or rule $(\la\ter\ra)$ and rule $(\textit{Ter})$, we can directly obtain the result. 
    For the case when $\alpha\neq \ter$ is a loop program, the proof is just a special case to the proof for the step case as follows. 

    \textbf{Step case}. 
    This case is divided into two parts. Part I describes the process of constructing the derivation of sequent $\nu$ (named ``Proc'' below); Part II further proves that this derivation is cyclic.   
    
    \textit{Part I}: 
    Without loss of generality, suppose $\alpha$ is a loop program. 
    \ifx
    Starting from $(\alpha, \sigma)$ under the context $\Gamma$, we can categorize the core $\alpha$ sequences into the following two types: 
    $A = \{\Gamma : (\alpha, \sigma)...(\alpha,\sigma_i)\}_{i\ge 1}$, the set of all core $\alpha$-loop sequences
    and 
    $B = \{\Lambda_j : (\alpha, \sigma)(\beta, \varsigma_j)...\ |\ j\ge 1, \models (\Lambda_j\Rightarrow \Gamma)\}_{j\ge 1}$. 
    $A$ is the set of all $\alpha$-loop sequences starting from $(\alpha, \sigma)$ under context $\Gamma$, while $B$ is the set of the other $\alpha$ sequences starting from $(\alpha, \sigma)$ under $\Gamma$. 
    \fi
    For the label $\sigma\in \Conf$ and the context $\Gamma$, 
    by Definition~\ref{def:Well-behaved Loop Programs}, there exist a label $\sigma'$, a context $\Gamma'$ and a substitution $\eta$ satisfying that
    \begin{enumerate}[(a)]
    \item $\sigma = \eta(\sigma')$ and $\Gamma = \eta(\Gamma')$;
    \item $\models (\Gamma\Rightarrow \sigma\termi \alpha)$ implies $\models (\Gamma'\Rightarrow \sigma'\termi\alpha)$;
    \item for each $\alpha$-loop sequence: $\Gamma' : (\alpha, \sigma', \Gamma_1)...(\alpha, \sigma'', \Gamma_n)$ ($n\ge 1$), 
    there exist a context $\Gamma''$ and a substitution $\xi$ such that $\Gamma'' = \xi(\Gamma')$, $\sigma'' = \xi(\sigma')$ and $\models (\Gamma_n\Rightarrow \Gamma'')$.  
    \end{enumerate}
    From $\Gamma\Rightarrow \sigma : [\la\alpha\ra]\phi$, 
    we can have the following derivation named ``Proc'': 
    $$
    \begin{aligned}
        \infer[^{(\Sub)}]
        {1:\Gamma\Rightarrow \sigma : [\la\alpha\ra]\phi}
        {
            \infer[]
            {2:\Gamma'\Rightarrow \sigma' : [\la\alpha\ra\phi]}
            {
                \infer*[]
                {...}
                {
                    \infer[]
                    {...}
                    {
                        ...
                        &
                        \infer[^{([\la\alpha\ra]R)}]
                        {3:\Gamma_1\Rightarrow \sigma' : [\la\alpha\ra]\phi}
                        {
                            \infer*[]
                            {...}
                            {
                                \infer[^{([\la\alpha\ra]R)}]
                                {...}
                                {
                                    \infer[^{(\textit{Cut})}]
                                    {4:\Gamma_n\Rightarrow \sigma'' : [\la\alpha\ra]\phi}
                                    {
                                        \infer[]
                                        {8:\Gamma_n\Rightarrow \Gamma''}
                                        {(\mbox{Ind. Hypo.})}
                                        &
                                        \infer[^{(\textit{WkR})}]
                                        {\Gamma_n, \Gamma''\Rightarrow\sigma'' : [\la\alpha\ra]\phi}
                                        {
                                            \infer[^{(\Sub)}]
                                            {6:\Gamma''\Rightarrow\sigma'' : [\la\alpha\ra]\phi}
                                            {
                                                7:\Gamma'\Rightarrow\sigma' : [\la\alpha\ra]\phi
                                            }
                                        }
                                    }
                                }
                            }
                        }
                        &
                        ...
                        &
                        \infer*[]
                        {...}
                        {
                            \infer[]
                            {...}
                            {
                                \infer[^{([\la\alpha\ra]R)}]
                                {9:\Lambda\Rightarrow\varsigma : [\la\beta\ra]\phi}
                                {
                                    \infer[]
                                    {10:\Lambda\Rightarrow\varsigma' : [\la\beta'\ra]\phi}
                                    {(\mbox{Ind. Hypo.})}
                                }
                            }
                        }
                        &
                        ...
                    }
                }
            }
        }
    \end{aligned}
    $$
    The derivation from node 1 to 2 is according to (a). 
    From node 2, the context $\Gamma'$ is strengthened for the derivation of each $\alpha$ sequence starting from $(\alpha, \sigma')$. 
    This process can be realized by applying rule $(\textit{Cut})$ and the other rules for labeled propositional logical formulas as shown in Table~\ref{table:General Rules for LDL}.  
    Each $\alpha$-loop sequence: $\Gamma': (\alpha, \sigma', \Gamma_1)...(\alpha, \sigma'', \Gamma_n)$ ($n\ge 1$) corresponds to a derivation, named ``Sub-proc 1'', like the one from node 3 as shown in Proc. 
    From node 3 to 4 includes a series of derivation steps that symbolically executes the $\alpha$-loop sequence by applying the rule $([\alpha]R)$ or $(\la\alpha\ra R)$ (denoted by $([\la\alpha\ra]R)$) and also the other rules for strengthening the contexts $\Gamma_2,...,\Gamma_n$. 
    The (c) above provides the evidence for both the derivation from node 6 to 7, and the validation of node 8. 
    The derivation from node 2 to 10 is a general case of an $\alpha$ sequence: $\Gamma' : (\alpha, \sigma', \cdot)...(\beta, \varsigma, \Lambda)(\beta', \varsigma', \Lambda)$, where from $(\beta', \varsigma')$ under $\Lambda$ program $\alpha$ can never be reached. 
    We name the derivation like the one from node 10 here as ``Sub-proc 2''. 
    
    \textit{Part II}: 
    Now we show that the derivation Proc is actually a cyclic proof. 
    We firstly show that the whole proof Proc is a preproof (i.e. a finite tree structure with buds), which is based on the following 3 proof statements. 
    \begin{enumerate}[(1)]
        \item The derivation part as shown above in Proc is finite. 
        On one hand, by the finiteness of set $\pfDLp$, each derivation step must have finite premises;
        On the other hand, 
    by Lemma~\ref{lemma:Finite Loop Sequences}, from node 1 there is a finite number of $\alpha$ sequences (by selecting a set of contexts for each core $\alpha$ sequence). 
    This means that the number of the branches Sub-proc 1 is finite. 
    \item Each Sub-proc 1 is a preproof branch. On one hand, By that $\xi(\Gamma') = \Gamma''$ (see (c) above), the number of modalities in $\Gamma''$ is the same as that in $\Gamma'$. On the other hand, by Item~\ref{item:Simple Conditions} of Definition~\ref{def:Matching Proof System}, when we strengthen the context $\Gamma'$ during the derivation from node 3 to 4, 
    we can make sure that compared to $\Gamma'$ there is no more dynamic formulas added in $\Gamma_1,...,\Gamma_n$. Therefore, the number of modalities in the sequent $\Gamma_n\Rightarrow \Gamma''$ equals to that in sequent $\nu$, and is thus strictly less than $M_\nu$. 
    So, by induction hypothesis, $\pfDLp\vdash (\Gamma_n\Rightarrow \Gamma'')$. 
    \item  Each Sub-proc 2 is a cyclic proof branch.  
    For any derivation step like the one from node 9 to 10, since from $(\beta', \varsigma')$ under $\Lambda$ the program $\alpha$ can never be reached, clearly we have $N_{\beta'} < N_\alpha$, because except $\alpha$ itself, any program that $\beta'$ can reach can be reached by $\alpha$ (through the $\alpha$ sequence: $\Gamma' : (\alpha, \sigma', \cdot)...(\beta,\varsigma,\Lambda)(\beta',\varsigma',\Lambda)$).  
    So by induction hypothesis, $\pfDLp \vdash (\Lambda\Rightarrow \varsigma' : [\la\beta'\ra]\phi)$. 
    \end{enumerate}

    It remains to show that in Proc along every derivation path, there exists a progressive derivation trace. 
    Observing that in Proc, every derivation path must at least pass through a preproof branch Sub-proc 1 for infinite times. 
    So, it is enough to show that each Sub-proc 1 has a progressive derivation trace. 
    In a Sub-proc 1, consider two cases:
    \begin{enumerate}[(i)]
    \item If the modality is $[\alpha]$, by Definition~\ref{def:Progressive Step/Progressive Derivation Trace}, in every inference of rule $([\alpha]R)$, the CP pair on the right of both sequents is a progressive step. 
    \item If the modality is $\la\alpha\ra$, 
    since $\models (\Gamma\Rightarrow \sigma : \la\alpha\ra\phi)$, 
    $\models (\Gamma\Rightarrow \sigma\termi \alpha)$. 
    By (b) above, $\models (\Gamma'\Rightarrow \sigma' \termi \alpha)$. 
    By the completeness w.r.t. $\PTer$ (Item~\ref{item:assump 3} of Definition~\ref{def:Matching Proof System}), 
    $\pfDLp\vdash  (\Gamma'\Rightarrow \sigma' : \la\alpha\ra\phi)$. 
    So according to Definition~\ref{def:Program Properties}, in the first inference of rule $(\la\alpha\ra R)$,  the CP pair on the right of both sequents is a progressive step. 
    \end{enumerate}
    In both cases above, the progressive derivation trace is: $\sigma':[\la\alpha\ra]\phi$ in node 3, ..., $\sigma'':[\la\alpha\ra]\phi$ in node 4, $\sigma'':[\la\alpha\ra]\phi$, $\sigma'':[\la\alpha\ra]\phi$ in node 6, $\sigma':[\la\alpha\ra]\phi$ in node 7, ... as shown in Proc.  
    
\end{proof}

\section{A Cyclic Deduction of An Esterel Program}
\label{section:Example Two: A Synchronous Loop Program}
\ifx
In this section, we introduce another example of instantiations of $\DLp$ and display its cyclic deductions in system $\pfDLp$. 
We show how system $\pfDLp$ manages to verify a program whose ``loop structures'' are implicit. That is where $\DLp$ is really useful as the loop information can be tracked in the cyclic proof structures of $\DLp$.  
This example also depicts that how synchronous languages can benefit from symbolic-based reasoning in order to avoid additional program transformations. 
\fi


Esterel~\cite{Berry92} is a synchronous programming language for reactive systems. 
Below we first introduce the semantics of an Esterel program, then we explain why it needs extra program transformations in traditional verification frameworks.
Lastly, we explain how to express Esterel programs in $\DLp$ and give a cyclic deduction of this program. 

Note that the introduction we provide below is informal and does not cover all aspects of the semantics of Esterel. 
But it is enough to clear our point.

\subsection{An Esterel Program in $\DLp$}

We consider a synchronous program $\E$ of an instantiation $\Prog_\E$ of programs written in Esterel language~\cite{Berry92}:
$$
\E\dddef \{\Etrap\ A \parallel B\ \Eend\ \},
$$
where 
\begin{center}
    $\begin{aligned}
        A \dddef&\ \{\Eloop\ (\Eemit\ S(0)\ ;\ x := x - S\ ;\ \Eif\ x = 0\ \Ethen\ \Eexit\ \Eend\ ;\ \Epause)\ \Eend\ \}\\
        B \dddef&\ 
                \{\Eloop\ (\Eemit\ S(1)\ ;\ \Epause)\ \Eend\ \}. 
    \end{aligned}$
\end{center}

The behaviour of a synchronous program is characterized by a sequence of \emph{instances}.
At each instance, several (atomic) executions of a program may occur. 
The value of each variable is unique in an instance. 
When several programs run in parallel, 
their executions at one instance are thought to occur simultaneously. 
In this manner, the behaviour of a parallel synchronous program is deterministic.

In this example, the behaviour of the program $\E$ is illustrated as follows:
\begin{itemize}
    \item $x$, $S$ are two variables. $x$ is a local variable with $\mbb{Z}$ as its domain, $S$ is a ``signal'' whose domain is $\mbb{Z}\cup \{\bot\}$, with $\bot$ indicating the absense of a signal. 
    \item The key word $\Epause$ marks the end of an instance, when all signals are set to $\bot$, representing the state ``absence''.  
    \item Signal emission $\Eemit\ S(e)$ means assigning the value of an expression $e$ to a signal $S$ and broadcasts the value. $x:=x - S$ is the usual assignment as in FODL.  
    \item At each instance, program $A$ firstly emits signal $S$ with value $0$ and subtracts $x$ with the current value of $S$; 
then checks if $x = 0$. While program $B$ emits signal $S$ with value $1$. 
The value of signal $S$ in one instance should be the sum of all of its emitted values by different current programs. So the value of $S$ should be $1 + 0 = 1$. 
    \item The whole program $E$ continues executing until condition $x = 0$ is satisfied, when $\Eexit$ terminates the whole program by jumping out of the $\Etrap$ statement.  
\end{itemize}

\ifx
the key word $\Epause$ marks the end of an instance, when all signals are set to $\bot$, representing the state `absence'.  
The $\Eloop$ statements in programs $A$ and $B$ are executed repeatedly for an infinite number of times until some $\Eexit$ statement is encountered. 
Signal emission $\Eemit\ S(e)$ means assigning the value of an expression $e$ to a signal $S$ and broadcasts the value. 
At each instance, program $A$ firstly emits signal $S$ with value $0$ and subtracts $x$ with the current value of $S$; 
then checks if $x = 0$. 
While program $B$ emits signal $S$ with value $1$. 
The value of signal $S$ in one instance should be the sum of all of its emitted values by different current programs. 
So the value of $S$ should be $1 + 0 = 1$. 
The whole program $E$ continues executing until $x = 0$ is satisfied, when $\Eexit$ terminates the whole program by jumping out of the $\Etrap$ statement.  
\fi

\renewcommand{\arraystretch}{1.5}
\begin{table}[tb]
         \begin{center}
         \noindent\makebox[\textwidth]{%
         \scalebox{1}{
         \begin{tabular}{c | c | c | c }
         \toprule
         World $w$ & $w(x)$ & $w(S)$ & $n$th Instance\\
         \midrule
        $w_1$ & $3$ & $1$ & 1 \\
        $w_2$ & $2$ & $1$ & 2 \\
        $w_3$ & $1$ & $1$ & 3 \\
        $w_4$ & $0$ & $1$ & 4 \\
         \bottomrule
         \end{tabular}
              }
              }
          \end{center}
          \caption{Transitional Behaviours of Program $\E$ Starting From $w_1$}
          \label{table:Program Behaviours of Program E Starting From w1}
    \end{table}

Starting from an initial world $w_1$ with $w_1(x) = 3$ and $w_1(S) = 1$, 
we have an execution path $w_1w_2w_3w_4$ of program $E$ explained in Table~\ref{table:Program Behaviours of Program E Starting From w1}, where we omit the intermediate forms of programs during the execution.

\subsection{Prior Program Transformations in Esterel Programs}
In Esterel, the behaviour of a parallel program is usually not true interleaving. There exist data dependencies between its processes. 
For instance, in the program $\E$ above, 
the assignment $x := x - S$ can only be executed after all values of $S$ in both programs $A$ and $B$ are collected. In other words, $x := x - S$ can only be executed after $\Eemit\ S(0)$ and $\Eemit\ S(1)$. 

For a synchronous program like this, additional program transformations are mandatory (cf.~\cite{Gesell12}). 
We need to first transform the program $\E$, for example, into a sequential one as:
$$
\E' = \{\Etrap\ C\ \Eend\}, 
$$
where 
$$
C \dddef \{\Eloop\ \Eemit\ S(0)\ ;\ \Eemit\ S(1)\ ;\ x:=x - S\ ;\ \Eif\ x = 0\ \Ethen\ \Eexit\ \Eend\ ;\ \Epause)\ \Eend\}. 
$$
In $C$, we collect all micro steps happen in an instance from both $A$ and $B$, in a correct order. 
In~\cite{Gesell12}, $\E'$ is called an STA program. 
Note that such a prior transformation can be very heavy, since one can imagine that the behaviour of a parallel Esterel program can be very complex (e.g.~\cite{Berry1989-so}). 

\ifx
For example, in~\cite{Gesell12}, a Quartz program often needs to be firstly transformed into a canonical form called ``synchronous tuple assignment'', in order to perform a compositional reasoning in Hoare logic. 
As seen next in Section~\ref{section:cyclic Deduction of Program E}, 
our approach in the verification framework of $\DLp$ provides a more direct reasoning based on symbolic executions compared to~\cite{Gesell12}. 
\fi

\ifx
In a parallel Esterel program, the executions of concurrent programs are usually dependent on each other to maintain the consistency of simultaneous executions, imposing special orders of atomic executions in one instance. 
In the program $E$ above, for instance, 
the assignment $x := x - S$ can only be executed after all values of $S$ in both programs $A$ and $B$ are collected.
In other words, the assignment $x := x - S$ can only be executed after $\Eemit\ S(0)$ and $\Eemit\ S(1)$. 
This characteristic of Esterel programs makes direct compositional reasoning impossible because 
the executions between concurrent programs in an instance are not true interleaving. 
One has to symbolically execute the parallel program as a whole in order to obtain the right execution orders in each instance. 
\fi

\ifx
when initializing $x$ as value $3$, 
the values of all variables at the end of each instance are listed as follows:
\begin{itemize}
    \item instance 1: $(x = 3, S = 1)$;
    \item instance 2: $(x = 2, S = 1)$;
    \item instance 3: $(x = 1, S = 1)$;
    \item instance 4: $(x = 0, S = 1)$.
\end{itemize}
\fi

\subsection{Instantiation of $\DLp$ in Esterel Programs}
The instantiation process is similar to while programs. 
Let $\Prog_\E$ be the set of Esterel programs and $\Var_\E$ be the set of local variables and signals in Esterel. We assume a \PLK\ structure $\Kr_\E = (\Wd_\E, \trans, \mcl{I}_\E)$ for $\Prog_\E$, where each world $w\in \Wd_\E$ is a mapping $w : \Var_E\to (\mbb{Z}\cup \{\bot\})$ that maps each local variable to an integer and maps each signal to a value of $\mbb{Z}\cup \{\bot\}$.

A program configuration $\sigma\in \Conf_\E$ in $\Prog_\E$, as a label, is defined to capture the meaning of the structure
\begin{center}
$\{x_1\mapsto e_1 \Split ... \Split x_n\mapsto e_n\}$ ($n\ge 1$),
\end{center}
 where the only difference from that of a while program is that it is a stack (with $x_n\mapsto e_n$ the top element), allowing several local variables with the same name. 
For example, configuration $\{x\mapsto 5 \Split y\mapsto 1 \Split y \mapsto 2\}$ 
has two different local variables $y$, storing the values $1$ and $2$ respectively. 

Given a world $w\in \Wd_\E$, a label mapping $\lm_w\in \LM_\E$ (w.r.t. $w$) is defined such that for any configuration $\sigma\in \Conf_\E$ of the form: $\{x_1\mapsto e_1 \Split ... \Split x_n\mapsto e_n\}$, 
$\lm_w(\sigma)$ is a world satisfying that 
\begin{enumerate}[(1)]
    \item $\lm_w(\sigma)(x_i) = w(e_j)$ with $n\ge j\ge i\ge 1$ and $j$ the largest index for $x_i\mapsto e_j$ in $\sigma$ (i.e. the right-most value of variable $x_i$);
    \item $\lm_w(\sigma)(y) = w(y)$ for other variable $y\in \Var_\E$, 
\end{enumerate}
where $w(e)$ for an expression $e$ is defined similarly as in while programs (see Example~\ref{example:Label Mappings}). 
For example, we have $\lm_w(\{x\mapsto 5 \Split y\mapsto 1 \Split y \mapsto 2\})(y) = w(2) = 2$ for any $\lm_w$. 

We omit the details of the set $(\pfOper)_\E$ of rules for the operational semantics of Esterel programs, as they are too complex (cf.~\cite{Butucaru10}).

\ifx
Different from while programs, 
in program $\E$, $x, S\in \Var_E$ are two variables of different types. 
$x$ is a ``local variable'', with $\mbb{Z}$ as its domain. 
$S$ is a variable called ``signal'' whose domain is $\mbb{Z}\cup \{\bot\}$, with $\bot$ to express the absence of a signal. 
We assume a \PLK\ structure $\Kr_\E = (\Wd_\E, \trans, \mcl{I}_\E)$ for $\Prog_\E$, where each world $w\in \Wd_\E$ is a mapping $w : \Var_E\to (\mbb{Z}\cup \{\bot\})$ that maps each local variable to an integer and maps each signal to a value of $\mbb{Z}\cup \{\bot\}$. 

A program configuration $\sigma\in \Conf_\E$ in $\Prog_\E$, as a label, is defined to capture the meaning of the structure
\begin{center}
$\{x_1\mapsto e_1 \Split ... \Split x_n\mapsto e_n\}$ ($n\ge 1$),
\end{center}
 where the only difference from that of a while program is that it is a stack (with $x_n\mapsto e_n$ the top element), allowing several local variables with the same name. 
For example, configuration $\{x\mapsto 5 \Split y\mapsto 1 \Split y \mapsto 2\}$ 
has two different local variables $y$, storing the values $1$ and $2$ respectively. 

Given a world $w\in \Wd_\E$, a label mapping $\lm_w\in \LM_\E$ (w.r.t. $w$) is defined such that for any configuration $\sigma\in \Conf_\E$ of the form: $\{x_1\mapsto e_1 \Split ... \Split x_n\mapsto e_n\}$, 
$\lm_w(\sigma)$ is a world satisfying that 
\begin{enumerate}[(1)]
    \item $\lm_w(\sigma)(x_i) = w(e_j)$ with $n\ge j\ge i\ge 1$ and $j$ the largest index for $x_i\mapsto e_j$ in $\sigma$ (i.e. the right-most value of variable $x_i$);
    \item $\lm_w(\sigma)(y) = w(y)$ for other variable $y\in \Var_\E$, 
\end{enumerate}
where $w(e)$ for an expression $e$ is defined similarly as in while programs (see Example~\ref{example:Label Mappings}). 
For example, we have $\lm_w(\{x\mapsto 5 \Split y\mapsto 1 \Split y \mapsto 2\})(y) = w(2) = 2$ for any $\lm_w$. 
\fi

\subsection{A Cyclic Deduction of Program $\E$}
\label{section:cyclic Deduction of Program E}
In $\DLp$, program $\E$ can be directly reasoned about without additional program transformations. 
This is achieved because $\DLp$ supports a cyclic reasoning directly based on the operational semantics. 
During the following derivation, we see that the outside loop structure of $\E$ (which is $C$ above after the transformations) is actually reflected by the cyclic derivation tree itself.

We prove the property 
$$
\nu_2 \dddef \sigma_1 : x > 0\Rightarrow \sigma_1 : \la E\ra\true, 
$$
which says that under configuration $\sigma_1 = \{x\mapsto v, S\mapsto\bot\}$, with $v$ a fresh variable representing an initial value of $x$, if $x > 0$, then $E$ can finally terminate. 

\renewcommand{\arraystretch}{1}
\begin{table}[tb]
        \noindent\makebox[\textwidth]{%
        \scalebox{1}{
        \begin{tabular}{l|l}
        \toprule
        \begin{tabular}{l}
        \begin{tikzpicture}[->,>=stealth', node distance=3cm]
        \node[draw=none] (txt2) {
            $
                \infer[^{(\la\alpha\ra)}]
                {\mbox{$\nu_2 : 1$}}
                {\infer[^{(\la\alpha\ra)}]
                    {2}
                    {\infer[^{(\la\alpha\ra)}]
                        {3}
                        {\infer[^{(\sigma\textit{Cut})}]
                            {4}
                            {
                                \infer[^{(\textit{Wk} R)}]
                                {5}
                                {
                                    \infer[^{(\textit{Ter})}]
                                    {17}
                                    {}
                                }
                                &
                                \infer[^{(\sigma\vee L)}]
                                {6}
                                {
                                \infer[^{(\la\alpha\ra)}]
                                    {7}
                                    {
                                        \infer[^{(\la\alpha\ra )}]
                                        {12}
                                        {
                                            \infer[^{(\textit{LE})}]
                                            {13}
                                            {
                                                \infer[^{(\Sub)}]
                                                {14}
                                                {  
                                                    \infer[^{(\textit{LE})}]
                                                    {15}
                                                    {16}
                                                }
                                            }
                                        }
                                    }
                                &
                                \infer[^{(\la\alpha\ra)}]
                                    {8}
                                    {
                                    \infer[^{(\la\ter\ra)}]
                                        {9}
                                        {
                                        \infer[^{(\textit{Ter})}]
                                            {10}
                                            {
                                            }
                                        }
                                    }
                                }
                            }
                        }   
                    }
                }
            $
        };


        \path
        ;


        \draw[dotted,thick,red] ([xshift=-0.7cm, yshift=1.75cm]txt2.center) -- 
        ([xshift=-1.5cm, yshift=1.75cm]txt2.center) --
        ([xshift=-1.5cm, yshift=-1.75cm]txt2.center) --
        ([xshift=-1.25cm, yshift=-1.75cm]txt2.center);
        \end{tikzpicture}
        \end{tabular}
        &
        \begin{tabular}{l}
        Definitions of other symbols:\\
        $A \dddef
\Eloop\ (\Eemit\ S(0)\ ;\ x := x - S\ ;\ \Eif\ x = 0\ \Ethen\ \Eexit\ \Eend\ ;\ \Epause)\ \Eend$\\
        $B \dddef
        \Eloop\ (\Eemit\ S(1)\ ;\ \Epause)\ \Eend$\\
        $A' \dddef (\Eif\ x = 0\ \Ethen\ \Eexit\ \Eend\ ;\ \Epause)$\\
        $\sigma_1 \dddef \{x\mapsto v\Split S\mapsto \bot\}$\\
        $\sigma_2 \dddef \{x\mapsto v\Split S\mapsto 0\}$\\
        $\sigma_3 \dddef \{x\mapsto v\Split S\mapsto 1\}$\\
        $\sigma_4 \dddef \{x\mapsto v-1\Split S\mapsto 1\}$\\
        $\sigma_5 \dddef \{x\mapsto v-1\Split S\mapsto \bot\}$\\
        \end{tabular}
            
        \\
        \midrule
        \multicolumn{2}{l}{
            \begin{tabular}{l l l l}
            1: & $\sigma_1 : x > 0$ & $\Rightarrow$ & \ul{$\sigma_1 : \la \Etrap\ A\parallel B\ \Eend\ra\true$}
            \\
            2: & $\sigma_1 : x > 0$ & $\Rightarrow$ & \ul{$\sigma_2 : \la \Etrap \ ((x := x - S\ ; A')\ ;\ A)\parallel B)\ \Eend\ra\true$}
            \\
            3: & $\sigma_1 : x > 0$ & $\Rightarrow$ & \ul{$\sigma_3 : \la \Etrap \ ((x := x - S\ ; A')\ ;\ A)\parallel (\Epause\ ;\ B)\ \Eend\ra\true$}
            \\
            4: & $\sigma_1 : x > 0$ & $\Rightarrow$ & \ul{$\sigma_4 : \la \Etrap \ (A'\ ;\ A)\parallel (\Epause\ ;\ B)\ \Eend\ra\true$}
            \\
            5: &  $\sigma_1 : x > 0$ & $\Rightarrow$ & $\sigma_4 : \la \Etrap \ (A'\ ;\ A)\parallel (\Epause\ ;\ B)\ \Eend\ra\true, \sigma_1 : (x - 1 \neq 0\vee x - 1 = 0)$
            \\
            17: &  $\sigma_1 : x > 0$ & $\Rightarrow$ & $\sigma_1 : (x - 1 \neq 0\vee x - 1 = 0)$
            \\
            \midrule
            6: & $\sigma_1 : x > 0, \sigma_1 : (x - 1 \neq 0\vee x - 1 = 0)$ & $\Rightarrow$ & \ul{$\sigma_4 : \la \Etrap \ (A'\ ;\ A)\parallel (\Epause\ ;\ B)\ \Eend\ra\true$}
            \\
            7: & $\sigma_1 : x > 0, \sigma_1 : x - 1 \neq 0$ & $\Rightarrow$ & \ul{$\sigma_4 : \la \Etrap \ (A'\ ;\ A)\parallel (\Epause\ ;\ B)\ \Eend\ra \true$}
            \\
            12:& $\sigma_1 : x > 0, \sigma_1 : x - 1 \neq 0$ & $\Rightarrow$ & \ul{$\sigma_4 : \la \Etrap \ (\Epause\ ;\ A)\parallel (\Epause\ ;\ B)\ \Eend\ra \true$}
            \\
            13:& $\sigma_1 : x > 0, \sigma_1 : x - 1 \neq 0$ & $\Rightarrow$ & \ul{$\sigma_5 : \la \Etrap \ A\parallel B\ \Eend\ra \true$}
            \\
            14: & $\sigma_5 : x + 1 > 0, \sigma_5 : x \neq 0$ & $\Rightarrow$ & 
            \ul{$\sigma_5 : \la \Etrap \ A\parallel B\ \Eend\ra \true$}
            \\
            15: & $\sigma_1 : x + 1 > 0, \sigma_1 : x \neq 0$ & $\Rightarrow$ & \ul{$\sigma_1 : \la \Etrap \ A\parallel B\ \Eend\ra \true$}
            \\
            16:& $\sigma_1 : x > 0$ & $\Rightarrow$ & \ul{$\sigma_1 : \la \Etrap \ A\parallel B\ \Eend\ra \true$}
            \\
            \midrule
            8: & $\sigma_1 : x > 0, \sigma_1 : x - 1 = 0$ & $\Rightarrow$ & $\sigma_4 : \la \Etrap \ (A'\ ;\ A)\parallel (\Epause\ ;\ B)\ \Eend\ra \true$
            \\
            9: & $\sigma_1 : x > 0, \sigma_1 : x - 1 = 0$ & $\Rightarrow$ & $\sigma_5 : \la \ter\ra \true$
            \\
            10: & $\sigma_1 : x > 0, \sigma_1 : x - 1 = 0$ & $\Rightarrow$ & $\sigma_5 : \true$
            \\
        \end{tabular}
        }
        \\
        \bottomrule
        \end{tabular}
        }
        }
        \caption{Derivations of Property $\nu_2$}
        \label{figure:The derivation of Example 2}
\end{table}

The derivations of $\nu_2$ is depicted in Table~\ref{figure:The derivation of Example 2}. 
The symbolic executions of program $\E$ rely on rule 
$$
\begin{aligned}
    \infer[^{(\la\alpha\ra)}]
    {\Gamma\Rightarrow \sigma : \la\alpha\ra\phi, \Delta}
    {\Gamma\Rightarrow \sigma' : \la\alpha'\ra\phi, \Delta}
\end{aligned}, \mbox{ if $\pfDLp\vdash (\Gamma\Rightarrow (\alpha, \sigma)\trans (\alpha', \sigma'), \Delta)$}, 
$$
which can be derived by rules $([\alpha]L)$, $(\neg R)$ and $(\neg L)$ from Table~\ref{table:General Rules for LDL}. 
We omit all the side deductions of the program transitions and terminations in the instances of rule $(\la\alpha\ra)$. 

From node 2 to 3 is a progressive step, 
\ifx
where the derivation of  
the termination $\sigma_2\termi\Etrap \ ((x := x - S\ ; A')\ ;\ A)\parallel B\ \Eend$, 
depends on the operational semantics of Esterel. 
\fi
where to see that program $\sigma_2\termi\Etrap \ ((x := x - S\ ; A')\ ;\ A)\parallel B\ \Eend$ terminates, informally, we observe that the value of variable $x$ decreases by $1$ (by executing $x := x - S$) in each loop so that statement $\Eexit$ is finally reached. 
\ifx
Informally, this program does terminate as the value of variable $x$ decreases by $1$ (by executing $x := x - S$) in each loop so that statement $\Eexit$ is finally reached. 
\fi
From node 13 to 14 and node 15 to 16, rule
$$
\begin{aligned}
    \infer[^{(\textit{LE})}]
{\Gamma, \phi\Rightarrow \Delta}
{\Gamma, \phi'\Rightarrow\Delta},
\end{aligned}
 \ \ \mbox{if $\phi\to \phi'\in \Fmla$ is valid} 
$$
is applied, which can be derived by the following derivations:
$$
\begin{aligned}
    \infer[^{(\textit{Cut})}]
    {\Gamma, \phi\Rightarrow \Delta}
    {
        \infer[^{(\textit{Wk} L)}]
        {\Gamma, \phi, \phi'\Rightarrow\Delta}
        {\Gamma, \phi'\Rightarrow \Delta}
    &
        \infer[^{(\textit{Ter})}]
        {\Gamma, \phi\Rightarrow \phi', \Delta}
        {}
    }
\end{aligned}
$$
From node 14 to 15, rule $(\Sub)$
$$
\begin{aligned}
  \infer[^{(\Sub)}]
{\Gamma[e/x]\Rightarrow\Delta[e/x]}
{\Gamma\Rightarrow \Delta}  
\end{aligned}
$$
is applied, with $(\cdot)[e/x]$ an instantiation of the substitution $\Sub$ of labels (Definition~\ref{def:Substitution of Labels}). It is defined just as that in  the example of Section~\ref{section:Case Study}. 
Observe that sequent 14 can be written as: 
$$
\sigma_1[v-1/v] : x + 1 > 0, \sigma_1[v-1/v] : x\neq 0 \Rightarrow 
            \sigma_1[v-1/v] : \la \Etrap \ A\parallel B\ \Eend\ra \true. 
$$

Sequent 16 is a bud that back-links to sequent 1. 
The whole preproof is cyclic as 
the only derivation path: $1,2,3,4,6,7,12,13,14,15,16,1,...$ has 
a progressive trace whose elements are underlined in Table~\ref{figure:The derivation of Example 2}.

\section{An Encoding of Separation Logic in $\DLp$}
\label{section:An Encoding of Separation Logic in DLp}


We instantiate $\DLp$ to express separation logic~\cite{Reynolds02} --- an extension of Hoare logic for reasoning about shared mutable program data structures. 
The instantiated theory is called $\DLpSP$. 
Below we only deal with a part of separation logic, but it is enough to clear our point. 


\ifx
we encode a part of separation logic~\cite{Reynolds02} --- an extension of Hoare logic for reasoning about program data structures --- in which a type of more complex configurations are required to capture the notion of `heaps'. 
From this encoding, we shall see that a configuration interpretation can be defined more than just as substitutions of terms like $\app_\WP$ and $\app_\E$ in the examples in Section~\ref{section:Examples of Term Structures}. 
\fi

\textbf{Separation Logic}. 
In the following, we assume the readers are familiar with separation logic and we only give an informal explanations of its semantics.
For simplicity, we only introduce partial separation logic primitives: the atomic formula $e\allocto e'$ and critical operator $\sepc$ for formulas, and the atomic statements $x := \Alloc(e)$, $x := [e]$, $[e] := e'$ and $\disAlloc(e)$ for programs. 
We omit another critical operator $\sepi$ in formulas and the compositional programs that vary from case to case. 

\ifx
for formulas, we only introduce an atomic formula $e\allocto e'$ and a critical operator $\sepc$ (while omitting another critical operator $\sepi$); 
for programs, we only introduce atomic program statements: $x := \Alloc(e)$, $x := [e]$, $[e] := e'$ and $\disAlloc(e)$ that have direct influences on heaps, and ignore compositional programs that varies from case to case. 
After the introductions of the semantics, we give the encoding of separation logic into $\LDL$ as labeled formulas and explain how they actually capture the same meaning. 
\fi

Below we follow some conventions of notations: 
Given a partial function $f : A \partto B$, we use $\dom(f)$ to denote the domain of $f$. 
For a set $C$, partial function $f|_C : A\partto B$ is the function $f$ restricted on domain $\dom(f)\cap C$.  
$f[x\mapsto e]$ represents the partial function that maps $x$ to $e$, and maps the other variables in its domain to the same value as $f$ does. 

Let $\mbb{V} = \mbb{Z}\cup \textit{Addr}$ be the set of values, where 
$\textit{Addr}$ is a set of addresses. 
We assume $\textit{Addr}$ to be expressed with an infinite set of integer numbers. 
In separation logic, a store $s : V\to \mbb{V}$ is a function that maps each variable to a value of $\mbb{V}$, 
a heap $h : \textit{Addr}\partto \mbb{V}$ is a partial function that maps an address to a value of $\mbb{V}$, expressing that the value is stored in a memory indicated by the address. 
$\dom(h)$ is a finite subset of $\textit{Addr}$. 
A state is a store-heap pair $(s, h)$. 
The \emph{disjoint relation} $h_1\disj h_2$ is defined if $\dom(h_1)\cap\dom(h_2) = \emptyset$. 

Here we informally explain the semantics of each primitive. 
Given a state $(s,h)$, statement $x := \Alloc(e)$ allocates a memory addressed by a new integer $n$ in $h$ to store the value of expression $e$ (thus obtaining a new heap $h\cup \{(n, s(e))\}$ where $n\notin \dom(h)$), 
and assigns $n$ to $x$.
Statement $x := [e]$ assigns the value of the address $e$ in $h$ (i.e. $h(s(e))$) to variable $x$. 
$[e] := e'$ means to assign the value $e'$ to the memory of the address $e$ in $h$ (i.e. obtaining a new heap $h[s(e)\mapsto s(e')]$). 
$\disAlloc(e)$ means to de-allocate the memory of address $e$ in the heap 
(i.e. obtaining a new heap $h|_{\dom(h)\setminus\{s(e)\}}$). 
Formula $e\allocto e'$ means that value $e'$ is stored in the memory of address $e$. 
Given a state $(s, h)$, $s, h\models e\allocto e'$ is defined if $h(s(e)) = s(e')$. 
For any separation logical formulas $\phi$ and $\psi$, $s,h\models \phi\ \sepc\ \psi$ if 
there exist heaps $h_1, h_2$ such that $h = h_1\cup h_2$, $h_1\disj h_2$, and $s,h_1\models \phi$ and $s, h_2\models \psi$.

\begin{example}
    Let $(s, h)$ be a state such that $s(x) = 3, s(y) = 4$ and $h = \emptyset$, then 
    the following table shows the information of each state about focused variables and addresses during the process of the following executions: 
    $$
    (s, h)\xrightarrow{x := \Alloc(1)}
    (s_1,h_1)\xrightarrow {y := \Alloc(1)}
    (s_2,h_2)\xrightarrow {[y] := 37}
    (s_3,h_3)\xrightarrow {y := [x + 1]}
    (s_4,h_4)\xrightarrow{\disAlloc(x+1)}
    (s_5,h_5).
    $$
    \begin{center}
        \begin{tabular}{| c | l | c | l |}
        \toprule
        & \multicolumn{1}{c|}{Store} & & \multicolumn{1}{c|}{Heap}\\
        \hline
        $s$ & $x: 3$, $y: 4$ & $h$ & empty\\
        \hline
        $s_1$ & $x: 37$, $y: 4$ & $h_1$ & $37 : 1$\\
        \hline
        $s_2$ & $x : 37$, $y : 38$ & $h_2$ & $37: 1$, $38 : 1$\\
        \hline
        $s_3$ & $x : 37$, $y : 38$ & $h_3$ & $37 : 1$, $38 : 37$\\
        \hline
        $s_4$ & $x : 37$, $y : 37$ & $h_4$ & $37: 1$, $38 : 37$\\
        \hline
        $s_5$ & $x : 37$, $y : 37$ & $h_5$ & $37 : 1$\\
        \bottomrule
    \end{tabular}   
    \end{center}
Let $\phi\dddef (x\allocto 1\ \sepc\ y\allocto 1)$, $\psi\dddef (x\allocto 1\ \wedge\ y\allocto 1)$, 
we have $s_2, h_2\models \phi$ and $s_2, h_2\models \psi$, $s_5, h_5\models \psi$, but $s_5, h_5\not\models \phi$ since $x$ and $y$ point to the single memory storing value $1$.  
\end{example}

\renewcommand{\arraystretch}{1.5}
\begin{table}[htpb]
         \begin{center}
         \noindent\makebox[\textwidth]{%
         \scalebox{1}{
         \begin{tabular}{c}
         \toprule
         \\
         $
             \infer[^{1\ (\Alloc)}]
             {\Gamma \Rightarrow (x := \bff{cons}(e), (s, h))\trans (\ter, (s[x\mapsto n], h\cup \{(n,s(e))\})), \Delta}
             {}
         $, \ where $n$ is new w.r.t. $h$
         \\
         $
             \infer[^{(x := [e])}]
             {\Gamma \Rightarrow (x := [e], (s, h))\trans (\ter, (s[x\mapsto h(s(e))], h)), \Delta}
             {}
         $
         \\ 
         $
             \infer[^{([e] := e')}]
             {\Gamma \Rightarrow ([e] := e', (s, h))\trans (\ter, (s, h[s(e)\mapsto s(e')])), \Delta}
             {}
         $
         \\ 
            $
             \infer[^{(\disAlloc)}]
             {\Gamma \Rightarrow (\disAlloc(e), (s, h))\trans (\ter, (s, h|_{\dom(h)\setminus\{s(e)\}})), \Delta}
             {}
          $
        \\
         \bottomrule
         \end{tabular}
              }
              }
          \end{center}
          \caption{Partial Rules of $(\pfOper)_\Sep$ for Program Transitions of Atomic Statements in Separation Logic}
          \label{table:Encoding of A Part of Separation Logic}
    \end{table}

\textbf{Encoding of Separation Logic in $\LDL$}. 
In $\DLp$, let $\Prog_\Sep$ and $\Fmla_\Sep$ be the set of programs and formulas of separation logic. 
In the \PLK\ structure $\Kr_\Sep = (\Wd_\Sep, \trans_\Sep, \I_\Sep)$ of separation logic, $\Wd_\Sep = \{(s, h)\ |\ s : V\to \mbb{V}, h : \textit{Addr}\partto \mbb{V}\}$,  $\I_\Sep$ interprets each formula of $\Fmla_\Sep$ as explained above. 
We directly choose the store-heap pairs as the configurations of separation logic named $\Conf_\Sep$.  
In this case, we simply let $\LM\dddef \{\tau\}$, where $\tau : \Conf_\Sep\to \Wd_\Sep$ is a constant mapping satisfying that $\tau((s,h))\dddef (s,h)$ for any $(s, h)\in \Wd_\Sep$. 
Table~\ref{table:Encoding of A Part of Separation Logic} lists the rules for  the program transitions of the atomic statements in $(\pfOper)_\Sep$. 

To further derive the formulas like $\phi\ \sepc\ \psi$ in $\Fmla_\Sep$ into simpler forms, additional rules apart from $\pfLDLp$ need to be proposed. 
For example, we can propose a rule 
$$
\begin{aligned}
 \infer[^{(\sigma \sepc)}]
{\Gamma\Rightarrow (s, h_1\cup h_2) : \phi\ \sepc\ \psi, \Delta}
{
\Gamma\Rightarrow h_1\bot h_2, \Delta
&
\Gamma\Rightarrow (s,h_1) : \phi, \Delta
&
\Gamma\Rightarrow (s, h_2) : \psi, \Delta
}   
\end{aligned}
$$
to decompose the heap $h_1\cup h_2$ and the formula $\phi\ \sepc\ \psi$, or a frame rule
$$
\begin{aligned}
    \infer[^{(\sigma \textit{Frm})}]
    {\Gamma\Rightarrow (s, h) : \phi\ \sepc\ \psi, \Delta}
    {\Gamma\Rightarrow (s, h) : \phi, \Delta}
\end{aligned}, \mbox{ if no variables of $\dom(h)$ appear in $\psi$}, 
$$
to just decompose the formula $\phi\ \sepc\ \psi$. 
These rules are inspired from their
counterparts for programs in separation logic. 

In practice, the labels can be more explicit structures than the store-heap pairs shown here. Similar encoding can be obtained accordingly. 
From this example, we envision that the entire theory of separation logic can be embedded into $\LDL$, where additional rules like the above ones support a ``configuration-level reasoning'' of separation-logic formulas. 

\ifx
Table~\ref{table:Encoding of A Part of Separation Logic} lists the rules for  the program behaviours of the atomic  program statements. 
To capture the semantics of separation logical formulas, interpretations $\app_\Sep$ are defined in an inductive way according to their syntactical structures. 
Note that the formula $\app_\Sep((s,h), \phi\ \sepc\ \psi)$ requires variables $X, Y$ ranging over heaps. 



To tackle a formula like $\app_\Sep((s,h), \phi\ \sepc\ \psi)$, additional rules can be proposed. 
In this example, we can propose a rule
$$
\begin{aligned}
 \infer[^{(\sigma \sepc)}]
{\Gamma\Rightarrow (s, h_1\cup h_2) : \phi\ \sepc\ \psi, \Delta}
{
\Gamma\Rightarrow h_1\bot h_2, \Delta
&
\Gamma\Rightarrow (s,h_1) : \phi, \Delta
&
\Gamma\Rightarrow (s, h_2) : \psi, \Delta
}
\end{aligned}
$$
to decompose the heap's structure, or a rule
$$
\begin{aligned}
    \infer[^{(\sigma \textit{Frm})}]
    {\Gamma\Rightarrow (s, h) : \phi\ \sepc\ \psi, \Delta}
    {\Gamma\Rightarrow (s, h) : \phi, \Delta}
\end{aligned}, \mbox{ if no variables of $\dom(h)$ appear in $\psi$}, 
$$
to simplify the formula's structure. 
These rules are inspired from their
counterparts for programs in separation logic. 

In practice, configurations can be more explicit structures than store-heap pairs $(s, h)$. Similar encoding can be obtained accordingly. 
From this example, we envision that the entire theory of separation logic can actually be embedded into $\LDL$, where the additional rules like the above ones support a ``configuration-level reasoning'' of separation logical formulas. 

\fi

\end{document}